\newif\ifnoproof
\newif\ifITCS
\definecolor{Darkblue}{rgb}{0,0,0.4}
\definecolor{Brown}{cmyk}{0,0.81,1.,0.60}
\definecolor{Red}{rgb}{1,0,0}
\newtheorem{theorem}{Theorem}[section]
\newtheorem{thm}[theorem]{Theorem}
\newtheorem{definition}[theorem]{Definition}
\newtheorem{lemma}[theorem]{Lemma}
\newtheorem{observation}[theorem]{Observation}
\numberwithin{algorithm}{section}
\newcommand{\junk}[1]{}
\newcommand{\ignore}[1]{}
\newcommand{\R}[0]{{\ensuremath{\mathbb{R}}}}
\newcommand{\growingmid}{\mathrel{}\middle|\mathrel{}}
\DeclarePairedDelimiterX{\infdivx}[2]{(}{)}{%
  #1\;\delimsize\|\;#2%
}
\newcommand{\E}{{\mathbb{E}}}
\newcommand{\e}{\varepsilon}
\newcommand{\eps}{\varepsilon}
\newcommand{\PrT}[1]{{\rm Pr} \left[ #1 \right]}
\newcommand{\Ex}[1]{\mathbb{E} \left[ #1 \right]}
\newcounter{mynotes}
\newcommand{\initOneLiners}{%
    \setlength{\itemsep}{0pt}
    \setlength{\parsep }{0pt}
    \setlength{\topsep }{0pt}
%      \usecounter{myLISTctr}
}
\newcommand{\squishlist}{
 \begin{list}{$\bullet$}
  { \setlength{\itemsep}{0pt}
     \setlength{\parsep}{3pt}
     \setlength{\topsep}{3pt}
     \setlength{\partopsep}{0pt}
     \setlength{\leftmargin}{1.5em}
     \setlength{\labelwidth}{1em}
     \setlength{\labelsep}{0.5em} } }
\newcommand{\squishend}{
  \end{list}  }
\newcounter{asidecounter}
\newcommand{\ones}{\mathbf{1}}
\DeclareMathOperator{\Var}{Var}
\newcommand{\remove}[1]{}
\newcommand{\OPT}{\textsc{OPT}}
\newcommand{\free}{\textsc{Free}}
\newcommand{\cInt}{\mathbb{W}}
\newcommand{\interval}{B}
\newcommand{\advW}{\mathbb{W}^{\textrm{\textit{adv}}}}
\newcommand{\dd}{\mathrm{d}}
\newcommand{\tfree}{\textrm{\textit{free}}}
\newcommand{\alg}{\textrm{\textit{alg}}}
\newcommand{\csta}{a_2}
\newcommand{\cstb}{a_3}
\newcommand{\cstc}{a_4}
\newcommand{\cstd}{a_1}
\newcommand{\cste}{a_5}
\newcommand{\scalingfactor}{c}
\newcommand{\aw}{\Gamma}
\newcommand{\newModels}{\textsf{BARO}\xspace}
\newcommand{\newModel}{\emph{Bursty Adversary plus Random Order}\xspace}
	\newcommand{\constants}[1]{#1}
	\newcommand{\blue}[1]{#1}
	\newcommand{\itcs}[1]{#1}
	\newcommand{\leaveout}[1]{}
	\newcommand{\constants}[1]{\textcolor{orange!70!black}{#1}}
	\newcommand{\blue}[1]{{\color{blue} #1}}
	\newcommand{\itcs}[1]{\textcolor{teal}{#1}}
	\newcommand{\leaveout}[1]{\textcolor{teal}{\sout{#1}}}
\renewcommand{\paragraph}[1]{\medskip \noindent {\bf #1}}
\title{Knapsack Secretary with Bursty Adversary}
\author{Thomas Kesselheim\thanks{Institute of Computer Science, University of Bonn, 53115 Bonn, Germany. Email: \texttt{thomas.kesselheim@uni-bonn.de}} \and Marco Molinaro\thanks{Computer Science Department, PUC-Rio, Brazil. Email: \texttt{mmolinaro@inf.puc-rio.br}}}
\date{}
\begin{document}

%\author{Thomas Kesselheim\thanks{Institute of Computer Science, University of Bonn, 53115 Bonn, Germany. Email: \texttt{thomas.kesselheim@uni-bonn.de}} \and Marco Molinaro\thanks{Computer Science Department, PUC-Rio, Brazil. Email: \texttt{mmolinaro@inf.puc-rio.br}}}
%\date{}
\maketitle

\begin{abstract}
	The random-order or \emph{secretary} model is one of the most popular beyond-worst case model for online algorithms. While this model avoids the pessimism of the traditional adversarial model, in practice we cannot expect the input to be presented in perfectly random order. 
\blue{This has motivated research on \emph{best of both worlds} (algorithms with good performance on both purely stochastic and purely adversarial inputs), or even better, on inputs that are a \emph{mix} of both stochastic and adversarial parts. Unfortunately the latter seems much harder to achieve and very few results of this type are known. 

	Towards advancing our understanding of designing such robust algorithms, we propose a random-order model with \emph{bursts} of adversarial time steps. The assumption of burstiness of unexpected patterns is reasonable in many contexts, since changes (e.g. spike in a demand for a good) are often triggered by a common external event.}
	We then consider the \emph{Knapsack Secretary} problem in this model: there is a knapsack of size $k$ (e.g., available quantity of a good), and in each of the $n$ time steps an item comes with its value and size in $[0,1]$ and the algorithm needs to make an irrevocable decision whether to accept or reject the item. 
	
	We design an algorithm that \blue{gives an approximation of $1 - \tilde{O}(\aw/k)$ when the adversarial time steps can be covered by $\aw \ge \sqrt{k}$ intervals of size $\tilde{O}(\frac{n}{k})$. In particular, setting $\aw = \sqrt{k}$ gives a $(1 - O(\frac{\ln^2 k}{\sqrt{k}}))$-approximation that is resistant to up to a $\frac{\ln^2 k}{\sqrt{k}}$-fraction of the items being adversarial, which is almost optimal even in the absence of adversarial items. Also, setting $\aw = \tilde{\Omega}(k)$ gives a constant approximation that is resistant to up to a constant fraction of items being adversarial.}
 While the algorithm is a simple ``primal'' one, it does not possess the crucial symmetry properties exploited in the traditional analyses. The strategy of our analysis is more robust and significantly different from previous ones, and we hope it can be useful for other beyond-worst-case models. 
\end{abstract}

\clearpage

\setcounter{page}{1}

%!TEX root = nonuniformknap.tex
\newcommand{\adv}{Adv}

\section{Introduction}

In standard competitive analysis of online algorithms, one assumes that an adversary completely defines the input. While this is a useful model for designing algorithms for many problems, for many others this model is too pessimistic and no algorithm can outperform the trivial ones. One classical example is the \emph{Secretary Problem} and its generalizations. In this problem, one is presented a sequence of $n$ items of values $v_1, \ldots, v_n$. Upon each arrival, one has to decide \emph{irrevocably} if one accepts or rejects the item, without knowing the value of future items in the sequence. The goal is to select a single item in order to maximize the value obtained. It is easy to see that in the adversarial model the best guarantee possible is to obtain expected value that is a $\frac{1}{n}$-fraction of the offline optimum, and this is achieved by the trivial algorithm that chooses one of the $n$ time steps at random and blindly accepts the item in this time step. 

	In order to avoid the pessimism of this model and allow for the design of non-trivial algorithms with \emph{hopefully} better performance in practice, there has been a push to consider \emph{beyond worst-case} models. One of the most prominent such models is the \emph{random-order model}, where the adversary can choose the set of items in the instance by they are presented in uniformly random order. This model has been studied since at least the 60s and has seen a lot of developments in the past decade, and several problems are now well-understood under this model, such as Knapsack and more generally Packing LPs~\cite{kleinberg,babaioff,agrawal,MR12,kesselheim,guptaMolinaro,agrawalDevanur}, assignment problems~\cite{DevanurHayes09,feldman,kesselheim}, matroid optimization~\cite{babaioffMatroid,chakrabortyLachish,lachish,matroidSecFeldman,matroidSecFeldman2}, and many more. For example, for the Secretary Problem in the random-order model one can obtain a $\frac{1}{e}$-fraction of the offline optimal value (as $n \rightarrow \infty$) with the following classical threshold-based algorithm: reject the first $\frac{1}{e}$-fraction of items but note their maximum value, then select the next element which exceeds this value if such an element appears. 

%\red{\textbf{From here the same}}\mnote{XX}

However, in practice we cannot expect the sequence to arrive exactly in random order. This has motivated research on \emph{best of both worlds}, namely algorithms with good performance on both purely stochastic and purely adversarial inputs~\cite{meyerson,adSim,welfareSim,loadBalSim}. Even more interesting are algorithms that work well on inputs that are a \emph{mix} of both stochastic and adversarial parts. But this seems to be much harder to achieve: in online algorithms we are only aware of the results of \cite{advRO} on budgeted allocation (see Section \ref{sec:related} for a description of their model and assumptions), while in online learning results of this type have only been obtained very recently for multi-armed bandits~\cite{seldinBanditMix,banditMix,zimmertBanditMix,guptaColt}. We note that all these results are for settings in which non-trivial guarantees can be achieved for pure adversarial inputs.

%{\color{red} , but only possibly that it behaves in a ``mostly reasonably random order''. Unfortunately, most of the algorithms designed for the random-order model are surprisingly brittle and do not perform well under such weaker assumptions. For example, consider the Secretary Problem where items are still presented in random order \emph{with the exception of one item} within the first $\frac{n}{e}$ time steps that is completely arbitrary (adversarial): In case this adversarial item has the largest value in the instance, the threshold-based algorithm above does not accept any items, always obtaining value 0. (While it is easy to modify the algorithm to handle this example, it is much less clear what to do in the presence of multiple, say super-constant, adversarial items.) 
%}

	\blue{Towards advancing our understanding of designing such robust algorithms, we introduce a model that mixes random-order and adversarial time steps, assuming that the latter comes in \emph{bursts}. The random-order times represent when the environment is in a ``stationary'' or ``predictable'' state, while the adversarial times represent ``unexpected'' patterns. 
%	In order to design more robust algorithms that may achieve better performance in practice, we introduce a model that mixes the random-order and adversarial models. At a high level, we consider that most time steps are in a ``stationary'' or ``predictable'' state, and so the items in these time steps are presented in random order. However, at unknown points in time, there are \emph{bursts} of ``unpredictable'' inputs, so in these time steps the input is totally adversarial.
	The assumption of burstiness of unexpected patterns is reasonable in many contexts, since changes are often triggered by an external common event, e.g., the surge in gun sales after news of possible changes in gun control regulations.}
	%, or the spike in the search terms ``What is \fillme{the temporary spike of search term XXX after XXX (or some other example).
See~\cite{kleinbergBursty,KRT,burstyMicroblog,burstyQueuing} for examples of the different ways in which burstiness can be modeled and areas of applications. %The goal is to design algorithms that are robust to the adversarial items (similar to Byzantine Agreement~\cite{byzantine}). 

%\mnote{Maybe can say: For that we consider a model that mixes the random-order and adv models. At a high-level, we consider that in most times the input is in its ``stationary/predictable'' state, so the items are in random order. However, interspersed in the sequence there are bursts of ``unpredictable'' inputs, i.e., totally adversarial items. }
%We assume that generally a sequence of items is presented in random order. However, a bounded number of  bursts occur throughout the sequence. During this time, items are presented that are chosen by an adversary.

%Our main result is a $1 - \epsilon$-competitive algorithm, where $\epsilon$ \fillme{????} While the algorithm follows common design principles in the area, the analysis has to be different. In order to keep dependencies manageable, for the decision in step $t$, algorithms usually (more or less explicitly) ignore the order of arrival in steps $1, \ldots, t-1$. The analyses crucially rely on this fact. As a consequence, the algorithm gets sensitive to bursts because they cannot detected anymore if the arrival order is ignored.

%############################################################
%############################################################

\subsection{The \newModel (\newModels) model}

	We describe more formally the general version of the proposed model \newModels. Consider an online problem where decisions are made sequentially and irrevocably at times $1,2,\ldots,n$. In our model, the adversary first chooses some the time steps $\adv \subseteq [n]$ to be ``adversarial'' and leaves the others $RO = [n] \setminus \adv$ as ``random-order'' times. In order to capture the burstiness of the adversarial time steps in a clean way, let $\cInt$ be the partition of $[n]$ into disjoint intervals of length $\ell$. We then assume that the adversarial times $\adv$ are covered by at most $\aw$ intervals in $\cInt$. Notice that this allows various patterns in the adversarial part of the input, including individual (non-bursty) adversarial times as well as bursts of size much larger than $\ell$, for a total of up to $\aw \ell$ adversarial times. As in the standard random-order model, the items/inputs on the random-order times $RO$ are arbitrary but presented in uniform random order. The \emph{sequence} items/inputs on the adversarial times $\adv$ is fully adversarial that can be adaptively generated based on an algorithm's behavior and may even depend on the \emph{order} of the items in $RO$.  

	It is important to highlight that the algorithm does not know which time steps are adversarial or random-order, and that in each time step only one item arrives (i.e., the adversarial items do not come in batches). 

	Note that in many problems this adversary can make an instance completely adversarial by sending ``dummy'' random-order items. For example, in the Secretary Problem the adversary can set the value of all random-order items to be 0;  so again no non-trivial guarantees is possible in this case. In order to obtain meaningful guarantees, we compare the algorithm's performance only to the optimum over the random-order times $RO$, which we denote by $\OPT_{RO}$. Thus, in a maximization problem we say that an algorithm is \emph{$\alpha$-competitive} in the \newModels model if the expected value of the algorithm is at least $\alpha \OPT_{RO}$. 

%The sequence is generated as follows. Out of all time steps $1, \ldots, n$, the adversary chooses \fillme{$???$} rounds with the property that \fillme{????}. For every $t \in \adv$, it sets $V_t$ and $W_t$. Furthermore, it chooses $v_1, \ldots, v_{n'} \geq 0$ and $w_1, \ldots, w_{n'} \in [0, 1]$. Then a permutation $\pi\colon \{1, \ldots, n'\} \to \{1, \ldots, n\} \setminus \adv$ is drawn uniformly at random and $V_t = v_{\pi^{-1}(t)}$, $W_t = w_{\pi^{-1}(t)}$.

%Given a fixed choice of $v_1, \ldots, v_{n'}$ and $w_1, \ldots, w_{n'}$, let $OPT$ be the value of the optimal knapsack solution with weight bound $k$, that is, $\max_{i \in S} v_i$ subject to $\sum_{i \in S} w_i$. We call an algorithm $\alpha$-competitive if it ensures that $\Ex{\sum_{t \in \text{Accept}} v_i} \geq \alpha OPT$.

%############################################################
%############################################################

\subsection{Our Results}

	In this paper we use the \newModels model to obtain a more robust algorithm for the \emph{Knapsack Secretary} problem, a well-studied generalization of the Secretary Problem. The offline version of the problem is the standard Knapsack Problem: there are $n$ items, each with a value $v_i \geq 0$ and size $w_i \in [0,1]$, and we have a knapsack of size $k$; the goal is to select a subset of items with total size at most $k$, and with total value as large as possible. %(Equivalently, we can say that we have $k$ units of a divisible good. Buyer $i$ has value $v_i$ for getting exactly (or at least) $w_i$ units of the good and $0$ value otherwise.)
	
	% In the \newModels model, the items come one-by-one and when an item comes an irrevocable decision must be made about selecting or rejecting it (see Section \ref{sec:modelAlg} for a more detailed description). 
	
	Our main result is an algorithm for the Knapsack Problem in the \newModels model that is resistant to a fraction of items being adversarial.
	
	\begin{theorem}\label{thm:mainIntro}
		There is a $\big(1 - O\big(\frac{\aw \ell}{n} \ln \frac{n}{\aw \ell}\big)\big)=\big(1 - O\big(\frac{\aw \ln k}{k} \ln \frac{k}{\aw \ln k}\big)\big)$-competitive algorithm for the Knapsack Problem in the \newModels model where the adversarial times can be covered by $\aw \ge \sqrt{k}$ windows of size $\ell = \frac{n \ln k}{k}$.
	\end{theorem}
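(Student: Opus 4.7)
The plan is to adapt the primal threshold-based framework used for Knapsack in the pure random-order model~\cite{kesselheim} and show that it degrades gracefully as adversarial bursts are introduced. At a sequence of geometrically spaced sampling times $t_0 < t_1 < \cdots$ covering $[\Theta(\ell), n]$, the algorithm solves the fractional Knapsack LP over all items seen so far, rescaled to capacity $\frac{t_j}{n}k$, and extracts the optimal dual threshold $\lambda_j$ (value per unit size). In the subsequent window, an arriving item is accepted if its value-to-size ratio exceeds $\lambda_j$ and it still fits in the remaining knapsack. Working with the dual rather than the primal solution cleanly localizes how adversarial items influence the algorithm's decisions.

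The analytical backbone is a threshold-robustness lemma: perturbing the LP by adding or removing an arbitrary set of up to $\rho \cdot t$ items changes the optimal dual by at most a $(1 \pm O(\rho))$ factor, provided the capacity scales linearly with $t$ and item sizes lie in $[0,1]$. Combined with a standard concentration argument for the empirical dual of a random sample of $RO$ items (a consequence of the LP value being a well-behaved symmetric function of the sample), this implies $\lambda_j = (1 \pm \eps_j)\lambda^\star$ with high probability, where $\lambda^\star$ is the ideal dual of $\OPT_{RO}$ and $\eps_j = \tilde{O}\bigl(\frac{\aw \ell}{t_j} + \sqrt{\frac{\ln k}{t_j}}\bigr)$. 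Here the adversarial error is controlled by the simple observation that any prefix of length $t_j$ contains at most $\aw \ell$ adversarial positions.

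With near-correct thresholds in hand, the final step is an expected-value accounting against $\OPT_{RO}$. Two sources of loss must be controlled: (i) capacity consumed by adversarial items, which is at most $\aw \ell \le \aw \cdot \frac{n \ln k}{k}$ and thus an $O(\aw \ln k / k)$ fraction of the total capacity $k$; and (ii) premature exhaustion of the knapsack by random-order items in later phases, which I would handle by running at a slightly reduced effective capacity $(1-\eps)k$ and applying a Bernstein-type concentration to the total accepted size. Summing over the $O(\log(n/\ell)) = O(\log(k/\ln k))$ phases, and using that within each phase the items accepted by threshold $\lambda_j$ approximate a $(t_{j+1}-t_j)/n$-fraction of $\OPT_{RO}$, yields the claimed competitive ratio with the additional $\ln(k/(\aw \ln k))$ logarithmic factor.

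The main obstacle I anticipate is the loss of the exchangeability properties that standard random-order analyses lean on: conditioning on a particular position holding the maximum, etc., fails once some positions are adversarial and the thresholds $\lambda_j$ are measurable w.r.t.\ adversarial items that may be chosen adaptively. I would overcome this by conditioning on the identity of the random-order set $RO$ and the adversary's multiset of items, then taking a union bound over the $O(\log n)$ sampling phases of the events $\{\lambda_j \approx \lambda^\star\}$ together with a no-early-overflow event. Under this combined good event, acceptances of $RO$ items decouple from the adversary's choices and a direct charging argument against $\OPT_{RO}$ produces the stated competitive ratio.
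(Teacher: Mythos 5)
Your proposal replaces the paper's primal LP with per-window constraints by a dual-threshold scheme, and this is where the argument breaks. The central difficulty in the \newModels model is precisely that without a \emph{per-window cap on selections}, the adversary can exhaust the budget: a single adversarial window has length $\ell = \frac{n\ln k}{k}$, so the $\aw$ adversarial windows can contain up to $\aw\ell = \frac{\aw n\ln k}{k}$ items of size $1$, all with value density slightly above your current threshold $\lambda_j$ but far below the average density of $\OPT_{RO}$ (e.g., $\OPT_{RO}$ contains one item of huge value $M$ and $k-1$ items of density $\approx\lambda^\star$; the adversary sends many items of density $\lambda^\star + \epsilon$). Your algorithm would accept all of them until the knapsack fills, after which it can accept nothing, and the competitive ratio collapses. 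Your loss term (i) asserts the adversarial capacity consumption $\aw\ell$ is an $O(\aw\ln k / k)$ fraction of $k$, but $\aw\ell/k = \aw n\ln k / k^2$, which is not $O(\aw\ln k/k)$ and in fact exceeds $1$ once $n \gtrsim k^2/(\aw\ln k)$; the theorem imposes no such upper bound on $n$. Relatedly, your threshold-robustness lemma cannot hold as stated: adding $\rho t$ unit-size items to the sample LP (capacity $\frac{t}{n}k$) changes the occupied fraction by $\rho n/k$, not $O(\rho)$, so for $\rho t = \aw\ell$ and $t\approx n$ the adversarial items alone saturate the scaled budget and can force the dual to any value they like.

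The paper's algorithm handles exactly this: it adds a hard outer constraint capping the algorithm's consumption at $a_4 \frac{\ell}{n}k = O(\ln k)$ per window, so adversarial windows collectively waste only $O(\aw\ln k) \ll k$ capacity, and it adds analogous inner constraints to the LP so that adversarial items cannot dominate the LP solution and drive the threshold up. But once these non-symmetric per-window constraints are introduced, the ``standard concentration of the empirical dual as a symmetric function of the sample'' that your plan invokes is no longer available — this is the explicitly stated reason the paper abandons permutation-invariance and builds a new analysis on the weighted rank $R_t$, the monotonicity Lemma~\ref{lemma:sat}, and the sampling-without-replacement decoupling (Lemmas~\ref{lemma:samplingNonNeg} and~\ref{lemma:momentBound}). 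Fixing your proposal therefore requires both adding per-window caps to the algorithm and replacing the symmetric-dual concentration with a machinery of this kind; at that point the argument essentially becomes the paper's, and the dual-threshold framing by itself does not shortcut the analysis.
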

	
	Notice that the term $\frac{\aw \ell}{n}$ in the guarantee is precisely the fraction of adversarial items that the algorithm can cope with. For example, setting $\aw = \sqrt{k}$, our algorithm obtains a $(1 - O(\frac{\ln^2 k}{\sqrt{k}}))$-approximation in the presence of up to a $O(\frac{\ln^2 k}{\sqrt{k}})$-fraction of items being adversarial. This approximation is almost optimal: even in the absence of adversarial items (and even when all items are unit-sized) the best approximation possible is $1 - \Omega(\frac{1}{\sqrt{k}})$~\cite{kleinberg} (and this is achieved for example by~\cite{MR12,kesselheim,agrawalDevanur,guptaMolinaro}). Note that these competitive ratios go to 1 as the budget $k \rightarrow \infty$ (recall the normalization of sizes being at most 1). Moreover, with $\aw = \Omega(\frac{n}{\ell})$ the algorithm achieves a constant approximation in the presence of a constant fraction of adversarial items. 
	
%######################################################
	
	\paragraph{Primal Algorithm with Time-Based Constraints.} Our starting point is the \emph{primal} strategy for the random-order model, whose high-level idea is the following: At time $t$, one solves a knapsack LP with the items seen so far but with budget proportionally scaled to be $\lceil\frac{t}{n} k\rceil$, and pick (a fraction of) the item at time $t$ exactly as prescribed by the optimal LP solution, if there is space available in the full budget of $k$. 
	
	While this strategy obtains the optimal guarantee in the random-order model~\cite{kesselheim}, it fails in the presence of adversarial items. One way in which it fails is by picking ``too many items'': Suppose that the first $k$ items are adversarial, have size 1, and they all have infinitesimal values but sorted in increasing order, and the random-order items have all value and size equal to 1; it is easy to see that the primal algorithm will pick all the adversarial items, filling up the budget with items of infinitesimal value.
(Similar examples exist where the adversarial items are not in the beginning of the sequence.) To counter this, in our algorithm we add additional restrictions, outside of the LP, that the algorithm can only pick \itcs{a constant number of items}\leaveout{$\approx 1$ item} in each window of size $\ell \approx \frac{n}{k}$, which is roughly the behavior of the optimal solution if the $n$ items were in random order. 

	However, the algorithm may now fail by picking ``too few'' items: consider the same example as before but now all the adversarial items have value $1 + \eps$, thus slightly more valuable than the random-order items. The algorithm will then only pick 1 of these adversarial items (by the new restriction added) and will not pick any of the random-order items, since the LP will always fill up its budget with the better adversarial items; so the algorithm obtains value $1 + \eps$, while the $\OPT_{RO} = k$. To avoid this, we also add additional constraints \emph{to the LP} that its solution can select at most \itcs{a constant number of items}\leaveout{$\approx 1$ item} in each window of size $\ell \approx \frac{n}{k}$ (note there are $\frac{t}{n} k$ disjoint such windows in $[t]$ and the LP selects total size $\approx \frac{t}{n} k$, again on average 1 per window).

%	Our new algorithm uses an LP to guide its decisions. In the $t$-th step, we solve an LP that includes all arrivals up to now. We then use the fractional assignment of the current item. This LP is the fractional knapsack LP with capacity $\frac{t}{n} k - const$ for some suitably chosen $const$. While this approach is generally common in random-order problems, such a decision could be fatally mislead by only a single burst. Therefore, we introduce \emph{time-based constraints} to the LP. These require that the items that arrive within a time interval of $\ell$ steps use no more than $O(\frac{\ell}{n} k)$ capacity. That is, the interval's share to the capacity is at most proportional in its size.

	\medskip The main difficulty is analyzing the algorithm in the presence of the additional restrictions/constraints. Previous analyses of primal-style algorithms crucially relied on the fact the LP (and its optimal solution) was invariant to the permutation of items/coordinates. \leaveout{For example, this directly gives that the expected occupation by the standard primal algorithm at time $t$ is $\frac{1}{t}$ of the total occupation of the LP solution, namely it is $\frac{1}{t} \frac{t}{n}k = \frac{k}{n}$, so a $\frac{1}{n}$ fraction of the budget in each of the $n$ time steps. Moreover, since this property holds even if one conditions on which items appear on times $t+1,\ldots,n$, it gives some kind of ``independence'' that } \itcs{This brings about some crucial independence properties: Decisions at time $t$ are independent of the \emph{order} of the arrivals at times $1, \ldots, t-1$ and therefore of the respective decisions. This property} allows for the direct use of known concentration inequalities to control the total occupation incurred by the algorithm.
	
%	\paragraph{Analysis Approach.} Previous algorithms are similar to the new one but do not include time-based constraints. This makes previous analysis approaches mostly infeasible. The standard analysis uses the symmetry of the constraints, namely that fixing the \emph{set} of items up to time $t$ defines the solution of LP but leaves the arrival at time $t$ as a random item on this set. Since our constraints are not permutation invariant, this does not hold anymore. We tried to use variations of this property to carry out an analysis similar to the original one, but could not succeed. For example, one attempt could be to use that the new time-based constraints do not matter if the input is completely random order. Indeed, they only play a role in a step with probability $1 / \poly(k)$. However, the probabilities that we bound are in the order of $k/n$, which is much smaller than $1 / \poly(k)$.
	
	Since our new restrictions/constraints are not permutation invariant, we need to use a different type of analysis. The main handle is what we call the \emph{weighted rank} of an item: the sum of the weights of items with higher value density $\frac{v_{i}}{w_{i}}$ than this item, divided by the knapsack capacity. That is, it is by how much one would have to scale the knapsack capacity before the offline optimum would start picking this item. The very high-level idea of the analysis is intuitive: The higher the weighted rank of an item, the smaller its probability of being picked by the LP, even with the new constraints. In addition, while there are complicated dependencies between the events ``the algorithms picks the item at time $t$'', the weighted ranks of the items in the random-order times are almost independent: they are just sampled without replacement. We leverage this to obtain custom concentration inequalities that control the algorithm's occupation of the different restrictions/constraints. 
	
%##########################################################
%##########################################################

\subsection{Related Work} \label{sec:related}

\itcs{As already pointed out above, many algorithms have been proposed for online optimization problems with random arrival order.} However, these algorithms usually break when moving to the \newModels model. For concreteness, let us illustrate the effect on Kleinberg's algorithm \cite{kleinberg} for the multiple-choice secretary problem, a special case of our problem. The algorithm is allowed up to $k$ selections. Throughout the sequence, it never picks items which are not among the best $k$ so far. Therefore, we can construct the following counterexample. Consider a sequence starting with an adversarial burst of $k$ items of very high value, followed by a random-order sequence with items of smaller values. On this sequence, the algorithm will not pick any random-order items at all. If $n \gg k$, then with high probability (over the randomness of the algorithm) none of the adversarial items are picked either (the threshold-based algorithm for the secretary problem is applied to the first $\approx \frac{n}{k}$ items w.h.p., in which case the first $\approx \frac{1}{e} \frac{n}{k} \gg k$ items are rejected). This argument transfers immediately to other algorithms, such as \cite{agrawal,kesselheim}. Other algorithms such as the one by \citet{agrawalDevanur} or by \citet{babaioffMatroid} use the beginning of the sequence to estimate the optimal value, which also fails in this sequence.

	There is only surprisingly little work when it comes to non-uniform random order model.
	Recently,~\citet{KKN15} introduced models where the order of the items is ``much less random'' than the uniform random order. Among other results, they show that it is possible to obtain constant-competitive algorithms for the Multiple-choice Secretary Problem under these weaker assumptions, and quantify the minimum entropy of the distribution over orders that admits constant-competitive algorithms for the Secretary Problem. We remark that these models do not explicitly contain adversarial items. 
	
	Closer in spirit to our model,~\citet{advRO} consider online budgeted allocation in an online model that mixes both stochastic and adversarial inputs. They provide algorithms that are optimal when the input is totally adversarial, and whose performance improves when the instance becomes ``more stochastic''. There are two crucial differences between our proposed model and Esfandiari et al.'s model: in the latter, while the adversarial items may appear at any point in the sequence (i.e., no burstiness assumption), it is assumed that the algorithm \emph{knows} the distribution of the items in the non-adversarial times, unlike in our model.
	%; in particular, this gives the algorithm the ability to test if portions of the input look like they came from the stochastic process or the adversarial one. 
	Also, unlike the Knapsack Problem studied here, the budgeted allocation problem has constant-competitive algorithms even in the adversarial model. Thus, while to some extent an algorithm does not need to worry about ``losing everything'' if it is fooled by the adversarial part of the instance, its design and analysis have to be delicate enough to obtain fine control over the constants in the competitive-ratio in order to yield interesting results. 
	
In a very recent paper, Bradac et al.~\cite{DBLP:conf/innovations/BradacG0Z20} present several results for robust secretary problems in a mixed model very similar to ours, which was inspired by a discussion about a preliminary version of  this present paper. In contrast to our model, there is no assumption on the number or burstiness of adversarial rounds, making the results incomparable. Our focus is understand situations in which we are close to the optimal guarantee without adversarial rounds. Since their adversary is more powerful, the guarantees are worse in two ways: (i) Their benchmark is weakened by leaving out the best item. (ii) The guarantees depend on the overall number of rounds $n$, whereas ours only depend on $k$. The techniques are also quite different.

	\section{\newModels Knapsack: model and algorithm} \label{sec:modelAlg}
	
	\paragraph{Model.} %\knote{I thought it's easier to start with what the algorithm sees and does before describing the input generation.}
	We consider an online knapsack problem. The algorithm knows upfront the knapsack size $k$ and the number of items $n$, and the items are presented online, one-by-one. In the $t$-th time step, the current item's value $V_t$ and size $W_t$ are revealed, and the algorithm needs to irrevocably decide what fraction $X^{\alg}_t \in [0,1]$ of this item to select. Our algorithm's selection is always integral, i.e., $X^{\alg}_t \in \{0,1\}$, but our point of comparison is the best fractional solution. The selections made by the algorithm need to fit the knapsack, namely $\sum_{t \in [n]} W_t X^{\alg}_t  \le k$ with probability 1, and it tries to maximize the total value of its selections: $\sum_{t \in [n]} V_t X^{\alg}_t$. Importantly, the choice in the $t$-th step has to be made only knowing $V_1, \ldots V_t$ and $W_1, \ldots, W_t$ (as well as $k$ and $n$).
	
	The sequences $V_1, \ldots, V_n \geq 0$ and $W_1, \ldots, W_n \in [0, 1]$ are generated by the following \newModel (\newModels) model. Let us fix a window size $\ell$, and let $\cInt$ denote the collection of disjoint windows of size $\ell$ that partitions the time steps $[n]$, that is, $\cInt = \{ \{1,2,\ldots,\ell\}$, $\{\ell+1,\ldots,2\ell\}$, \ldots \}. For concreteness we will use window size $\ell  := \frac{n \ln k}{k}$. The adversary first partitions the $n$ times steps into sets $Adv$ (adversarial) and $RO$ (random-order) with the property that $Adv$ can be covered by $\aw$ windows in $\cInt$; we use $\advW \subseteq \cInt$ to denote one such cover, fixed throughout. The adversary also fixes the items for the random-order times, namely the value/size pairs $(v_1,w_1), (v_2,w_2),\ldots,(v_{|RO|},w_{|RO|})$, with $w_i \in [0,1]$ for all $i$. Moreover, for each random-order time $t \in RO$, nature samples \emph{without replacement} an index $I_t$ from $\{1,2\ldots,|RO|\}$, i.e., randomly chooses which random-order item will appear at that time. Then, for each time step $t$ the adversary outputs an item with value $V_t$ and size $W_t \in [0,1]$ as follows:
	
	\begin{itemize}
		\item (Adversarial) If $t \in Adv$, the adversary outputs an item with arbitrary value $V_t$ and size $W_t \in [0,1]$; this may depend on an algorithm's behavior and on the $I_t$'s.
		\item (Random-order) If $t \in RO$, the adversary outputs the item indexed by $I_t$, namely that with value $V_t := v_{I_t}$ and size $W_t := w_{I_t}$. 
	\end{itemize}
	\itcs{Note that there is a subtle difference between capital and small letters here. By $V_t$ and $W_t$, we refer to the value and weight of the item \emph{arriving in the $t$-th step}. By $v_i$ and $w_i$ we refer to the $i$-th random-order item specified by the adversary \emph{before the random permutation is applied}. Consequently, $V_t$ and $W_t$ are random variables whereas $v_i$ and $w_i$ are not. Furthermore, } since the $I_t$'s are sampled without replacement, the items $((V_t,W_t))_{t \in RO}$ in the random-order times are precisely the items  $(v_1,w_1), (v_2,w_2),\ldots,(v_{|RO|},w_{|RO|})$ randomly permuted. 
	
	Again we highlight that the algorithm does not know which time steps are adversarial and which are random-order, and that the adversarial items do not come in batches. As mentioned before, the benchmark for comparison is the offline optimum for the problem on the random-order items alone, namely $\OPT_{RO} := \max\{\sum_i v_i x_i : \sum_i w_i x_i\le k,\,x \in [0,1]^{|RO|}\}$.

%##############################################################################################

	\paragraph{Algorithm.} %The algorithm computes in an online fashion the solution $X^{alg} \in [0,1]^n$ where $X^{alg}_t$ indicates what fraction of the item at time $t$ should be picked. 
	The algorithm we propose is a modification of the primal method of~\cite{kesselheim} and can be described as follows. Let $\cInt_t$ be the collection of windows $\cInt$ truncated to the prefix $[t]$, namely $\{1,\ldots,\ell\}, \{\ell+1,\ldots,2\ell\},\ldots,\{\lfloor\frac{t}{\ell}\rfloor \ell +1,\ldots,t\}$.
	%, where $k\ell \ge t$ (in particular the last interval can have size smaller than $\ell$). 
	At time $t$, in order to compute its selection $X^{\alg}_t \in \{0,1\}$ of the current item, the algorithm first finds an optimal solution $X^t$ to the following (random) linear program $LP_t$:
		\begin{align}
			\max &\sum_{t' \le t} V_{t'} X_{t'} \notag\\
			 s.t.& \sum_{t' \le t} W_{t'} X_{t'} \le \scalingfactor_t  \frac{t}{n} k \tag{main inner budget} \label{eq:mainInnerBud}\\
			 		 & \sum_{t' \in \interval} W_{t'} X_{t'} \le \cstd \frac{\ell}{n} k, ~~~~\forall \interval \in \cInt_t \tag{inner constraints}\label{eq:innerConstr}\\
			 		 & X \in [0,1]^t, \notag 
		\end{align}
	where we introduce the slight budget scaling $\scalingfactor_t := (1 - \frac{4\aw\ell}{t})$, and set the constant $\cstd := 601$.	If $X^t_t > 0$, we say that the algorithm \emph{tentatively picks} the item at time $t$. The algorithm checks if it can permanently pick this item by verifying whether its past selections $X^{\alg}_1,\ldots, X^{\alg}_{t-1}$ satisfy the following constraints:
	\begin{align}
		&\sum_{t' < t} W_{t'} X_{t'} \le k - 1   \tag{main budget} \label{eq:mainBud}\\
		&\sum_{t' \in \interval_{\text{last}}} W_{t'} X_{t'} \le \cstc \frac{\ell}{n}k - 1 \tag{outer constraint}, \label{eq:outerConstr}
	\end{align}
	where $\interval_{\text{last}}$ denotes the last window in $\cInt_{t-1}$, and \constants{$\cstc$ is a sufficiently large constant (set in Lemma \ref{lemma:outer})}. 
	If so, the algorithm fully picks the item, namely it sets $X^{\alg}_t = 1$; otherwise we say that it is \emph{blocked} and it does not pick the item at all, setting $X^{\alg}_t = 0$. 
	
	\itcs{To get some intuition why the algorithm is reasonable, let us observe how the ``offline optimum'' $\OPT_{RO}$ builds up over time. We can define random variables $X^\ast_t$ indicating what fraction of the item arriving at time $t$ is packed in $\OPT_{RO}$. Because the permutation is uniformly random, these random variables are identically distributed for all $t \in RO$. More specifically, we have $\Ex{V_t X^\ast_t} = \OPT_{RO}/\lvert RO \rvert \approx \OPT_{RO} / n$ and $\Ex{W_t X^\ast_t} \leq k / \lvert RO \rvert \approx k / n$. So, \emph{in expectation}, slightly scaled versions of the random variables fulfill all constraints stated above. Our algorithm, of course, does not know $X^\ast_t$ but tries to mimic this process. Particularly, the goal of \eqref{eq:innerConstr} and \eqref{eq:outerConstr} is to spread out the choices made by the algorithm over time so that the consequences of adversarial bursts are mitigated.}

%	\begin{itemize}
%		\item \textbf{Main budget constraint:} $\sum_t w_{I_t} x_{I_t} \le \scalingfactor_t \cdot \frac{tk}{n}$, where the constant $\scalingfactor_t \le 1$ is given by $\scalingfactor_t = (1 - \frac{4\sqrt{k}\ell}{t})$.
%	
%		\item \textbf{Inner constraints:} For each of the intervals $\{1,\ldots,\ell\}$, $\{\ell+1,\ldots,2\ell\}$,\ldots, we have the constraint $\sum_{t \in interval} w_{I_t} x_{I_t} \le 601 \frac{\ell}{n}k$. \red{(This constant is sometimes referred as $\cstd$)}
%		
%		\item \textbf{Outer constraints:} For each of the intervals above,
%%OLD: For each interval $I \subseteq [t]$ of size $\ell$ 
%we have the constraint that can only permanently pick a total size of $\le \cstc \frac{\ell}{n}k$ in the interval, where \red{$\cstc = e^4\cdot 2 \cdot e^2 \cdot \cstb$.} 
%	\end{itemize}

%	We remark that while the algorithm obtains a fractional solution, an integral solution can be readily obtained by performing online randomized rounding of the solution $X^{alg}$. \mnote{Our analysis seem to hold unchanged for this rand rounded version of the process; check and mention?}		
		
	Notice that by construction the solution $X^{\alg}$ returned by the algorithm is always feasible, namely\linebreak[4] $\sum_{t \le n} W_t X^{\alg}_t \le k$. Thus, we only need to argue that it obtains enough value.  

	\begin{theorem}[Total value] \label{thm:main} \label{THM:MAIN}
		The expected value of the solution $X^{\alg}$ returned by the algorithm satisfies $$\E \left[\sum_{t \in RO} V_t X^{\alg}_t \right]\ge \left(1 - O\left(\frac{\aw \ell}{n} \ln \frac{n}{\aw \ell}\right)\right)\,\OPT_{RO}.$$
	\end{theorem}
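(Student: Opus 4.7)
The plan is to decompose the algorithm's reward as
\[
\sum_{t \in RO} V_t X^{\alg}_t \;=\; \sum_{t \in RO} V_t X^t_t \;-\; \sum_{t \in RO} V_t X^t_t \cdot \mathbf{1}[\text{blocked at }t],
\]
where ``blocked at $t$'' means one of the outer constraints \eqref{eq:mainBud}, \eqref{eq:outerConstr} fails. The first sum is the \emph{tentative} value the LP would like to pick; the second is the loss from outer blocking. I would aim to lower bound the first by $(1 - O(\tfrac{\aw\ell}{n}))\OPT_{RO}$ and to upper bound the second by $O(\tfrac{\aw\ell}{n}\ln\tfrac{n}{\aw\ell})\cdot\OPT_{RO}$, which together give the theorem.

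For the tentative value, I would first argue that $LP_n$ already has value close to $\scalingfactor_n\cdot\OPT_{RO}$: take the offline optimum $x^\ast$ restricted to $RO$, scaled by $\scalingfactor_n = 1 - 4\aw\ell/n$, and show it is feasible with high probability. The main-inner-budget \eqref{eq:mainInnerBud} holds by construction, while each inner window constraint \eqref{eq:innerConstr} holds because the expected $x^\ast$-weight inside a uniformly random window of $RO$ items is at most $\ell k/n$, well below $\cstd\ell k/n$; a concentration argument over which $RO$ items land in each window handles the tail. Next, to pass from $LP_n$'s value to the per-step tentative values $V_t X^t_t$, I would use that conditioning on the multiset of items that have arrived in $[t]$, the ordering among $RO$-times is still uniform, and the LP's constraints are symmetric under any permutation that preserves the window partition; this is enough to conclude $\E[V_t X^t_t \mid \text{history up to }t] \ge (1 - O(\aw\ell/n))\,\mathrm{val}(LP_t)/t$, which telescopes to the needed bound on $\sum_{t\in RO}\E[V_t X^t_t]$.

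The main obstacle, and the heart of the argument, is bounding the blocking probability at each $t \in RO$. Blocking requires either the global cumulative picked size $\sum_{t'<t} W_{t'} X^{\alg}_{t'}$ to exceed $k-1$, or the last window's picked size to exceed $\cstc \ell k/n - 1$. Because the window constraints destroy the permutation invariance of $LP_t$, I cannot simply appeal to standard random-order concentration. Instead I would follow the \emph{weighted rank} approach hinted at in the introduction: define $r_i := \tfrac{1}{k}\sum_{j:\, v_j/w_j > v_i/w_i} w_j$ for each random-order item $i$. Since the LP is essentially a fractional greedy along the density order, augmented only by window constraints that do not care which specific item is in a window, I would show that the probability that item $i$ is tentatively selected depends on the adversarial part of the input and on $r_i$, but only mildly on the ordering of the other $RO$ items. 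Because $r_i$ is a property of the item (permutation-invariant), the sizes of picked random-order items in any prefix/window then behave like a sampling-without-replacement sum to which a custom Bernstein-type inequality applies. Summing this tail bound over the $\Theta(n/\ell)$ windows via a union bound produces the extra $\ln(n/(\aw\ell))$ factor.

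Putting the pieces together: tentative value contributes $(1 - O(\aw\ell/n))\OPT_{RO}$; blocking costs at most a $O(\tfrac{\aw\ell}{n}\ln\tfrac{n}{\aw\ell})$ fraction of the tentative value (uniformly in $t$, since each step's blocking probability is bounded by that amount and $V_t X^t_t$ is dominated, in expectation, by $\OPT_{RO}/n$). The hard part will be formalizing the ``rank-based selection probability'' lemma in the presence of the window constraints—they can force a high-density item to be rejected while a lower-density one is selected—and choosing the constants $\cstd, \cstc$ and the scaling $\scalingfactor_t$ so that the slack created by scaling exactly absorbs the deviation coming from the custom concentration inequality.
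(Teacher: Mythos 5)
Your decomposition $\sum V_t X^t_t - \sum V_t X^t_t \mathbf{1}[\text{blocked}]$ is a reasonable alternative to the paper's direct bound on $\E[V_t T_t F_t]$, and your rank-based plan for the blocking part is in the right spirit. However, there is a genuine gap in your treatment of the tentative-value term, and it is precisely the obstacle the paper was designed to circumvent.

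You write that ``the LP's constraints are symmetric under any permutation that preserves the window partition; this is enough to conclude $\E[V_t X^t_t \mid \text{history up to }t] \ge (1 - O(\aw\ell/n))\,\mathrm{val}(LP_t)/t$.'' This does not follow. The standard telescoping identity $\E[V_t X^t_t] = \frac{1}{t}\E[\mathrm{val}(LP_t)]$ requires permutation invariance of $LP_t$ under \emph{arbitrary} permutations of positions, because the uniform random ordering is invariant under all permutations, not just the ones preserving the window partition. Once the inner window constraints are present, swapping the item at position $t$ with an item at some $t'$ in a different window changes the LP (the window loads change), so the distributional equality between coordinate contributions breaks down. The item at time $t$ is moreover always in the last, possibly partial, window, which is structurally special; and adversarial positions scattered through $[t]$ break even the block-wise symmetry among full windows. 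You already acknowledge this obstruction in your third paragraph when discussing blocking, but the same obstruction defeats the symmetry step in your second paragraph. The paper avoids it entirely: it never relates $\E[V_t X^t_t]$ to $\mathrm{val}(LP_t)/t$. Instead it proves a deterministic monotonicity statement (if enough mass of strictly better items is present in every relevant window, the LP fully picks $I_t$; Lemma~\ref{lemma:LBpickDet}), shows via without-replacement concentration that this precondition holds with probability $\ge 1-1/k$ whenever $R_t \le c_t - \eps_t$ (Lemma~\ref{lemma:LBtentSel}), and then converts ``low-rank items are picked with high probability'' into a value bound against $\OPT_{RO}$ using Chebyshev's Sum Inequality. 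This rank-based route gives a per-step lower bound $\E[V_t T_t F_t] \ge (c_t - \eps_t - p_t - 2/k)\OPT_{RO}/RO_n$ that never needs to compare $LP_t$ to $LP_n$.

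A minor secondary inaccuracy: you attribute the $\ln\frac{n}{\aw\ell}$ factor to a union bound over $\Theta(n/\ell)$ windows. In the paper it arises from summing the budget-scaling slack $1 - c_t = 4\aw\ell/t$ over $t$, i.e.\ $\sum_t \frac{\aw\ell}{t} = O(\aw\ell\ln\frac{n}{\aw\ell})$ (plus a similar contribution from $\eps_t$); the union bounds over windows are absorbed into the $\psi$ function and the $1/k$ error terms.
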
	

%############################################################
%############################################################

	\paragraph{Roadmap of the analysis.} In Section~\ref{sec:tenSel} we upper bound for each random-order time $t$ the probability that the algorithm tentatively selects that item. 
	%current item based on its weighted rank: The larger the rank of an item is, the worse it is, and the smaller its tentative selection probability is.
	%if the rank is large, the item is ``bad'' and the tentative selection probability it low. 
	Next, we boost this per-time upper bound into concentration inequalities for the volume of the selections made up to a given point, and use it to upper bound the probability that the algorithm is blocked by constraint \eqref{eq:mainBud} or \eqref{eq:outerConstr}, in which case it would not be able to make permanent its tentative selection (Section~\ref{sec:blocked}). Using this, we lower bound the value obtained by the algorithm in each (free) random-order time step (Section \ref{sec:LBVal}), and add over all such time steps to show that the algorithm obtains the desired value (Section \ref{sec:wrap}). 
	
	\medskip 
	
		Without loss of generality we assume that the random-order times are sorted in decreasing order of value density, namely $\frac{v_1}{w_1} \ge \frac{v_2}{w_2} \ge \ldots \ge \frac{v_{|RO|}}{w_{|RO|}}$. Also, we say that an item is \textbf{better} than another if it has higher value density. For simplicity, we also assume that no item has value or weight equal to 0 (else automatically exclude/include in the solution), and that the sum of all item sizes is at least the knapsack size $k$. We also assume that there are no ties in the value densities $\frac{v_i}{w_i}$; this can be accomplished by infinitesimal perturbations to the values, for example. We also assume $\frac{n}{2} \ge k \ge 80$ and that $\frac{\aw \ell}{n} \le \frac{1}{2}$, so at most half of the windows can have adversarial items. With overload of notation, we use $I_t$ to denote the actual item (pair $(V_t,W_t)$) at time $t$, even when $t$ is an adversarial time. 
	
%############################################################
%############################################################
%############################################################
%############################################################
	
	\section{Controlling tentative selections via weighted rank} \label{sec:tenSel}

	We use $T_t := \ones(X^t_t > 0)$ to denote the indicator of \emph{tentative} selection by the algorithm at time $t$. Our goal in this section is to argue that the algorithm does not tentatively select too many items. As mentioned before, the main handle for making this formal is the notion of \emph{weighted rank}. 
	The weighted rank of the random-order item $i$ is a $\frac{1}{k}$ scaling of the sum of the weights of random-order items better than it (recall these items are sorted in decreasing order of value density $\frac{v_i}{w_i}$).

	\begin{definition}[Weighted rank]
		The \emph{weighted rank} of the random-order item $i$ is $r_i := \frac{1}{k}\,\sum_{i'<i} w_{i'}$ (we also define $r_{|RO| + 1} = \frac{1}{k} \sum_i w_i$ for convenience). For a random-order time $t$, we use $R_t := r_{I_t}$ to denote the total weighted rank of the item $I_t$ at this time. 
	\end{definition}	
	
	As before, one interpretation of the weighted rank $r_i$ is the following: considering the offline problem with only random-order items, $r_i$ is by how much we need to scale the knapsack of size $k$ before the optimal fractional solution wants to pick a strictly positive fraction of item $i$. Thus, the higher the rank the worse the item is. 

	The main result of this section says that the worse the item at time $t$ is, the less likely the algorithm is to tentatively pick it. (The extra conditioning on items $(I_{t'})_{t' \in S}$ will be technically useful later and may be ignored throughout at a first read.) 
	
	\begin{thm}[UB tentative selection] \label{thm:tent}
		Consider a random-order time $t \ge 8\ell (\aw +1)$, and a set $S$ of random-order times with $|S| \le \frac{\ln k}{4}$. Then
		\[
			\Pr\left( T_t = 1 ~\bigg|~ R_t,\, (I_{t'})_{t' \in S} \right) \le \psi(R_t), \quad \text{ where }		\psi(\gamma) =
			\begin{cases}
				1, &\text{ if $\gamma < 1$}\\
				\frac{2}{k}, &\text{ if $\gamma \in [1,50]$}\\
				4 k e^{-\frac{\gamma}{20} \ln k}, & \text{ if $\gamma > 50$}.
			\end{cases}
		\]	
		%$\psi(x) = \min\{1,\, 2 \eta(x)\}$, and $\eta(\cdot)$ is the bound from Lemma \ref{lemma:tentFree}.
	\end{thm}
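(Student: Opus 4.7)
The plan is to bound $\Pr(T_t = 1 \mid R_t, (I_{t'})_{t' \in S})$ by turning the event ``LP tentatively selects item $t$'' into a statement about how much weight of items with higher value density than $I_t$ has already arrived by time $t-1$, and then applying a concentration inequality for sampling without replacement. Concretely, I would introduce the random variable $Z := \sum_{t' \in RO \cap [1,t-1] \setminus S} W_{t'} \ones[I_{t'} < I_t]$ and prove a deterministic structural lemma of the form ``$T_t = 1 \Rightarrow Z \le \Theta(t, R_t)$''. This comes from the optimality of $X^t$: by a KKT/exchange argument, if $X^t_t > 0$ then every $t' < t$ with $V_{t'}/W_{t'} > V_t/W_t$ either is fully picked ($X^t_{t'}=1$) or sits inside a window whose inner constraint \eqref{eq:innerConstr} is tight, and any such saturated window must in fact be filled only with items strictly better than $I_t$ (so $\mu_B > \mu_{B(t)}$). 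Combining this with the main inner budget bound $c_t \frac{t}{n} k$ on total LP mass yields $\Theta \approx c_t \frac{t}{n} k + \text{slack}$, where the slack collects the weight of ``better'' items that the LP was forced to leave unpicked inside saturated windows together with adversarial items of high density in the prefix.

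Next, I would compute $\E[Z \mid R_t, (I_{t'})_{t' \in S}]$. Conditional on the revealed items, the remaining random-order items are a uniformly random assignment to the remaining random-order times, so by linearity $\E[Z \mid R_t, (I_{t'})_{t' \in S}] \approx \frac{|RO \cap [1,t-1] \setminus S|}{|RO|-1-|S|} \cdot k R_t$. The hypotheses $t \ge 8\ell(\aw+1)$ and $|S| \le \frac{\ln k}{4}$ make this at least of order $\frac{t}{n} k R_t$. The scaling $c_t = 1 - \frac{4\aw\ell}{t}$ is calibrated so that for $R_t \ge 1$ the gap $\E[Z]-\Theta$ is positive and grows with both $\frac{\aw\ell k}{n}$ and $R_t - 1$. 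A Chernoff-type tail bound for sums of bounded variables under sampling without replacement (Serfling / Hoeffding) then gives $\Pr(Z \le \Theta \mid R_t, (I_{t'})_{t' \in S}) \le \psi(R_t)$, with the three regimes of $\psi$ falling out by plugging in $R_t < 1$ (trivial), $R_t \in [1,50]$ (the gap is already $\Omega(\aw \ln k)$, pushing the tail down to $O(1/k)$), and $R_t > 50$ (the gap is now $\Omega(R_t \cdot \frac{tk}{n})$, giving the exponentially decaying bound $4k\,e^{-R_t\ln k/20}$).

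The main obstacle is keeping the slack term in $\Theta$ under tight control. Both the unpicked-better weight inside saturated windows and adversarial items of high density can a priori contribute $\Theta(\aw\ell)$ each, which for $n \gg k$ is of the same order as $\E[Z]$ itself and would erase the concentration margin. Resolving this requires two ingredients working together: first, a secondary Chernoff bound plus union bound over windows showing that with overwhelming probability no random-order window accumulates enough mass of high-density random-order items to saturate its inner constraint (using that the constant $\cstd=601$ is far larger than the expected $R_t \ln k$ better-mass per window for $R_t \le 50$), so that only adversarial windows can appear in the saturated set; and second, that the scaling $c_t = 1 - \frac{4\aw\ell}{t}$ exactly offsets the worst-case $\aw\ell$ contribution from adversarial items, leaving a margin of $\Omega(\aw\ell k/n)$ in the gap. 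Threading the constants so that the three regimes of $\psi$ emerge simultaneously, while carrying the conditioning on the $|S|$ already-revealed items through every step of the calculation, is the delicate part.
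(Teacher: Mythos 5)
Your high-level plan — tie ``$T_t = 1$'' to a statement about the mass $Z$ of higher-density items in the prefix and then concentrate $Z$ — is the right idea and matches the spirit of the paper. But the deterministic bridge you propose runs in the opposite direction from the paper's, and that choice creates a real gap in the slack accounting.

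The paper's bridge (Lemma~\ref{lemma:sat}) is a \emph{sufficient} condition for $T_t = 0$: if there exists \emph{some} feasible LP solution supported on items strictly better than $I_t$ that saturates the scaled budget $\scalingfactor_t\frac{tk}{n}$, then the optimal $LP_t$ solution sets $X^*_t = 0$. Because one is free to \emph{construct} such a witness (rather than analyze the LP's actual optimum), the argument only needs (Case $\gamma\in[1,50]$) that no free window overshoots $\cstd\frac{\ell k}{n}$, or (Case $\gamma>50$) that every full free window has at least $2\frac{\ell k}{n}$ better-mass so one can take $2\frac{\ell k}{n}$ from each window and still hit the budget. Your proposed bridge is a \emph{necessary} condition for $T_t = 1$: a KKT/exchange argument forces unpicked better items into windows with tight inner constraints, giving $Z \le c_t\frac{tk}{n} + \text{slack}$, where the slack includes the \emph{unpicked} better mass sitting in saturated windows. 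The structural argument itself (every partially-unpicked better item must sit in a saturated window, and saturated windows containing an unpicked better item are filled only with strictly better items) is correct — but the slack it produces is not controllable as stated. A saturated window contains up to $\ell$ total weight yet only $\cstd\frac{\ell k}{n} = \cstd\ln k \ll \ell$ of it can be picked, so the unpicked better mass in a single saturated window can be as large as $\approx \ell$. Since $\ell = \frac{n\ln k}{k}$ can vastly exceed $\frac{tk}{n}$, this slack swamps $\E[Z]-c_t\frac{tk}{n}$. Your asserted slack bound of $\Theta(\aw\ell)$ is also mismatched to the compensation in $\scalingfactor_t$: the scaling subtracts $\frac{4\aw\ell k}{n}=4\aw\ln k$ from the budget, but $\aw\ell = \frac{n}{k}\cdot\aw\ln k$ is at least twice and often far more than that.

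Your proposed fix — show w.h.p.\ no random-order window saturates, so only the $\le\aw$ adversarial windows contribute slack — can only work when $\gamma\le 50$; for $\gamma>50$ the expected better mass per window is $\gg\cstd\ln k$ and random-order windows \emph{will} saturate. Your sketch handles that regime only by asserting the gap is $\Omega(R_t\frac{tk}{n})$, but that ignores the now-uncontrolled slack. The paper avoids this entirely: in the $\gamma>50$ case it does not reason about the LP's optimum at all, it simply exhibits a feasible solution taking $2\frac{\ell k}{n}$ per window, protected by the union-bound-over-windows concentration in Lemma~\ref{lemma:tentFree}. I would suggest abandoning the ``necessary condition on $X^t$'' framing and instead prove the paper's monotonicity lemma (if a feasible budget-saturating solution supported on better items exists, $X^*_t=0$), then split the concentration argument into the two regimes: one union bound showing no window overfills, and one showing all windows underfill by at least $2\frac{\ell k}{n}$. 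The conditioning on $(I_{t'})_{t'\in S}$ then rides along cleanly because the remaining random-order items are still a uniform permutation; that part of your plan is fine.

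Finally, note that the paper factors the argument through the event $G_{\ge\alpha}$ (Lemma~\ref{lemma:freeWeight}, that the free-time better mass is at least $\alpha\scalingfactor_t\frac{tk}{n}$) and then multiplies probabilities; your single-deviation-of-$Z$ argument tries to bundle both steps, which is where the slack bookkeeping becomes intractable. Separating ``$Z$ is large'' from ``$Z$ is spread nicely across windows'' is exactly what makes the constants manageable.
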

	
	%Recalling the interpretation of weighted rank $R_t$, this lemma says that if we need to scale the original knapsack by a factor of $\alpha \gg 1$ before we want to pick item $I_t$ (ignoring the adversarial items), i.e., this item is ``bad'', then with high probability our algorithm will not tentatively pick it. 

	For the rest of the section we prove this result. At its heart is the following deterministic monotonicity property of the LP: Fix a scenario (so the LP is deterministic); if there \emph{is} a solution for the LP with only items better than $I_t$ that saturates the main budget, then $I_t$ is not included at all in the \emph{optimal} LP solution. This is clear if we did not have the inner constraints: The optimal LP solution is obtained by the greedy procedure, and if we can saturate the budget with only better items the greedy will stop before reaching $I_t$. While this does not hold necessarily hold in the presence of general side constraints, we show it still does under the simple inner constraints. To streamline the presentation, the proof is presented in Appendix \ref{app:sat}.
	 
	\begin{lemma} \label{lemma:sat}
		Consider a time $t \in [n]$, and fix a scenario $I_1,I_2,\ldots,I_n$. Suppose that there is a feasible solution $\bar{X}$ of $LP_t$ with $\sum_{t' \le t} W_{t'} \bar{X}_{t'} = \scalingfactor_t \frac{tk}{n}$ and whose support only includes times with items strictly better than $I_t$ (i.e., $\bar{X}_{t'} > 0$ implies that $I_{t'}$ is strictly better than $I_t$, for all $t' \in [t]$). Then in any optimal solution $X^*$ of $LP_t$ we have $X^*_t = 0$. (Thus, $I_t$ is not tentatively selected by our algorithm.) 
	\end{lemma}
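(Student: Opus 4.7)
The plan is to prove the lemma by contradiction: assume some optimal $X^*$ of $LP_t$ has $X^*_t > 0$ and derive a contradiction. Fix notation: let $B_t \in \cInt_t$ be the window containing $t$, write $W^*_B := \sum_{t' \in B} W_{t'} X^*_{t'}$ and $\bar{W}_B := \sum_{t' \in B} W_{t'} \bar{X}_{t'}$ for $B \in \cInt_t$, and let $\operatorname{supp}(\bar{X}) := \{t' : \bar{X}_{t'} > 0\}$ denote the support of $\bar{X}$, which by assumption consists only of times with value density $V_{t'}/W_{t'}$ strictly greater than $V_t/W_t$.

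The first step is a local-greedy observation inside $B_t$: for every $t' \in \operatorname{supp}(\bar{X}) \cap B_t$, one must have $X^*_{t'} = 1$. Otherwise, shifting $\epsilon$ mass from $t$ to $t'$ -- decreasing $X^*_t$ by $\epsilon$ and increasing $X^*_{t'}$ by $\epsilon W_t / W_{t'}$, for sufficiently small $\epsilon > 0$ -- leaves the main budget and the inner constraint on $B_t$ unchanged (since both $t$ and $t'$ belong to $B_t$), does not touch any other window, and changes the objective by $\epsilon W_t (V_{t'}/W_{t'} - V_t/W_t) > 0$, contradicting the optimality of $X^*$. Combining this with $W_t X^*_t > 0$ and $\bar{X}_{t'} \le 1$ yields the \emph{strict} inequality $W^*_{B_t} \ge \sum_{t' \in \operatorname{supp}(\bar{X}) \cap B_t} W_{t'} + W_t X^*_t > \sum_{t' \in \operatorname{supp}(\bar{X}) \cap B_t} W_{t'} \ge \bar{W}_{B_t}$.

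Now a global weight-counting step finishes the argument. Since $\sum_B W^*_B \le \scalingfactor_t \tfrac{tk}{n} = \sum_B \bar{W}_B$ by feasibility of $X^*$ and saturation of $\bar{X}$, the strict surplus on $B_t$ forces some window $B^\star \neq B_t$ with $W^*_{B^\star} < \bar{W}_{B^\star} \le \cstd \tfrac{\ell k}{n}$, so the inner constraint on $B^\star$ is \emph{not} saturated in $X^*$. Pick any $t' \in \operatorname{supp}(\bar{X}) \cap B^\star$, which exists because $\bar{W}_{B^\star} > 0$. If $X^*_{t'} < 1$, then shifting $\epsilon$ mass from $t$ to $t'$ as above is feasible -- the main budget is preserved, the constraint on $B_t$ only relaxes, and the constraint on $B^\star$ has strictly positive slack to absorb the added weight -- and strictly increases the objective, contradicting optimality. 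Otherwise $X^*_{t'} = 1$ for every such $t'$, and then $W^*_{B^\star} \ge \sum_{t' \in \operatorname{supp}(\bar{X}) \cap B^\star} W_{t'} \ge \bar{W}_{B^\star}$, contradicting the inequality $W^*_{B^\star} < \bar{W}_{B^\star}$ just established.

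The main delicate point is the first step: it is this local-greedy conclusion that upgrades the trivial inequality $W^*_{B_t} \ge \bar{W}_{B_t}$ into a strict one, which is what activates the weight-redistribution argument to a second window. After that, the \emph{disjointness} of the windows in $\cInt_t$ is the critical structural feature, because it guarantees that the cross-window swap only perturbs $B_t$ and $B^\star$ and hence cannot cascade into further constraints.
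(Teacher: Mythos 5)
Your proof is correct and follows the same exchange-argument strategy as the paper's: first force $X^*$ to be at its upper bound on $\operatorname{supp}(\bar{X}) \cap B_t$ via a local swap inside $t$'s own window, derive a strict occupation surplus on $B_t$ from the extra $W_t X^*_t > 0$, use global weight accounting to find a window $B^\star$ with a strict deficit (hence slack in its inner constraint), and close with a cross-window swap. The one substantive difference is in the accounting step: the paper asserts that the optimal $X^*$ \emph{saturates} the main budget and concludes $WX^*([t]) = W\bar{X}([t])$, a claim that, in the presence of the inner constraints, is not completely immediate and in fact itself requires the existence of $\bar{X}$ to justify. You sidestep this entirely by only using feasibility, $\sum_B W^*_B \le \scalingfactor_t \frac{tk}{n} = \sum_B \bar{W}_B$, which is weaker but suffices to produce the deficit window; this is a slight cleanup in rigor. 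Your final case split (either some $t' \in \operatorname{supp}(\bar{X}) \cap B^\star$ has $X^*_{t'} < 1$, enabling the swap, or all are $1$, directly contradicting $W^*_{B^\star} < \bar{W}_{B^\star}$) is equivalent to the paper's observation that $W\bar{X}(B^\star) > WX^*(B^\star)$ yields some $t'$ with $\bar{X}_{t'} > X^*_{t'}$.
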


	Our next lemma will leverage this result to show that if there are many items in random-order-only windows better than $I_t$, then the probability of tentatively selecting the latter is small. Before that, we need to introduce the definition of \emph{free time}, the ones we will focus on for most of the analyses. 
	
	\begin{definition}[$\free_t$ and $RO_t$]
		A time is \emph{free} if it does not belong to one of the adversarial windows $\advW$. We use $\free_t$ to denote the collection of free times in $[t]$. Furthermore, $\cInt_t^{\tfree}$ denotes the windows from $\cInt_t$ that only contain free times.
		
		We also use $RO_t := RO \cap [t]$ to denote all the random-order times (free or otherwise) in $[t]$. With slight abuse in notation, we also use $RO_t$ to denote the \emph{cardinality} of $RO_t$.
	\end{definition}	

	The following estimates follow directly from the assumption that there are at most $\aw$ adversarial windows, each of size $\ell$.

	\begin{observation} \label{obs:free}
		The following holds: (a) If $t \ge 2\, \aw \ell$ then $\lvert\free_t\rvert \ge \frac{t}{2}$; %\vspace{-8pt}
			%\item 
			(b) $\frac{1}{RO_n} \le \frac{1}{|\free_n|} \le \frac{1}{n} (1 + \frac{2 \aw \ell}{n})$.  
		%\end{itemize}
	\end{observation}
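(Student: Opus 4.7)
The observation is an elementary counting exercise whose content boils down to the bound $|\mathit{Adv}| \le \aw\ell$ (since $\mathit{Adv}$ is covered by $\aw$ windows of $\cInt$, each of size exactly $\ell$) together with the algebraic inequality $\frac{1}{1-x} \le 1 + 2x$ valid for $x \in [0,1/2]$.

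For part (a), the plan is to note that $\free_t$ is obtained from $[t]$ by removing the times lying in adversarial windows of $\advW$ that intersect $[t]$. Since $|\advW| \le \aw$ and each window has size $\ell$, at most $\aw\ell$ time steps are removed, so $|\free_t| \ge t - \aw\ell$. Combined with the hypothesis $t \ge 2\aw\ell$, this yields $|\free_t| \ge t - t/2 = t/2$.

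For part (b), I would first argue the leftmost inequality $\frac{1}{RO_n} \le \frac{1}{|\free_n|}$ by showing the containment $\free_n \subseteq RO$: every time in a free window belongs to a window of $\cInt$ that is not in $\advW$, hence is not in $\mathit{Adv}$ (which is covered by $\advW$), hence lies in $RO$. Taking cardinalities gives $|\free_n| \le RO_n$, so the reciprocals flip the inequality. For the rightmost inequality, apply part (a)'s bound at $t=n$ to get $|\free_n| \ge n - \aw\ell$, and then use the standing assumption $\aw\ell/n \le 1/2$ together with the identity
\[
\frac{1}{n-\aw\ell} = \frac{1}{n}\cdot \frac{1}{1 - \aw\ell/n}
\]
and the elementary inequality $\frac{1}{1-x} \le 1 + 2x$ for $x \in [0,1/2]$ (verified by checking that $(1+2x)(1-x) = 1 + x - 2x^2 \ge 1$ for $x \le 1/2$) to conclude $\frac{1}{|\free_n|} \le \frac{1}{n}(1 + \frac{2\aw\ell}{n})$.

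There is no real obstacle here; everything reduces to the two definitions of $\free_t$ and $RO$ and the assumption on $\aw\ell/n$ stated at the end of Section~\ref{sec:modelAlg}. The only minor care needed is not to confuse $|\free_n|$ with $RO_n$ — the latter can be strictly larger because isolated random-order time steps can occur inside windows of $\advW$ — which is precisely why the first inequality in (b) is stated as an inequality rather than an equality.
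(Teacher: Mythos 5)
Your proof is correct and is exactly the direct argument the paper alludes to (the paper itself gives no proof, remarking only that the estimates ``follow directly from the assumption that there are at most $\aw$ adversarial windows, each of size $\ell$''). Both parts reduce to $\lvert\bigcup\advW\rvert \le \aw\ell$, the containment $\free_n \subseteq RO$, the standing assumption $\aw\ell/n \le 1/2$, and the elementary bound $\frac{1}{1-x}\le 1+2x$ on $[0,1/2]$, all of which you invoke correctly.
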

	
%	\begin{proof}
%		The first point follows from the assumption that there are at most $\sqrt{k}$ adversarial windows, each of size $\ell = \frac{n \ln k}{k}$.	For the second: $$\frac{1}{|\free_n|} \le \frac{1}{n - \sqrt{k} \ell} = \frac{1}{n (1 - \frac{\sqrt{k} \ell}{n})} \le \frac{1}{n}\bigg(1 + \frac{\sqrt{k} \ell}{n} \bigg) = \frac{1}{n}\left(1 + \frac{\ln k}{\sqrt{k}} \right),$$ where the second inequality follows from $\frac{1}{1 - x} \le (1 + x)$ for all $x \in [0,1)$.  
%	\end{proof}

	We can finally state the promised lemma. 

	\begin{lemma} \label{lemma:tentFree}
		%M: The assumption that $t$ is a RO time is only needed if the adv items can adapt to the random order
		Consider a random-order time $t \ge 2 (\aw \ell + 1)$. For a value $\gamma \ge 0$, let $G_\gamma$ be the event that the sum of the sizes of the items in the times $\free_t$ that are better than $I_t$ equals $\gamma \scalingfactor_t \frac{tk}{n}$ (i.e., $\sum_{t' \in \free_t : I_{t'} < I_t} W_{t'} = \gamma \scalingfactor_t \frac{tk}{n}$). Then for any set of random-order times $S \subseteq RO$ with $|S| \le \frac{\ln k}{4}$, we have 		%
		\begin{align*}
			\Pr(T_t = 1 \mid G_\gamma, I_t, (I_{t'})_{t' \in S}) \le \frac{1}{2} \psi(\gamma) =
			\begin{cases}
				\frac{1}{k}, &\textrm{ if $\gamma \in [1,50]$}\\
				2 k e^{-\frac{\gamma}{20} \ln k}, & \textrm{ if $\gamma > 50$}.
			\end{cases}			
		\end{align*}
	\end{lemma}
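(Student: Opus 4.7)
My plan is to combine the deterministic monotonicity Lemma~\ref{lemma:sat} with an explicit construction and a Bennett-type concentration bound for sampling without replacement. By Lemma~\ref{lemma:sat}, it suffices to show that, with probability at least $1 - \tfrac{1}{2}\psi(\gamma)$ under the conditioning on $G_\gamma$, $I_t$, and $(I_{t'})_{t' \in S}$, there exists a feasible solution of $LP_t$ that saturates the main inner budget and whose support contains only times with items strictly better than $I_t$.

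I will test the very simple candidate
\[
  \bar{X}_{t'} := \begin{cases} 1/\gamma & \text{if } t' \in \free_t \text{ and } I_{t'} < I_t, \\ 0 & \text{otherwise.} \end{cases}
\]
(If $\scalingfactor_t \le 0$ the main inner budget already forces $X^\star_t = 0$ trivially, so I may assume $\scalingfactor_t > 0$.) By the definition of $G_\gamma$, this $\bar X$ has total selected size $(1/\gamma)\cdot(\gamma \scalingfactor_t tk/n) = \scalingfactor_t tk/n$, so it saturates \eqref{eq:mainInnerBud}, and $\bar X \in [0,1]^t$ since $\gamma \ge 1$. For any window containing an adversarial time, $\bar X$ vanishes, so \eqref{eq:innerConstr} is trivial there. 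Hence $\bar X$ is feasible iff $B_w \le \cstd \gamma \ell k/n$ for every free window $w \in \cInt_t^{\tfree}$, where $B_w := \sum_{t' \in w,\, I_{t'} < I_t} W_{t'}$.

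To bound the probability that some $B_w$ violates this threshold I will union bound over the $|\cInt_t^{\tfree}| \le n/\ell = k/\ln k$ free windows. Conditional on $G_\gamma$, $I_t$, and $(I_{t'})_{t' \in S}$, the unseen random-order items still land uniformly (without replacement) in the free positions outside $S$, so the $\le \ell$ summands forming $B_w$ lie in $[0,1]$ and are negatively associated. Using $|\free_t| \ge t/2$ from Observation~\ref{obs:free}(a) together with $|S|\le \ln k/4$, one checks $\Ex{B_w \mid G_\gamma, I_t, (I_{t'})_{t' \in S}} \le 2\gamma \scalingfactor_t \ell k/n = 2\gamma \scalingfactor_t \ln k$ (the $S$-correction is lower order). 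Applying Bennett in the form $\Pr(B_w \ge M) \le (e\mu/M)^M$ with $\mu := \Ex{B_w}$ and $M := \cstd \gamma \ln k$, the ratio $M/\mu \ge \cstd/(2\scalingfactor_t) \ge \cstd/2$ is a constant much larger than $e$ thanks to the choice $\cstd = 601$, yielding a per-window bound of $k^{-C\gamma}$ for a large absolute constant $C$. A union bound then gives
\[
  \Pr\!\bigl(\bar X \text{ infeasible} \,\big|\, G_\gamma, I_t, (I_{t'})_{t' \in S}\bigr) \;\le\; \frac{k}{\ln k}\cdot k^{-C\gamma},
\]
which is comfortably below $1/k$ for $\gamma \in [1,50]$ and below $2k\, e^{-\gamma \ln k/20}$ for $\gamma > 50$, matching the two cases of $\tfrac{1}{2}\psi(\gamma)$.

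The main obstacle I expect is technical: carefully justifying the Bennett/negative-association bound for $B_w$ under the simultaneous conditioning on $G_\gamma$ and the explicit items $(I_{t'})_{t' \in S}$. One must verify that fixing the total better-mass in $\free_t$ and revealing a logarithmic number of item placements still leaves the remaining random-order items distributed as a uniform sampling without replacement into the unfilled free positions, so that the concentration inequality applies with the claimed mean. Once this is in place, the two probability cases follow from plugging $\gamma \in [1,50]$ and $\gamma > 50$ into the union-bound estimate.
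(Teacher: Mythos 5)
Your proposal is correct and takes a genuinely different route from the paper's proof, while sharing the same structural backbone (reduce to the existence of a saturating solution supported on strictly better items via Lemma~\ref{lemma:sat}, then show such a solution exists with high probability by concentration over free windows). The key difference is that you use a \emph{single} candidate solution $\bar{X}_{t'} = 1/\gamma$ for all $\gamma \ge 1$, which only requires an \emph{upper} tail bound $B_w \le \cstd\gamma\ell k/n$ in every free window. The paper instead splits into two cases with two different constructions: for $\gamma \in [1,50]$ it needs the upper tail event that each window has at most $\cstd\ell k/n$ mass (threshold independent of $\gamma$), and for $\gamma > 50$ it switches to a \emph{lower} tail event that each full-length free window has at least $2\ell k/n$ mass and builds the saturating solution by capping each window at $2\ell k/n$. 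Your unified construction elegantly avoids the case split and the lower-tail argument, and the $\gamma$-dependent threshold $\cstd\gamma\ell k/n$ gives you exactly the decay rate needed to match $\frac{1}{2}\psi(\gamma)$ in both regimes. Both proofs use without-replacement concentration (the paper uses the Bernstein-type Lemma~\ref{lemma:chernoff}; you use a multiplicative Chernoff/Bennett form), and given $\cstd = 601$ either form has plenty of slack.

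Two small loose ends to close. First, the conditioning: ``conditional on $G_\gamma$, $I_t$, $(I_{t'})_{t'\in S}$, the unseen items land uniformly in the free positions outside $S$'' is not quite the right statement, since $G_\gamma$ constrains which items occupy $\free_t$. The correct resolution (as in the paper) is to condition additionally on the \emph{set} of items in $\free_t$ consistent with $G_\gamma$; then the \emph{order} of items on $\free_t \setminus S$ is uniform, $B_w$ restricted to $w \setminus S$ is a sum of draws without replacement from that conditioned set, and one integrates over the choice of set at the end. Second, $B_w$ includes the deterministic contribution from $w \cap S$, which you should split off and bound by $|S| \le \ln k/4 \le \ell k/n$ before applying concentration to the random part $B_{w\setminus S}$; this is a lower-order correction that is easily absorbed by your threshold $\cstd\gamma\ell k/n$, but it should be said. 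Your mean estimate $\E[B_{w\setminus S}] \le 2\gamma\scalingfactor_t\ell k/n$ also needs a slightly more generous constant (the paper uses $3$) once the $|\free_t \setminus S| \ge t/2 - \ln k/4$ correction is accounted for, but the slack in $\cstd$ easily accommodates this.
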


	\begin{proof}
		Condition on $I_t$, $(I_{t'})_{t' \in S}$, and on the \emph{set} of items $\{I_{t'}\}_{t' \in \free_{t-1}}$ in the free times in a way that the event $G_\gamma$ holds; let $\omega$ denote this conditioning. If suffices to show the upper bound $\Pr(T_t = 1 \mid \omega) \le \frac{1}{2} \psi(\gamma)$, and the lemma follows by taking expectation with respect to multiple of these $\omega$'s.  Also notice that this conditioning does not fix the \emph{relative order} of the items in $\free_{t-1} \setminus S$, thus
		\begin{gather}
			\textrm{The items at times $\free_{t-1} \setminus S$ are in random order even when conditioning on $\omega$.} \label{eq:randomOrderCond}
		\end{gather}
		
		Let $E$ be the event that there is a feasible solution $X$ for $LP_t$ whose support only has items better than $I_t$ and that saturates the main budget, i.e., $\sum_{t' \le t} W_{t'} X_{t'} = \scalingfactor_t \frac{t k}{n}$. From Lemma \ref{lemma:sat}, whenever $E$ holds $I_t$ is \emph{not} tentatively selected, so it suffices to \emph{lower bound} the probability $\Pr(E \mid \omega)$. 	

		\paragraph{Case 1: $\gamma \in [1,50]$.} If for each of the free windows $\cInt_{t-1}^{\tfree}$ the total size of items better than $I_t$ in the window is at most $\cstd \ell \frac{k}{n}$ (not ``too many good items'' in any free window), then any (fractional) selection of these items of total size $\scalingfactor_t \frac{tk}{n}$ gives a feasible solution for $LP_t$ saturating the main budget, so $E$ holds; notice that it is possible to select this much size because we are in the case $\gamma \ge 1$. The intuition is that since the total size of these good items is $\gamma \scalingfactor_t \frac{t k}{n} \le 50 \frac{t k}{n}$, each window should have about $\frac{\ell}{t} \cdot 50 \frac{t k}{n} = 50 \ell \frac{k}{n}$ of their size in it, so with high probability no window has more than $\cstd \ell \frac{k}{n}$ of their size (recall $\cstd \gg 50$). 
		
%		We will show, using the upper bound $\gamma \le 12$, observation \eqref{eq:randomOrderCond}, and the fact that $S$ is small, that the probability of having this ``spread out'' situation is large. 

	More formally, consider a free window $\interval \in \cInt_{t-1}^{\tfree}$. Let $Z_{\interval\setminus S} = \sum_{t' \in \interval \setminus S} \ones{(I_{t'} < I_t)} \cdot W_{t'}$ be the sum of sizes of items in $\interval \setminus S$ better than $I_t$, and let $Z = \sum_{t' \in \free_{t-1} \setminus S} \ones{(I_{t'} < I_t)} \cdot W_{t'}$. Notice that under the conditioning $\omega$, $Z$ is a fixed number satisfying $Z \le \gamma \scalingfactor_t \frac{tk}{n} \le \gamma \frac{tk}{n}$, and that $Z_{\interval\setminus S}$ is a sum of terms sampled without replacement from the terms in $Z$ (because of Observation \eqref{eq:randomOrderCond}). Thus, we have  
		\begin{align*}
		\Ex{Z_{\interval\setminus S} \mid \omega} = \frac{|\interval \setminus S|}{|\free_{t-1} \setminus S|} \E[Z \mid \omega] \le \frac{\ell}{|\free_{t-1}| - \ln k}\cdot \frac{\gamma tk}{n} = \frac{t}{|\free_{t-1}| - \ln k}\cdot \gamma \ell \frac{k}{n} \le 3 \gamma \ell \frac{k}{n},
		\end{align*}
		where the last inequality uses the fact that $t \ge 2\aw \ell$, Observation \ref{obs:free}, and the assumptions $\aw \ge \sqrt{k}$ and $k \ge 80$. Moreover, we can apply the concentration inequality for sampling without replacement (Lemma \ref{lemma:chernoff}) conditionally to the sum $Z_{\interval \setminus S}$ (with $\tau = (600-3\gamma)\ell \frac{k}{n}$) to obtain  
		\begin{align*}
		\PrT{Z_{\interval\setminus S} \geq 600 \ell \frac{k}{n} \growingmid \omega} \leq 2\,\exp\left(- \frac{9}{7}\,3\gamma \ell \frac{k}{n} \right) \le 2\,\frac{1}{k^3} \le \frac{1}{k^{2}},
		\end{align*}
		where in the first inequality we also used that $\tau \ge 3\cdot 3 \gamma \ell \frac{k}{n}$ because $\gamma \le 50$, and in the last inequality that $k \ge 80$. Since $|S| \le \frac{\ln k}{4}$ and each item has size at most 1, the items in $\interval \cap S$ have total size less than $\ell\frac{k}{n}$. Thus, the conditional probability is at most $\frac{1}{k^2}$ that the total size of items in $\interval$ better than $I_t$ is at least $\cstd \ell \frac{k}{n}$ (``too many good items''). Since there are fewer than $k$ windows, by taking a union bound over all free windows $\interval \in \cInt_t^{\tfree}$ we see that with probability at least $1 - \frac{1}{k}$ none of these windows has too many good items. Thus, $\Pr(E \mid \omega) \ge 1 - \frac{1}{k}$. 
		
	\paragraph{Case 2: $\gamma > 50$.} The number of windows in $\cInt_{t-1}^{\tfree}$ of size $\ell$ (i.e., possibly excluding the last window) is at least $num := \frac{t}{\ell} - \aw - 1$. If in each such window the total size of items better than $I_t$ is at least $2 \ell \frac{k}{n}$ (``good items everywhere''), then one can (fractionally) select up to $2 \ell \frac{k}{n}$-mass of them in each window and get a feasible solution for $LP_t$ that saturates the main budget; this saturation is possible because this can give a total of size $(2  \ell \frac{k}{n}) \cdot num \ge \scalingfactor_t \frac{t}{n} k$ of these better items, where the last inequality uses $t \ge 2\ell (\aw + 1)$. Since in this case event $E$ holds, it suffices to lower bound the probability of having good items everywhere. The intuition again is that by assumption there is total mass $\gamma \scalingfactor_t \frac{tk}{n} \ge 12 \frac{tk}{n}$ of these better items, so each window should have about $\frac{\ell}{t}\cdot 12 \frac{tk}{n} = 12 \ell \frac{k}{n}$ size in it, and with high probability all of them should have at least $2 \ell \frac{k}{n}$ size in it. 
	
	Again, consider any fixed window $\interval \in \cInt_{t-1}^{\tfree}$ of size $\ell$, and define the sums $Z_{\interval \setminus S}$ and $Z$ as in the previous case. Now conditioned on $\omega$ we have $Z \ge \gamma \scalingfactor_t \frac{tk}{n} - |S| \ge \frac{\gamma}{2} \frac{tk}{n} - \frac{\ln k}{4}$ (using the definition of $\scalingfactor_t$ and $t \ge 8 \sqrt{k} \ell$), and hence 
	\[
	\Ex{Z_{\interval \setminus S} \growingmid \omega} = \frac{|\interval \setminus S|}{|\free_t \setminus S|} \E[Z \mid \omega] \ge  \frac{ \ell - \frac{\ln k}{4}}{t} \cdot \left(\frac{\gamma}{2}  \frac{tk}{n} - \frac{\ln k}{4}\right) \ge \frac{\gamma \ell k}{3 n},
	\]
	where the last inequality uses $t \gg \frac{n \ln k}{k}$. Again employing the concentration inequality for sampling without replacement (Lemma \ref{lemma:chernoff}) conditionally to the sum $Z_{\interval \setminus S}$ (with $\tau = (\frac{\gamma}{3} - 2)\ell \frac{k}{n}$) we get 
	\begin{align*}
	\PrT{Z_{\interval \setminus S} \le 2 \ell \frac{k}{n} \growingmid \omega } \le 2\,\exp\left(- \frac{(\frac{\gamma}{3} - 2)^2}{\frac{5 \gamma}{3}  - 2} \ln k \right) \le 2\,\exp\left(- \frac{\gamma}{20} \ln k \right),
	\end{align*}  
	where the last inequality uses $\gamma \ge 50$. Taking a union bound over the at most $k$ such windows, the probability that we have enough good items in each window in $\cInt_{t-1}^{\tfree}$ of size $\ell$ is at least $1 - 2k e^{-\frac{\gamma}{20} \ln k}$. This concludes the proof.
	\end{proof}

%	We remark that while the conditioning $\omega$ fixes the set of items $\{I_{t'}\}_{t' \in \free_t}$ that appear in the times $\free_t$, it does not fix their relative order. 
	
	In order to remove the conditioning on $G_\gamma$ from the previous lemma, we show that this event holds with high probability whenever the weighted rank of $I_t$ is high (i.e., there are many items better than it); again this is just a consequence of concentration of measure. Actually we work with the event $G_{\ge \alpha} := \bigvee_{\gamma \ge \alpha} G_\gamma$, namely that the total size of items better than $I_t$ in $\free_t$ is \emph{at least} $\alpha \scalingfactor_t \ell \frac{k}{n}$. 
	
	\begin{lemma} \label{lemma:freeWeight} \label{LEMMA:FREEWEIGHT}
		Consider $t \ge 8 \aw \ell$ and a set $S$ of random-order times with $|S| \le \frac{\ln k}{4}$. Then for $\alpha \ge 1$,
		\begin{gather*}
			\PrT{G_{\ge \alpha} \growingmid R_t = \alpha,\, (I_{t'})_{t' \in S}} \ge 1 - e^{-\alpha \ln k}.
		\end{gather*}
	\end{lemma}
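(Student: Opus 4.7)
The plan is to exploit the fact that, under the given conditioning, the items strictly better than $I_t$ are distributed among the free random-order slots by sampling without replacement, then apply a Chernoff-type inequality to the resulting weighted sum of indicators.

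First, let $J$ be the unique random-order item with weighted rank $\alpha$, so conditioning on $R_t = \alpha$ fixes $I_t = J$, and let $B := \{1,\dots,J-1\} \setminus \{I_{t'} : t' \in S\}$ be the better items not pinned by $S$. By the definition of weighted rank, $\sum_{i<J} w_i = \alpha k$, so $\sum_{b\in B} w_b \ge \alpha k - |S| \ge \alpha k - \ln k/4$. After the conditioning, the items in $B$ are placed uniformly at random without replacement into the $RO_n - 1 - |S|$ remaining slots. Writing $Z = \sum_{b\in B} w_b Y_b$ with $Y_b$ the indicator that $b$ lands in $\free_t \setminus (S \cup \{t\})$, the quantity $\sum_{t' \in \free_t,\, I_{t'} < J} W_{t'}$ equals $Z$ up to a deterministic additive constant from $S\cap\free_t$, so a high-probability lower bound on $Z$ immediately yields one on $G_{\ge \alpha}$.

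Second, I would lower bound the expectation. By Observation~\ref{obs:free} the hypothesis $t \ge 8\aw\ell$ gives $|\free_t| \ge t - \aw\ell$, and $RO_n \le n$; together with $|S| \le \ln k/4$ one gets
\[
\E[Z] \;=\; \frac{|\free_t \setminus (S\cup\{t\})|}{RO_n - 1 - |S|}\sum_{b\in B} w_b \;\ge\; \alpha k\cdot\frac{t - 2\aw\ell}{n},
\]
after absorbing the $\ln k$-order corrections using $\aw \ge \sqrt{k}$, $\ell = n\ln k/k$, and $k \ge 80$. The target threshold is $\alpha\scalingfactor_t\,tk/n = \alpha k(t - 4\aw\ell)/n$, so $\E[Z]$ exceeds the threshold by at least $\tau := 2\alpha k\aw\ell/n$. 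This is exactly the role of the factor $\scalingfactor_t = 1 - 4\aw\ell/t$ in $LP_t$: it creates a buffer large enough to absorb the concentration error.

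Third, apply the Chernoff inequality for sampling without replacement (Lemma~\ref{lemma:chernoff}) to $Z$, using that the $w_b Y_b$ lie in $[0,1]$ and the $Y_b$ are negatively associated. This yields $\Pr(Z \le \E[Z] - \tau) \le \exp(-\Omega(\tau^2/\E[Z]))$. Using $\tau \ge 2\alpha k\aw\ell/n$ and $\E[Z] \le \alpha kt/n$, the exponent is of order $\alpha k\aw^2\ell^2/(nt)$; plugging in $\aw^2 \ge k$, $\ell^2 = n^2\ln^2 k/k^2$, and $t \le n$ reduces this to at least $\alpha\ln k$ (with slack for the hidden constants), giving the desired bound of $e^{-\alpha\ln k}$. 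The main obstacle I anticipate is constant-chasing: the buffer $3\aw\ell$ must simultaneously dominate the additive slack from $|S|$, the gap between $|\free_t|$ and $t$, and yield an exponent at least $\alpha\ln k$, and it is precisely the interplay of $\aw\ge\sqrt{k}$, $\ell = n\ln k/k$, and the choice of $\scalingfactor_t$ that makes all three fit simultaneously.
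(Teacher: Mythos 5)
Your proposal is correct and follows essentially the same route as the paper's proof: condition on $R_t=\alpha$ and $(I_{t'})_{t'\in S}$, recognize the mass of better items in the free slots as a sampling-without-replacement sum, lower-bound its expectation by roughly $\alpha k(t - O(\aw\ell))/n$, and apply the Bernstein-type bound (Lemma~\ref{lemma:chernoff}) using the $4\aw\ell/t$ buffer built into $\scalingfactor_t$ to absorb the deviation. One small inaccuracy: after substituting $\aw^2\ge k$, $\ell=n\ln k/k$, and $t\le n$, your exponent $\alpha k\aw^2\ell^2/(nt)$ is actually of order $\alpha\ln^2 k$, not $\alpha\ln k$ — which is more than enough slack, so the conclusion stands; also, your aside about negative association of the $Y_b$ is unnecessary once you frame $Z$ directly as a without-replacement sum over the free slots, which is how Lemma~\ref{lemma:chernoff} is applied.
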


	\begin{proof}
		Again let $Z = \sum_{t' \in \free_{t-1} \setminus S} \ones{(I_{t'} < I_t)} \cdot W_{t'}$ be the total size of items better than $I_t$ in $\free_t \setminus S$, and let $\omega$ denote the conditioning on $R_t = \alpha$ and $(I_{t'})_{t' \in S}$; it suffices to show $\PrT{Z \ge \scalingfactor_t \alpha \ell \frac{k}{n} \mid \omega} \ge 1 - e^{-\alpha \ln k}$. 
			
	Conditioned on $\omega$, the total weight of items better than $I_t$ in $RO_n \setminus S$ is at least $k \alpha - |S|$, and since these items are in random order (even conditioning on $\omega$) we have 
		\begin{align*}
			\E[Z \mid \omega] \ge (k \alpha - |S|) \frac{|\free_{t-1} \setminus S|}{|RO_n \setminus S|} \ge \frac{k \alpha}{n} (t - 3 \aw \ell)
		\end{align*}
		where in the last inequality we used $\alpha \ge 1$ and the fact $$\vert \free_{t-1} \setminus S \rvert \ge (t-1) - \aw \ell - \ln k \ge t - 2 \aw \ell.$$
		Again we can apply the concentration inequality for sampling without replacement (Lemma \ref{lemma:chernoff}) conditionally to the sum $Z$ (with $\tau = \frac{\aw \ell}{t}\, \E[Z \mid \omega]$) to obtain
		\begin{align*}
		\Pr\bigg(Z < \scalingfactor_t \alpha t \frac{k}{n} ~\bigg|~ \omega\bigg) &\le \Pr\bigg(Z < \E[Z \mid \omega] \underbrace{\bigg(\frac{t - 4\aw \ell}{t - 3 \aw \ell}\bigg)}_{=:\beta} ~\bigg|~ \omega\bigg) \\
		&\le  2 \exp\left( - \frac{\beta^2 \,\E[Z \mid \omega]}{4 + \beta}\right) \\
		&\le  2 \exp\left( - \frac{\frac{\aw \ell}{t} (t-4\aw \ell) \frac{k \alpha}{n}}{5}\right) \\
		&\le  2 \exp\left( - \frac{\frac{1}{2} \aw \alpha \ln k}{5}\right) \\
		&\le e^{- \alpha \ln k},
		\end{align*}
		where in the third and forth inequalities we used $t \ge 8 \aw \ell$ to obtain $\beta \ge \frac{\aw \ell}{t}$, and the last inequality uses $\alpha \ge 1$, $\aw \ge \sqrt{k}$, and $k \ge 80$. This concludes the proof. 
	\end{proof}

	Putting the previous two lemmas together we finally obtain the proof of Theorem \ref{thm:tent}.

	\begin{proof}[Proof of Theorem \ref{thm:tent}]
		We lower bound the probability that $T_t = 0$. First, notice that since the bound of Lemma \ref{lemma:tentFree} is non-increasing in $\gamma$, it still holds if we replace the conditioning on $G_\gamma$ for a conditioning on $G_{\ge \gamma}$ (i.e., we condition on having possible more items better than $I_t$). Also, by assumption, all items have different weighted rank, so conditioning on $R_t = \alpha$ is equivalent to conditioning on $I_t$ being the item with weighted rank $\alpha$. Using these observation,  we can apply Lemma \ref{lemma:tentFree} to obtain 
	 \begin{align*}	
			\Pr[T_t = 0 \mid R_t = \alpha, (I_{t'})_{t' \in S}] &\ge \Pr[T_t = 0 \wedge G_{\ge \alpha} \mid R_t = \alpha, (I_{t'})_{t' \in S}] \\
			&= \Pr[T_t = 0 \mid G_{\ge \alpha}, R_t = \alpha, (I_{t'})_{t' \in S}]\, \Pr[G_{\ge \alpha} \mid R_t = \alpha, (I_{t'})_{t' \in S}]\\
			&\stackrel{\textrm{L.\,\ref{lemma:tentFree} and \ref{lemma:freeWeight}}}{\ge} \left(1- \frac{1}{2} \psi(\alpha)\right) (1-e^{-\alpha \ln k}) \ge 1 - \frac{1}{2} \psi(\alpha) - e^{-\alpha \ln k} \ge 1 - \psi(\alpha).
		\end{align*}
	This concludes the proof. 
	\end{proof} 
	
%############################################################
%############################################################

%###########################################################	
%###########################################################
%###########################################################
%###########################################################

\section{Controlling the probability of being blocked} \label{sec:blocked}

	In this section we show that with good probability, when the algorithm tentatively selects an item, it also permanently selects it, i.e., it is not blocked by the constraints \eqref{eq:mainBud} and \eqref{eq:outerConstr}. More precisely, let $O_t := W_t X^{\alg}_t$ be the actual occupation incurred by the the algorithm at time $t$. We use $F_t$ to denote the indicator of the event that the algorithm is \emph{not} blocked at time $t$, i.e., $F_t = 1$ if
	\begin{align}
		\sum_{t' \in \interval_{\text{last}}} O_{t'} \le \cstc \frac{\ell}{n}k - 1 \qquad \textrm{ and }\qquad \sum_{t' < t} O_{t'} \le k - 1, \label{eq:blocked}
	\end{align}
	where again $\interval_{\text{last}}$ is the last window in $\cInt_{t-1}$. Otherwise $F_t = 0$. The following is the main result of this section.

		\begin{thm}[Probability of being blocked] \label{thm:blocked}
			For all free times $t \ge 8 \ell (\aw + 2)$, the probability of being blocked is upper bounded as
			$\Pr(F_t = 0 \mid I_t) \le \frac{O(1)}{k \left(1- \frac{t}{n}- \cste \frac{\aw \ln k}{k}\right)^2},$
			for some constant $\cste$.
		\end{thm}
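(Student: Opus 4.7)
The plan is to union-bound over the two clauses of \eqref{eq:blocked} and derive a Chebyshev-type tail estimate for each, using Theorem~\ref{thm:tent} as the main probabilistic input. Writing $M := \sum_{t'<t} O_{t'}$ and $N := \sum_{t'\in\interval_{\text{last}}} O_{t'}$,
\[
\Pr(F_t=0\mid I_t)\;\le\;\Pr(M>k-1\mid I_t)\;+\;\Pr\!\bigl(N>\cstc\tfrac{\ell k}{n}-1 \,\bigm|\, I_t\bigr).
\]
For each term the strategy is (i) to bound the conditional expectation by splitting the occupation between free times and times inside the adversarial windows $\advW$, and (ii) to bound the variance using the pairwise version of Theorem~\ref{thm:tent}.

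For the expectation of $M$, the contribution from times in $\bigcup\advW$ is at most $\cstc\aw\ell k/n=O(\aw\ln k)$ \emph{deterministically}, because the outer constraint caps each adversarial window's occupation at $\cstc\ell k/n$. For the free contribution, apply Theorem~\ref{thm:tent} with $S=\{t\}$: for each free $t'<t$ we have $O_{t'}\le W_{t'}T_{t'}$, so averaging over $I_{t'}$ yields
\[
\E[O_{t'}\mid I_t]\;\le\;\frac{1}{|RO|-1}\sum_{i\ne I_t} w_i\,\psi(r_i)\;=\;O(k/n),
\]
where $\sum_i w_i\psi(r_i)=O(k)$ by a direct telescoping over the ranks using the piecewise definition of $\psi$ (the ranges $r<1$, $r\in[1,50]$, $r>50$ contribute $O(k)$, $O(1)$, and $o(1)$ respectively). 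Summing over the at most $t$ free times gives $\E[M\mid I_t]\le c_0 k(t/n+\aw\ln k/k)$ for a constant $c_0$, and picking $\cste$ slightly larger than $c_0$ yields a gap $k-1-\E[M\mid I_t]\ge \Omega(k(1-t/n-\cste\aw\ln k/k))$. An analogous computation localized to $\interval_{\text{last}}$ bounds $\E[N\mid I_t]$: when $\interval_{\text{last}}$ is free this expectation is $O(\ell k/n)\ll\cstc\ell k/n$, while if $\interval_{\text{last}}$ is adversarial then $N$ is deterministically at most $\cstc\ell k/n$ and a violation requires $N\in(\cstc\ell k/n-1,\cstc\ell k/n]$.

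The variance step is the technical heart. Applying Theorem~\ref{thm:tent} with $S=\{t,t''\}$ (respectively $\{t,t'\}$) gives, for every pair of distinct free $t',t''<t$, the pairwise bound
\[
\Pr(T_{t'}=T_{t''}=1\mid I_{t'},I_{t''},I_t)\;\le\;\min(\psi(R_{t'}),\psi(R_{t''})).
\]
Combined with $O_{t'}\in[0,1]$ and the negative correlation induced on the $W_{t'}$'s by sampling without replacement, I aim to show $\Var(M\mid I_t)=O(k)$: the diagonal contributes $\sum \E[O_{t'}^2]\le\sum \E[O_{t'}]=O(k)$, while the covariances almost cancel against $(\E M)^2$, leaving only an $O(k/n^2)$ residual per pair. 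Plugging this into Chebyshev's inequality delivers $\Pr(M>k-1\mid I_t)\le O(1)/(k(1-t/n-\cste\aw\ln k/k)^2)$, and a parallel argument on one window handles $N$, completing the proof. The main obstacle I anticipate is precisely this variance computation: the naive per-pair bound $\E[O_{t'}O_{t''}]=O((k/n)^2)$ already sums to $O(k^2)$ across the $\Theta(t^2)$ pairs, which is far too weak for Chebyshev, so one must genuinely exploit both the pairwise version of Theorem~\ref{thm:tent} and the negative correlation of sampling without replacement to extract enough cancellation against $\E[O_{t'}]\E[O_{t''}]$.
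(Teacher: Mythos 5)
Your overall decomposition (union bound over the two blocking constraints, then Chebyshev for the main budget) matches the paper's plan, and you correctly identify the central obstacle: the naive per-pair bound $\E[O_{t'}O_{t''}]=O((k/n)^2)$ summed over $\Theta(t^2)$ pairs gives $\Var=O((tk/n)^2)$, which is useless for Chebyshev near $t\approx n$. However, you stop at naming the obstacle; the proposal does not supply the mechanism that makes the cancellation go through, and in fact neither of the two ingredients you invoke suffices.

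For the cancellation $\E[O_{t'}O_{t''}]-\E[O_{t'}]\E[O_{t''}]=O(k/n^2)$ you would need both a sharp \emph{upper} bound on the product $\E[O'_{t'}O'_{t''}\mid I_t]\le(k/RO_n)^2(1+O(1/k))$ \emph{and} a matching \emph{lower} bound on each factor, $\E[O'_{t'}\mid I_t]\ge(k/RO_n)(1-O(1/k))$. Theorem~\ref{thm:tent} is one-sided: it upper-bounds the tentative selection probability via $\psi$ but gives nothing when $R_t<1$, and indeed no such lower bound holds for $O'_{t'}$ (the algorithm may legitimately pass on many good items if the LP is already saturated with better ones). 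The paper's key idea, which your proposal is missing, is to replace $O'_t$ by the dominating variable $\bar O'_t=W_t\bar T_t$ with $\bar T_t=\max\{T_t,\ones[R_t\le1]\}$: this is still an almost-sure upper bound on $O_t$, still satisfies the product upper bound of Lemma~\ref{lemma:prodMain} (since $\psi(\gamma)=1$ for $\gamma<1$, Theorem~\ref{thm:tent} is vacuous there), and now enjoys the deterministic lower bound $\E[\bar O'_{t'}\mid I_t]\ge(k-2)/RO_n$ because items of weighted rank at most $1$ have total weight at least $k-1$. Without this (or an equivalent trick), the second moment minus squared first moment does not collapse to $O(tk/n)$.

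There is a second gap in the outer-constraint term. You propose ``a parallel argument on one window,'' i.e.\ Chebyshev for $N=\sum_{t'\in\interval_{\text{last}}}O'_{t'}$. But $\E[N\mid I_t]=\Theta(\ell k/n)=\Theta(\ln k)$ and the threshold $\cstc\ell k/n=\cstc\ln k$ is only a constant factor away, so even with the optimistic variance $\Var(N)=O(\ln k)$ Chebyshev yields $O(\ln k/\ln^2 k)=O(1/\ln k)$, far short of the required $O(1/k)$. The paper instead bounds the $m$-th moment for $m=\frac{\ln k}{4}$ via Lemma~\ref{lemma:prodMain} together with the Rosenthal-type estimate of Lemma~\ref{lemma:momentBound}; applying Markov to $N^m$ then gives $(\cstc/(2e^2\cstb))^{-m}\le e^{-4m}=1/k$. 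A second-moment method alone cannot reach this tail, so the high-moment argument is essential there even if you fix the main-budget variance issue.

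Finally, a small point: $N$ is not deterministically at most $\cstc\ell k/n$ when $\interval_{\text{last}}$ is adversarial. The outer constraint is only checked \emph{before} a permanent selection, so within a window the occupation can overshoot the threshold by up to the size of one item; and in any case the relevant quantity bounded in Lemma~\ref{lemma:outer} is the \emph{tentative} occupation $O'$, which is not capped by the constraint at all. The deterministic cap you describe is used in the paper only to control the \emph{permanent} occupation of the adversarial windows when bounding the main budget, not the last window for the outer constraint.
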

		
	To prove this result, we will upper bound the probability that either of the two parts of \eqref{eq:blocked} is violated. This is done respectively in Lemmas~\ref{lemma:outer} and~\ref{lemma:cheby}; Theorem~\ref{thm:blocked} then follows by a union bound.	While the first part of \eqref{eq:blocked} only concerns the occupation from free time steps, the second part also includes non-free ones. To control this second part, we will nonetheless focus on the occupation over the free windows; for non-free windows $\interval$ the outer constraints guarantee $\sum_{t' \in \interval} O_{t'} \le O(\frac{\ell k}{n}) = O(\ln k)$, and so all the $\aw$ of these windows combined can consume only $O(\aw \ln k)$ of the budget (so, for example, in the important case $\aw = \sqrt{k}$ this is negligible).	For the free time steps, it suffices to upper-bound bound the (permanent) occupation $O_t$ by the \emph{tentative occupation} $O'_t := W_t T_t$: For the algorithm to select the item at time $t$, it is necessary but not sufficient that $T_t = 1$. Therefore, we have $O_t \le O'_t$ and we focus on controlling the $O'_t$'s from now on.
	
	As a start, we use  Theorem \ref{thm:tent} to show that in each free time step the expected tentative occupation $\E[O_t']$ is at most $\approx \frac{k}{n}$; thus, essentially both \eqref{eq:blocked} hold in expectation. While what we actually need is a generalization of this result, we present it to illustrate the techniques in a clearer way. 
	
	\begin{lemma}[UB tentative occupation] \label{lemma:tentOcc}
		For all free times $t \ge 8 \ell (\aw + 1)$, we have $\E\left[O'_t \right] \le \frac{k}{RO_n} \left(1 + O\left(\frac{1}{k}\right)\right).$
	\end{lemma}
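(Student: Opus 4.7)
The plan is to apply Theorem~\ref{thm:tent} pointwise and then evaluate the resulting sum over random-order items, exploiting that the marginal distribution of $I_t$ is uniform for any single random-order time $t$.

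First, take $S = \emptyset$ in Theorem~\ref{thm:tent}. Since we assumed all random-order items have distinct value densities, conditioning on $I_t$ is the same as conditioning on $R_t$, so the theorem gives $\Pr(T_t = 1 \mid I_t) \le \psi(R_t)$. Because $O'_t = W_t T_t$ and $W_t, R_t$ are functions of $I_t$, this yields
\[
\E[O'_t] = \E\!\left[ W_t\,\Pr(T_t = 1 \mid I_t)\right] \le \E[W_t \psi(R_t)].
\]
Next, since $t$ is a free time it is a random-order time, and by symmetry of sampling without replacement the marginal of $I_t$ is uniform over $\{1,\ldots,RO_n\}$. Hence
\[
\E[W_t \psi(R_t)] = \frac{1}{RO_n}\sum_{i=1}^{RO_n} w_i\, \psi(r_i).
\]

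The main step is to show $\sum_i w_i\,\psi(r_i) \le k\bigl(1 + O(1/k)\bigr)$, which we do by splitting along the breakpoints of $\psi$, using $r_{i+1}-r_i = w_i/k$ so that the $w_i$-weighted sums telescope. For the items with $r_i < 1$, $\psi(r_i)=1$; since each $w_i\le 1$, the contribution is at most $k r_{j^\ast+1} \le k + 1$, where $j^\ast$ is the largest index with $r_{j^\ast}<1$. For items with $r_i \in [1,50]$, $\psi(r_i)=2/k$, and since the block of such indices is contiguous with width at most $49 + 1/k$ in weighted rank, the contribution is at most $(2/k)\cdot k(49+1/k) = 98 + O(1/k)$. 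For items with $r_i > 50$, $\psi(r_i) = 4k e^{-r_i (\ln k)/20}$, and bounding the sum by the corresponding integral $\int_{50}^\infty 4k^2 e^{-r(\ln k)/20}\,dr$ gives a contribution of $O(k^{-1/2}/\ln k)$, which is negligible. Adding the three pieces yields $\sum_i w_i \psi(r_i) \le k + O(1)$, and combining with the previous display gives the claim.

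The only nontrivial part is the middle-regime estimate, where the naive bound $\sum w_i \le k \cdot (\text{length of interval})$ gives a constant $O(1)$ rather than $O(1/k)$ extra; this is exactly the slack hidden in the $(1+O(1/k))$ factor of the statement, so we need to be careful to account for these $O(1)$ contributions correctly but do not need any further probabilistic input. The exponential tail for $r_i > 50$ is comfortably enough to absorb the $k$ factors from $\psi$.
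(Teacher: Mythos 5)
Your proposal follows essentially the same route as the paper: apply Theorem~\ref{thm:tent} to reduce the bound to $\E[O'_t]\le \frac{1}{RO_n}\sum_i w_i\psi(r_i)$, and then control the latter sum by exploiting the telescoping $w_i = k(r_{i+1}-r_i)$ together with the piecewise structure of $\psi$. The difference is purely presentational: the paper converts the sum into an integral $\int_0^\infty\psi(x-\nicefrac{1}{k})\,dx$ (using the monotonicity of $\psi$ and the fact that the ranks change by at most $\nicefrac{1}{k}$ per step) and then delegates the three-regime evaluation to Lemma~\ref{lemma:integral} (stated for general $m$, since it is reused later for products); you inline the three-regime computation directly.

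One small technical slip you should fix: in the tail regime $r_i>50$, you write that ``bounding the sum by the corresponding integral $\int_{50}^\infty 4k^2 e^{-r(\ln k)/20}\,dr$'' gives the estimate. But $\sum_i (r_{i+1}-r_i)\,e^{-r_i\ln k/20}$ is a \emph{left} Riemann sum of a \emph{decreasing} function, hence is at least the integral, not at most. The correct way to get an upper bound is the $\nicefrac{1}{k}$ shift the paper uses: since $r_{i+1}-r_i\le \nicefrac{1}{k}$, we have $e^{-r_i\ln k/20}\le e^{-(r-1/k)\ln k/20}$ for all $r\in[r_i,r_{i+1}]$, so the sum is at most $4k^2\int_{50-1/k}^\infty e^{-(r-1/k)\ln k/20}\,dr$. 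Since $e^{(\ln k)/(20k)}=O(1)$, this only affects constants, so your asymptotic conclusion $O(k^{-1/2}/\ln k)$ and the final bound $\sum_i w_i\psi(r_i)\le k+O(1)$ are both correct. Still, as written the direction of the Riemann-sum comparison is backwards and should be corrected.
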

	
	\begin{proof}
		Since fixing $I_t$ fixes $W_t$, using Theorem \ref{thm:tent} we have 
		\begin{align*}
			\E\,O'_t &= \E\,W_t T_t = \E_{I_t}\big[W_t \cdot\E[T_t \mid I_t]  \big] = \E_{I_t}\big[ W_t \cdot \Pr(T_t = 1 \mid I_t)\big] \notag\\
			& \stackrel{T. \ref{thm:tent}}{\le} \E_{I_t}\big[ W_t \cdot \psi(R_t)\big] = \frac{1}{RO_n} \sum_i w_i\, \psi(r_i).
		\end{align*}	
 Since by definition of rank $r_j = \frac{1}{k} \sum_{j' < j} w_{j'}$, we have $r_{i+1} - r_i = \frac{w_i}{k}$, and thus $w_i = k \cdot \,\int_{r_i}^{r_{i + 1}} 1\,\dd x$. Applying this to the last displayed inequality we get 
		\begin{align}
			\E\,O'_t \le \frac{k}{RO_n} \sum_i \int_{r_i}^{r_{i+1}} \psi(r_i)\,\dd x. \label{eq:tentOcc1}
		\end{align}
		Since the item sizes are at most 1, we have $r_{i+1} \le r_i + \frac{1}{k}$ and so $x - \frac{1}{k} \le r_i$ for all $x \in [r_i, r_{i+1}]$. Thus, as the function $\psi$ in nonincreasing, the right-hand side of \eqref{eq:tentOcc1} is at most 
		\begin{align*}
			\frac{k}{RO_n} \sum_i \int_{r_i}^{r_{i+1}} \psi(x - \nicefrac{1}{k})\,\dd x \le \frac{k}{RO_n} \int_{0}^{\infty}\psi(x - \nicefrac{1}{k})\,\dd x.
		\end{align*}
		Finally, inspecting $\psi(x)$ we see that it takes value 1 for $x < 1$, takes value $\frac{2}{k}$ for $x \in [1,50]$, and has exponential decay $\le \frac{e^{-x}}{k}$ after that. Thus, it is easy to see that the integral on the right-hand side is at most $1 + O(\frac{1}{k})$ (see Lemma \ref{lemma:integral}). This concludes the proof. 	  
	\end{proof}

	However, what we actually need is to show that \eqref{eq:blocked} (with $O'_t$'s) holds with good probability; for that we need concentration inequalities for the sums of the tentative occupations $O'_t$'s. The biggest problem is that the tentative selections induced by the LP are correlated in a non-trivial way. In particular, it is not clear whether they are negatively associated: for example, if the items up to time $t-1$ are all ``very good'' the algorithm will not tentatively select at times $t$, $t + 1$, etc., indicating possibility of positive correlations on these times. Thus, the $O'_t$'s are also correlated and it is not clear how to apply standard concentrations inequalities.

%########################################################
%########################################################

	\subsection{Concentration I: controlling the outer constraint} \label{sec:concI}
	
	However, as the example above illustrates, we still have hopes of obtaining good \emph{upper bounds} on the probability of multiple tentative selections. In fact, the probability of multiple selection of items $I_{t_1},\ldots,I_{t_m}$ is at most the probability that the ``worst'' of these is items is selected; more precisely:
	
	\begin{lemma} \label{lemma:startConc}
		Consider $m \le \frac{\ln k}{4}$ random-order times $t_1,\ldots, t_{m} \ge 8\ell (\aw +1)$. Then
		\begin{align*}
			\Pr\left( T_{t_1} = \ldots = T_{t_m} = 1 ~\bigg|~ R_{t_1}, \ldots, R_{t_{m}} \right) \le \psi\left( \max_i R_{t_i}\right).
		\end{align*}
	\end{lemma}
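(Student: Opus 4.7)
The plan is to reduce the statement directly to Theorem~\ref{thm:tent} via the trivial but useful observation that if all $m$ items are tentatively selected then, in particular, the one among $\{t_1,\dots,t_m\}$ with the \emph{largest} weighted rank is tentatively selected. The key move is to identify this ``worst'' time as a deterministic function of the conditioning $(R_{t_1},\dots,R_{t_m})$, and to let the other $m-1$ times play the role of the set $S$ in Theorem~\ref{thm:tent}.

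Concretely, I would argue as follows. Since we assumed no ties in the value densities $v_i/w_i$, the weighted ranks of the $|RO|$ random-order items are pairwise distinct, so conditioning on the vector $(R_{t_1},\dots,R_{t_m})$ determines a unique index $i^\star \in [m]$ with $R_{t_{i^\star}}=\max_i R_{t_i}$. Write $t^\star := t_{i^\star}$ and $S := \{t_1,\dots,t_m\}\setminus\{t^\star\}$, so that $|S| = m-1 \le \frac{\ln k}{4}$. Moreover, because ranks uniquely identify random-order items, conditioning on $(R_{t_i})_{i\in[m]}$ is the same as conditioning on $R_{t^\star}$ together with the \emph{items} $(I_{t'})_{t'\in S}$ (the remaining ranks pin down these items).

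The rest is monotonicity of probability and a single application of Theorem~\ref{thm:tent}. I would write
\begin{align*}
\Pr\bigl(T_{t_1}=\cdots=T_{t_m}=1 \,\big|\, R_{t_1},\dots,R_{t_m}\bigr)
&\le \Pr\bigl(T_{t^\star}=1 \,\big|\, R_{t_1},\dots,R_{t_m}\bigr) \\
&= \Pr\bigl(T_{t^\star}=1 \,\big|\, R_{t^\star},\,(I_{t'})_{t'\in S}\bigr) \\
&\le \psi(R_{t^\star}) \;=\; \psi\bigl(\max_i R_{t_i}\bigr),
\end{align*}
where the last inequality invokes Theorem~\ref{thm:tent}, whose hypotheses $t^\star \ge 8\ell(\aw+1)$ and $|S| \le \frac{\ln k}{4}$ are exactly what is assumed in the lemma's statement. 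There is no genuine obstacle; the only thing to be mildly careful about is the equivalence between conditioning on the remaining ranks and conditioning on the actual items $(I_{t'})_{t'\in S}$, which hinges on the standing no-ties assumption on value densities stated in Section~\ref{sec:modelAlg}.
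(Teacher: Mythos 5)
Your proof is correct and takes essentially the same route as the paper: bound the joint probability by the probability that a single (worst‑rank) time is tentatively selected, then invoke Theorem~\ref{thm:tent} with the remaining times playing the role of $S$, using the no‑ties assumption so that conditioning on ranks is equivalent to conditioning on the corresponding items. The paper's version phrases the first step as $\Pr(T_{t_1}=\cdots=T_{t_m}=1\mid\cdot)\le\min_i\Pr(T_{t_i}=1\mid\cdot)$ and then uses monotonicity of $\psi$ to collapse $\min_i\psi(R_{t_i})$ to $\psi(\max_i R_{t_i})$, but this is the same argument; you simply make explicit the step (equivalence of the two conditionings, and $|S|=m-1\le\frac{\ln k}{4}$) that the paper leaves implicit.
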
	
	
	\begin{proof}
		The inequality follows from the fact $\Pr(X_1 = \ldots = X_m = 1 \mid E) \le \min_i \Pr(X_i = 1 \mid E)$, Theorem~\ref{thm:tent}, and $\min_i \psi(R_{t_i}) = \psi(\max_i R_{t_i})$ (by the monotonicity of $\psi$).
	\end{proof}
	
	The main advantage of this bound is that the ranks $R_{t_i}$ are ``almost'' independent (they would be independent if the input sequence was generated by sampling items \emph{with} replacement). Moreover, this lemma allows us to upper bound products of tentative occupation $\prod_i O'_{t_i}$: for this product to be strictly positive, all these items have to be tentatively selected. In fact, one can prove such upper bound using a similar strategy as in Lemma \ref{lemma:tentOcc}, with a main new element: a simple but general comparison for the expectation of a \emph{non-negative} function under sampling with and without replacement (Lemma \ref{lemma:samplingNonNeg}), that allow us to work with a decoupled (independent) version $\bar{R}_{t_1},\ldots,\bar{R}_{t_m}$ of the ranks. 
	
%Formally we have the following, which is proved in Appendix \ref{app:prodMain}.
		
	\begin{lemma}[Control of products] \label{lemma:prodMain} \label{LEMMA:PRODMAIN}
		Fix a random-order time $t$. Consider a set of $m \le \frac{\ln k}{4}$ distinct RO times $t_1,\ldots,t_m$, all of which are at least $8\ell (\aw +1)$ and less than $t$. Then there are constants $\csta, \cstb > 1$ such that 
	$\E\left[\prod_{i \in [m]} O'_{t_i} \growingmid I_t \right] \le \left(1 + \frac{\csta^m}{k}\right)\left(1 + \frac{4m^2}{RO_n}\right) \left(\frac{k}{RO_n}\right)^m \le \left(\cstb \frac{k}{n}\right)^m .$
	\constants{In particular, choosing $\csta = 500$ and $\cstb = 8 \csta$ is sufficient.}
	\end{lemma}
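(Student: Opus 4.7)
The plan is a three-step strategy: first reduce tentative selection indicators to $\psi$ via Lemma \ref{lemma:startConc}, then decouple the ranks via a with-versus-without-replacement comparison, and finally evaluate the resulting i.i.d.\ expectation through a single-variable integral analogous to the one in Lemma \ref{lemma:tentOcc}. For the first step, I would extend Lemma \ref{lemma:startConc} to also condition on $I_t$; its proof goes through verbatim because Theorem \ref{thm:tent} permits placing $t$ inside the auxiliary set $S = \{t, t_1, \ldots, t_m\} \setminus \{t_i\}$, which still has cardinality $m \le \ln k/4$. This yields
\[
\E\!\left[\prod_i O'_{t_i} \,\bigg|\, I_t\right] \le \E\!\left[\prod_i W_{t_i} \cdot \psi\!\left(\max_i R_{t_i}\right) \bigg|\, I_t\right].
\]
Since the integrand is a non-negative symmetric function of the samples $(W_{t_i}, R_{t_i})$ drawn without replacement from $RO \setminus \{I_t\}$, applying Lemma \ref{lemma:samplingNonNeg} lets us pass to i.i.d.\ samples $(\bar W_j, \bar R_j)$ at the cost of a $(1 + 4m^2/RO_n)$ factor.

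For the i.i.d.\ expectation, decompose by the argmax: $\psi(\max_j \bar R_j) \le \sum_j \mathbf{1}(\bar R_j \ge \bar R_i \,\forall i \ne j)\psi(\bar R_j)$ (each maximizer is counted). By symmetry of the i.i.d.\ samples this gives
\[
\E\!\left[\prod_j \bar W_j \psi(\max_j \bar R_j)\right] \le m \cdot \E\!\left[\bar W_1 \psi(\bar R_1) \prod_{j > 1} \bar W_j \mathbf{1}(\bar R_j \le \bar R_1)\right].
\]
Conditioning on $\bar I_1$ and using independence of the remaining samples, each of the $m-1$ inner factors equals $\tfrac{1}{RO_n}\sum_{i : r_i \le \bar R_1} w_i = \tfrac{k \bar R_1 + \bar W_1}{RO_n} \le \tfrac{k\bar R_1 + 1}{RO_n}$, because by the definition of weighted rank the items with $r_i \le \bar R_1$ have total weight exactly $k\bar R_1 + \bar W_1$. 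Plugging back and expanding the outer expectation via $w_i = k \int_{r_i}^{r_{i+1}} dx$ together with $r_i \ge x - 1/k$ and $r_i + 1/k \le x + 1/k$ on each subinterval (as in Lemma \ref{lemma:tentOcc}), we reduce everything to
\[
\E\!\left[\prod_j \bar W_j \psi(\max_j \bar R_j)\right] \le \frac{m k^m}{RO_n^m} \int_0^\infty \psi(x - 1/k)(x + 1/k)^{m-1} \, dx.
\]

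Bounding this integral by $(1 + \csta^m/k)/m$ is the final task. Splitting along the three regimes of $\psi$: the constant-$1$ piece on $[0, 1+1/k]$ contributes $((1+2/k)^m - (1/k)^m)/m \le (1 + 4m/k)/m$; the $2/k$-plateau on $[1+1/k, 50+1/k]$ contributes at most $2 \cdot 51^m/(km)$; and the exponential tail, after the substitution $v = (x - 1/k)\ln k/20$, reduces to $(20/\ln k)^m$ times $\int_{5\ln k/2}^\infty e^{-v}v^{m-1}\,dv$, which by the standard tail bound $\int_A^\infty e^{-v} v^{m-1}\,dv \le 2 A^{m-1}e^{-A}$ becomes of order $50^m/k^{3/2}$ for $m \le \ln k/4$---negligible. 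Summing, $m \cdot \int \le 1 + (4m + 2 \cdot 51^m + o(1))/k \le 1 + \csta^m/k$ for any $\csta \ge 110$ (so certainly for $\csta = 500$). Combining with the decoupling factor yields the first stated bound. The second form follows from $RO_n \ge n/2$ (Observation \ref{obs:free}) together with a case analysis on whether $\csta^m$ is larger or smaller than $k$, which absorbs $(1+4m^2/RO_n)(1+\csta^m/k)\cdot 2^m$ into $(8\csta)^m$, so $\cstb = 8\csta$ suffices. The main technical obstacle is the exponential-tail estimate: one must ensure that the $\ln k/20$ decay rate in the definition of $\psi$ outpaces the growing $x^{m-1}$ factor uniformly for $m$ up to $\ln k/4$, which the Gamma substitution makes transparent because for $m \ll \ln k$ the incomplete Gamma integral is dominated by its peak at $v = m-1 \ll 5 \ln k/2$.
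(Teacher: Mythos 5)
Your proof is correct, and the key reduction step differs meaningfully from the paper's. Both proofs start the same way: extend Lemma~\ref{lemma:startConc} to include conditioning on $I_t$ (which is fine since Theorem~\ref{thm:tent} tolerates the slightly larger auxiliary set), then decouple via Lemma~\ref{lemma:samplingNonNeg}. After that, the paper rewrites the resulting $m$-fold sum as an integral over $[0,n]^m$ using the box tiling $B(j_1,\ldots,j_m)=\prod_i[r_{j_i},r_{j_i+1}]$, exploits $\psi$'s monotonicity together with the $\tfrac{1}{k}$-box-side bound, and then evaluates the $m$-dimensional integral in Lemma~\ref{lemma:integral} by recognizing it as $n^m\,\E[\psi(Z-1/k)]$ for $Z=\max_i U_i$ with $U_i$ i.i.d.\ uniform, using the explicit density of the max. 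You instead apply the argmax decomposition $\psi(\max_j \bar R_j)\le\sum_j\mathbf{1}(\bar R_j\text{ is max})\,\psi(\bar R_j)$ at the outset, invoke exchangeability and conditional independence (available after decoupling) to collapse the product over $j>1$ into $\bigl(\tfrac{k\bar R_1+\bar W_1}{RO_n}\bigr)^{m-1}$, and then land directly on a one-dimensional integral $\int_0^\infty\psi(x-1/k)(x+1/k)^{m-1}\,dx$. Both routes arrive at essentially the same one-dimensional integral with the $z^{m-1}$ weight; the paper packages the tiling and the density-of-the-max calculation into two separate steps, while your argmax/conditional-independence argument bypasses the $m$-dimensional tiling entirely. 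Your version is arguably more elementary (no geometric tiling argument) at the cost of a small overcount when the maximum is not unique, which is harmless since everything is nonnegative.

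Two small presentation points, neither affecting correctness. First, the decoupled samples $(\bar W_j,\bar R_j)$ produced by Lemma~\ref{lemma:samplingNonNeg} are i.i.d.\ uniform over the \emph{full} index set $RO_n$, not over $RO_n\setminus\{I_t\}$; your subsequent computation (the $\tfrac{1}{RO_n}\sum_i\cdots$ form) is consistent with this, so it is only the prose that is slightly off. Second, in the exponential-tail regime, the substitution gives $(x+1/k)^{m-1}=(\tfrac{20v}{\ln k}+\tfrac{2}{k})^{m-1}$, so there is an extra $(1+O(1/k))^{m-1}$ factor before you reach $(\tfrac{20}{\ln k})^{m-1}v^{m-1}$; this is a bounded constant-factor change that shifts the effective base from $50$ to roughly $52$, still comfortably absorbed into $\csta=110$ (let alone $500$). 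The incomplete-gamma tail bound you quote needs $A\ge 2(m-1)$, and indeed $A=\tfrac{5}{2}\ln k\gg\tfrac{\ln k}{2}\ge 2(m-1)$, so that hypothesis is met.
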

	
		\begin{proof}
		(We use the notation $r(I)$ instead of $r_I$ to denote the rank of item $I$.)	First notice that the product of tentative occupations has value $\prod_i W_{t_i}$ if all items at times $t_1,\ldots,t_m$ are tentatively selected, and 0 otherwise. Since conditioning on the items at these times fixes their weight, we have 
		\begin{align*}
		\E\bigg[\prod_i O'_{t_i} \,\bigg|\, I_{t_1},\ldots,I_{t_m},I_t \bigg] = \bigg(\prod_i w_{I_{t_i}}\bigg)\cdot  \Pr(T_{t_1} = \ldots = T_{t_m} = 1 \mid I_{t_1}, \ldots, I_{t_m},I_t).
		\end{align*}
	From from Lemma~\ref{lemma:startConc} the last term is at most $\psi(\max_i r(I_{t_i}))$. Employing this bound and taking expectation with respect to the items at times $t_1,\ldots,t_m$ we obtain 
		\begin{align*}
		\E\bigg[\prod_i O'_{t_i} \,\bigg|\, I_t \bigg] \le \E\bigg[\bigg(\prod_i w_{I_{t_t}}\bigg)\cdot \psi\left(\max_i r(I_{t_i}) \right) \,\bigg|\,  I_t \bigg].
		\end{align*}
		In order to continue upper bounding the right-hand side, it will be convenient to pass to the decoupled version of $(I_{t_i})_i$. That is, let $(\bar{I}_{t_i})_i$ be a sequence of independent random variables, each uniformly distributed in $RO_n$. Then using the comparison bound between sampling with and without replacement from Lemma~\ref{lemma:samplingNonNeg} (note that if $t$ is not a random-order time we can just ignore the conditioning on $I_t$) the right-hand side is at most 
		\begin{align}
		\left(1 + \frac{4m^2}{RO_n}\right) \cdot \E\bigg[\bigg(\prod_i w_{\bar{I}_{t_t}}\bigg)\cdot \psi\left(\max_i r(\bar{I}_{t_i})\right)\bigg] = \left(1 + \frac{4m^2}{RO_n}\right) \cdot \frac{1}{RO_n^m} \sum_{j_1,\ldots,j_m} \underbrace{\left(\prod_i w_{j_i} \right) \psi(\max_i r_{j_i})}_A, \label{eq:mom1}
		\end{align} 
		where in the last sum each index $j_i$ ranges over $RO_n$. We need to better understand the term $A$. 
		
		Again from the definition of weighted rank, we have $w_j = k \cdot \int_{r_j}^{r_{j + 1}} 1\,\dd x$. So defining the $m$-dimensional box $B(j_1,\ldots,j_m) := \prod_i [r_{j_i}, r_{j_{i+1}}]$, we have 
		\begin{align*}
			\prod_i w_{j_i} = k^m \cdot \int_{B(j_1,\ldots,j_m)} 1 \,\dd x.
		\end{align*}
		Then the term $A$ in \eqref{eq:mom1} equals 
		\begin{align*}
			A ~=~ k^m \cdot \int_{B(j_1,\ldots,j_m)} \psi(\max_i r_{j_i}) \,(\dd x)^m.
		\end{align*}	
		Now notice the adjacent weighted ranks $r_{j}, r_{j+1}$ differ by at most $\frac{1}{k}$ (recall that their definition has a factor $\frac{1}{k}$). So the sides of the box $B(j_1,\ldots,j_m)$ are at most $\frac{1}{k}$, which implies that every point in $B(j_1,\ldots,j_m) - \frac{\ones}{k}$ is pointwise at most the beginning of the box, namely  $(r_{j_1},\ldots,r_{j_m})$. Since $\psi$ is non-increasing, for all $x \in B(j_1,\ldots,j_m) - \frac{\ones}{k}$ we then have $\psi(\max_i x_i) \ge \psi(\max_i r_{j_i})$. Thus, we can upper bound $A$ as 
		\begin{align*}
			A \le k^m \cdot \int_{B(j_1,\ldots,j_m)} \psi(\max_i x_i - \nicefrac{1}{k}) \,\dd x.
		\end{align*}
		Adding this over all the boxes (which tile a subset of $[0,n]^m$, since the largest weighted rank $r_{RO_n}$ is at most $\frac{RO_n}{k} \le n$), we obtain 
		\begin{align*}
			\textrm{LHS of \eqref{eq:mom1}} \le  \left(1 + \frac{4m^2}{RO_n}\right) \cdot \left(\frac{k}{RO_n}\right)^m \cdot \int_{[0,n]^m} \psi(\max_i x_i - \nicefrac{1}{k}) \,\dd x.
		\end{align*}		   		
		Finally, as in the end of the proof of Lemma \ref{lemma:tentOcc}, using the fast decay of $\psi$ it can be shown that the integral on the right-hand side is at most $1 + \frac{\csta^m}{k}$ for some constant $\csta \ge 1$ (Lemma \ref{lemma:integral}). So putting the above bounds together gives
	\begin{align*}
	\E\bigg[\prod_i O'_{t_i} \,\bigg|\, I_t \bigg] \le \left(1 + \frac{4m^2}{RO_n}\right) \cdot \left(\frac{k}{RO_n}\right)^m \cdot \left(1 + \frac{\csta^m}{k}\right),
	\end{align*}
	proving the first inequality of the lemma.
		
	To prove the second inequality, we use the following estimates:
	
	\begin{itemize}
		\item $1 + \frac{\csta^m}{k} \le 2 \csta^m$
		\item $1 + \frac{4 m^2}{RO_n} \le \constants{2}$, using Observation \ref{obs:free} and $m \le \ln k$ and $k \ll n$
		\item $(\frac{k}{RO_n})^m \le (\frac{2k}{n})^m$, using Observation \ref{obs:free}.
	\end{itemize}
	This gives us
	\[
	\left(1 + \frac{\csta^m}{k}\right)\left(1 + \frac{4m^2}{RO_n}\right) \left(\frac{k}{RO_n}\right)^m \le \left(\cstb \frac{k}{n}\right)^m \leq 2 \csta^m \cdot 2 \cdot \left(\frac{2k}{n}\right)^m \leq \left( 8 \csta \frac{k}{n} \right)^m.
	\]
	This concludes the proof of Lemma \ref{lemma:prodMain}. 
	\end{proof}
	
	Finally, such product estimates can be converted into raw moments/tail inequalities using reasonably standard estimates (e.g., Section 3.4 of~\cite{dubhashiBook}). In particular, these ideas together with a sharp Rosenthal-type inequality gives following multiplicative Chernoff bound for dependent random variables (proved in Appendix \ref{app:momentBound}).
	
	\begin{lemma} \label{lemma:momentBound} \label{LEMMA:MOMENTBOUND}
		Consider arbitrary random variables $X_1,\ldots,X_n \in [0,1]$, and an integer $m \ge 2$. Let $p \in [0,1]$ be such that for all sets $A \subseteq [n]$ of size at most $m$ we have $\E \prod_{i \in A} X_i \le p^{|A|}$. If $m \le np$, then
		$
			\E \bigg(\sum_{i \in [n]} X_i\bigg)^m \le (2e^2 np)^m.
		$
	Thus, applying Markov's inequality to $(\sum_i X_i)^m$ we have $\Pr(\sum_i X_i \ge \alpha np) \le \left(\frac{2e^2}{\alpha} \right)^m$ for all $\alpha > 0$.  
	\end{lemma}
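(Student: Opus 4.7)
The plan is to bound the $m$-th moment $\E\left(\sum_i X_i\right)^m$ via the standard symmetric-moment expansion and then read off the tail inequality by Markov. Expanding,
\[
\E\left(\sum_{i\in[n]} X_i\right)^m \;=\; \sum_{(i_1,\ldots,i_m)\in [n]^m} \E\left[X_{i_1}\cdots X_{i_m}\right],
\]
and since each $X_j\in[0,1]$ we have $X_j^r\le X_j$ for every $r\ge 1$. Hence, if $A$ denotes the set of distinct indices appearing in the tuple $(i_1,\ldots,i_m)$, each summand is bounded by $\E\prod_{j\in A}X_j$, which by the hypothesis (applicable since $|A|\le m$) is at most $p^{|A|}$. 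The termwise estimate thus depends only on $k:=|A|$.

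Next I would group the tuples by their distinct-support size $k$. For a fixed $k$-subset $A\subseteq [n]$, the number of $m$-tuples in $[n]^m$ with underlying set equal to $A$ is the number of surjections $[m]\to A$, namely $k!\,S(m,k)$, where $S(m,k)$ is the Stirling number of the second kind. Summing over the $\binom{n}{k}$ subsets and using $\binom{n}{k}k!\le n^k$,
\[
\E\left(\sum_i X_i\right)^m \;\le\; \sum_{k=1}^{m} \binom{n}{k}k!\,S(m,k)\,p^k \;\le\; \sum_{k=1}^{m} S(m,k)\,(np)^k.
\]
I would then invoke the elementary Stirling estimate $S(m,k)\le \binom{m}{k}k^{m-k}\le \binom{m}{k}m^{m-k}$, at which point the sum folds exactly into a binomial expansion:
\[
\sum_{k=1}^{m} S(m,k)(np)^k \;\le\; \sum_{k=0}^{m}\binom{m}{k}m^{m-k}(np)^k \;=\; (m+np)^m.
\]
The hypothesis $m\le np$ then yields $(m+np)^m\le (2np)^m\le (2e^2np)^m$, which is the claimed moment bound. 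Markov's inequality applied to $\left(\sum_i X_i\right)^m$ immediately delivers $\Pr\left(\sum_i X_i\ge \alpha np\right)\le (2e^2/\alpha)^m$ for every $\alpha>0$.

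The only delicate step is the choice of Stirling-number bound. Crude estimates like $\sum_k S(m,k)\le B_m$ (the $m$-th Bell number) are too lossy, since $B_m$ grows like $(m/\log m)^m$ and would destroy the exponential-in-$m$ control. The bound $S(m,k)\le \binom{m}{k}k^{m-k}$ is tailor-made here because after replacing $k^{m-k}$ by $m^{m-k}$ it folds exactly against the binomial theorem, losing only a factor of $2^m$ once $m\le np$; the target constant $2e^2$ carries ample slack, so no genuine Rosenthal-type machinery is actually needed for the stated form.
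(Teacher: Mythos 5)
Your proof is correct, and it takes a genuinely different route from the paper's. You both begin with the same reduction: the termwise estimate $\E[X_{i_1}\cdots X_{i_m}]\le p^{|A|}$, which amounts to dominating the $m$-th moment by that of i.i.d.\ Bernoulli$(p)$ variables. But from there the paths diverge. The paper finishes by citing a sharp Rosenthal-type inequality from de la Pe\~na and Gin\'e (their Theorem~1.5.2 and Lemma~1.5.8) for the Bernoulli sum, then checks that the $\max$ in that bound is dominated by $e\cdot np$ whenever $2\le m\le np$. You instead carry out the combinatorics explicitly: group $m$-tuples by the size $k$ of their support, count surjections via $k!\,S(m,k)$, use $\binom{n}{k}k!\le n^k$, invoke the Stirling-number bound $S(m,k)\le\binom{m}{k}k^{m-k}\le\binom{m}{k}m^{m-k}$, and fold the resulting sum into $(m+np)^m$ by the binomial theorem, after which $m\le np$ yields $(2np)^m$. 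Each step checks out. What your route buys is self-containedness and a slightly stronger constant ($2np$ rather than $2e^2np$); what the paper's route buys is brevity by offloading the combinatorics to a known moment inequality. Either way the Markov step at the end is identical.
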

	
	With this we can finally obtain the desired control of the outer constraint's occupation. 
	
	\begin{lemma}[Control of outer constraints] \label{lemma:outer}
		Consider a free time $t \ge 8 \ell (\aw + 1)$, and let $\interval$ be the last window in $\cInt_{t-1}$. Then
		\vspace{-4pt}
		$
			\Pr\bigg(\sum_{t' \in \interval} O'_{t'} > \cstc \ell \frac{k}{n} ~\bigg|~ I_t \bigg) \le \frac{1}{k},
		$
		where $\cstc \constants{\geq} 2e^6 \cstb$, and $\cstb$ is the constant from Lemma~\ref{lemma:prodMain}. %\constants{So $\cstc = 10^7$ suffices.}\mnote{Maybe remove this one, looks too big}
		%		With probability at least $\frac{1}{k}$, for all recovered windows $\interval$ we have $\sum_{t \in \interval} O'_t < 30 \ell \frac{k}{n}$.
	\end{lemma}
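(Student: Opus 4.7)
The plan is to combine the product–moment estimate from Lemma~\ref{lemma:prodMain} with the moment‐based Chernoff bound from Lemma~\ref{lemma:momentBound}, treating the window $\interval$ as a collection of at most $\ell$ random‐order slots.

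First I would set up the parameters carefully. The window $\interval$ has cardinality at most $\ell = \frac{n \ln k}{k}$, and for each $t' \in \interval$ we have $O'_{t'} = W_{t'} T_{t'} \in [0,1]$. Since $t$ is free and $t \ge 8\ell(\aw+1)$, $\interval$ lies entirely within free/random-order times at times $\ge t - \ell \gtrsim 8\ell(\aw+1)$, so every $t' \in \interval$ meets the ``at least $8\ell(\aw+1)$ and less than $t$'' hypothesis of Lemma~\ref{lemma:prodMain}. I will choose the moment order to be $m := \lceil \frac{\ln k}{4} \rceil$, which is exactly the cap under which the product estimate of Lemma~\ref{lemma:prodMain} is valid.

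Next, I would invoke Lemma~\ref{lemma:prodMain} to see that for every $A \subseteq \interval$ with $|A| \le m$,
\[
\Ex{\prod_{t' \in A} O'_{t'} \,\bigg|\, I_t} \le \left(\cstb \tfrac{k}{n}\right)^{|A|}.
\]
This exactly matches the hypothesis of Lemma~\ref{lemma:momentBound} with $p := \cstb \frac{k}{n}$ and index set of size $|\interval| \le \ell$. The prerequisite $m \le |\interval| \cdot p$ reduces to $m \le \cstb \ln k$, which is satisfied for $\cstb \ge 1$. Lemma~\ref{lemma:momentBound} (the Markov-type tail in the second conclusion) then yields, for every $\alpha > 0$,
\[
\Pr\!\left(\sum_{t' \in \interval} O'_{t'} \ge \alpha \cdot \ell \cdot \cstb \tfrac{k}{n} \;\bigg|\; I_t\right) \le \left(\frac{2 e^2}{\alpha}\right)^{m}.
\]

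To finish, I would pick $\alpha := \cstc / \cstb$ so that the threshold becomes $\cstc \ell \tfrac{k}{n}$. The requirement $\cstc \ge 2 e^6 \cstb$ then gives $\alpha \ge 2 e^6$, hence $\frac{2 e^2}{\alpha} \le e^{-4}$, and therefore
\[
\left(\tfrac{2 e^2}{\alpha}\right)^m \le e^{-4 m} \le e^{-\ln k} = \tfrac{1}{k},
\]
since $m \ge \frac{\ln k}{4}$. This yields the desired bound.

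The main obstacle I anticipate is the bookkeeping around the window $\interval$: strictly speaking, $\interval$ is either the partial window currently being built (which is automatically free, because it contains $t$'s predecessors inside $t$'s free window) or the previous full window (which requires the lower bound $t \ge 8\ell(\aw+1)$ together with the free assumption to keep all of $\interval$'s times $\ge 8\ell(\aw+1)$ and free). Verifying both sub-cases so that every $t' \in \interval$ cleanly satisfies the prerequisites of Lemma~\ref{lemma:prodMain}---and that no adversarial slots leak into the moment computation---is the only delicate part; the concentration calculation itself is then essentially a plug-in of the two earlier lemmas.
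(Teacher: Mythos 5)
Your proof follows exactly the paper's approach: invoke Lemma~\ref{lemma:prodMain} to get the product--moment hypothesis $\E[\prod_{t'\in A} O'_{t'} \mid I_t] \le p^{|A|}$ with $p = \cstb\frac{k}{n}$, plug it into the moment-based Chernoff bound of Lemma~\ref{lemma:momentBound} with $m = \Theta(\ln k)$, and then use $\cstc \ge 2e^6\cstb$ to make $2e^2/\alpha \le e^{-4}$ so the tail is at most $k^{-1}$. The one step the paper makes explicit and you do not is how to handle a partial window: your prerequisite check ``$m \le |\interval|\cdot p$ reduces to $m \le \cstb\ln k$'' silently substitutes $|\interval| = \ell$, but $\interval$ is the last (truncated) window in $\cInt_{t-1}$ and may have $|\interval| \ll \ell$, in which case the hypothesis $m \le np$ of Lemma~\ref{lemma:momentBound} can fail. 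The paper sidesteps this by passing to the full window $\bar{\interval}\supseteq\interval$ of size exactly $\ell$ and bounding the larger sum $\sum_{t'\in\bar{\interval}} O'_{t'}$; you would need to do the same (or give a separate argument for short $\interval$).
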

	
	\begin{proof}
 Since $\interval$ may have size less than $\ell$, let $\bar{\interval}$ be the full window in $\cInt$ that contains  $\interval$, of size exactly $\ell$; it suffices to upper bound the probability that $\sum_{t'\in \bar{\interval}} O'_{t'} > \cstc \ell \frac{k}{n}$. Let $p = \cstb\frac{k}{n}$ and $m = \frac{\ln k}{4}$. Applying Lemma \ref{lemma:prodMain} and Lemma \ref{lemma:momentBound} to $(\sum_{t' \in \bar{\interval}} O'_{t_i})|_{I_t}$, we have 
 \begin{align*}
 	\Pr\bigg(\sum_{t' \in \bar{\interval}} O'_{t_i} > \cstc \ell \frac{k}{n} ~\bigg|~ I_t\bigg) \le \left(\frac{2e^2}{\cstc/\cstb}\right)^m \leq \left(\frac{1}{e^4}\right)^{\frac{\ln k}{4}} = \frac{1}{k}.
 	\end{align*}
	\qedhere
	\end{proof}
	\vspace{-10pt}
%###########################################################
%###########################################################

	\subsection{Concentration II: control of main budget}	
		
	In order to obtain Theorem \ref{thm:blocked} we need to show that the second part of \eqref{eq:blocked} holds with reasonable probability \emph{even when $t \approx n$}; but since $\E O'_{t} \approx \frac{k}{n}$, the expected cumulative occupation by the end of the game $\E[\sum_{t' = 1}^t O'_{t'}]$ is $\approx k$ for $t \approx n$, so we do not have much room. So unlike the previous section, we are interested in ```medium deviations'', where the variance is the right quantity to look at. While Lemma \ref{lemma:prodMain} directly gives that the cumulative variance until time $t$ is $\lesssim (\frac{tk}{n})^2$, we actually need an upper bound of $O(\frac{tk}{n})$, which is what one would expect from independent Bernoulli's with success probability $\frac{k}{n}$. Since 
	\begin{align}
		\Var(Z) = \E Z^2 - (\E Z)^2, \label{eq:varGen}
	\end{align}
	to obtain variance upper bounds we will obtain an upper bound on the second raw moment and a \emph{lower bound on the expectation}. 
	
	In order to simplify obtaining the sharp lower bound on the expectation required, we instead work with $\bar{O}'_t = W_t \bar{T}_t$, where $\bar{T}_t := \max\{T_t, \ones[R_t \le 1]\}$, that is $\bar{T}_t$ equals 1 if either $T_t = 1$ or the weighted rank $R_t$ is at most 1. Notice that in every scenario $\bar{O}'_t$ upper bounds $O'_t$, and thus it suffices to show that the second part of \eqref{eq:blocked} holds for the $\bar{O}'_t$'s. An important observation is that Lemma \ref{lemma:prodMain} still holds for the $\bar{O}'_t$'s: this is because the properties of the $O'_t$'s are only used through Lemma \ref{lemma:startConc}, which is a direct consequence of Theorem \ref{thm:tent}, and the latter holds for the $\bar{O}'_t$'s because the upper bound ``gives up'' anyway when $R_t \le 1$ (i.e., $\psi(R_t) = 1$ when $R_t \le 1$). This then implies the following first step for using \eqref{eq:varGen} to control the variance of the $\bar{O}'_t$'s. 
	
	\begin{lemma} \label{lemma:startVar}
		Fix a time $t$ and, based on Lemma \ref{lemma:prodMain}, let $S = RO_{t-1} \setminus [8 \ell (\aw + 1)]$. Then $\E\left[\bigg(\sum_{t' \in S} \bar{O}'_{t'}\bigg)^2 \growingmid I_t \right] \le \left[\left(\frac{|S|\cdot k}{RO_n}\right)^2  + \frac{|S| \cdot k}{RO_n}\right] \left(1 + O\left(\frac{1}{k}\right)\right).$
	\end{lemma}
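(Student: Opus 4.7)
The plan is to expand the square into diagonal and off-diagonal contributions, and bound each using Lemma~\ref{lemma:prodMain} with $m=1$ and $m=2$ respectively. As the authors noted, Lemma~\ref{lemma:prodMain} carries over verbatim to the $\bar{O}'_{t'}$'s, since its only input is Lemma~\ref{lemma:startConc}, which is derived from Theorem~\ref{thm:tent}, and the upper bound $\psi(R_{t'}) = 1$ whenever $R_{t'} \le 1$ absorbs the modification $\bar{T}_{t'} = \max\{T_{t'}, \ones[R_{t'} \le 1]\}$.

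First I would write
\begin{align*}
\E\!\left[\Big(\sum_{t' \in S} \bar{O}'_{t'}\Big)^{\!2} \,\Big|\, I_t \right] \;=\; \sum_{t' \in S} \E\!\left[(\bar{O}'_{t'})^2 \,\Big|\, I_t\right] \;+\; \sum_{\substack{t',t'' \in S \\ t' \neq t''}} \E\!\left[\bar{O}'_{t'}\,\bar{O}'_{t''} \,\Big|\, I_t\right].
\end{align*}
For the diagonal terms I would use $\bar{O}'_{t'} = W_{t'}\bar{T}_{t'} \in [0,1]$, so $(\bar{O}'_{t'})^2 \le \bar{O}'_{t'}$, and then apply Lemma~\ref{lemma:prodMain} with $m=1$ (and the single index $t' \in S$, which satisfies $t' \ge 8\ell(\aw+1)$ by construction of $S$) to conclude $\E[(\bar{O}'_{t'})^2 \mid I_t] \le \tfrac{k}{RO_n}\,(1+O(1/k))$. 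Summing over $|S|$ diagonal terms yields a contribution of at most $\tfrac{|S|\cdot k}{RO_n}\,(1+O(1/k))$.

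For the off-diagonal terms I would apply Lemma~\ref{lemma:prodMain} with $m=2$ to each ordered pair of distinct indices, obtaining
\[
\E[\bar{O}'_{t'}\bar{O}'_{t''} \mid I_t] \;\le\; \Big(1 + \tfrac{\csta^2}{k}\Big)\Big(1 + \tfrac{16}{RO_n}\Big)\Big(\tfrac{k}{RO_n}\Big)^{\!2} \;=\; \Big(\tfrac{k}{RO_n}\Big)^{\!2}\,(1+O(1/k)).
\]
There are at most $|S|(|S|-1) \le |S|^2$ such ordered pairs, so the total off-diagonal contribution is at most $\big(\tfrac{|S|\cdot k}{RO_n}\big)^2(1+O(1/k))$. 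Adding the two contributions gives exactly the stated bound.

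I do not anticipate a real obstacle: this is a routine second-moment expansion once Lemma~\ref{lemma:prodMain} is in hand. The only delicate point is verifying that the $(1+O(1/k))$ factors coming from Lemma~\ref{lemma:prodMain} at $m=1$ and $m=2$ genuinely combine into a single $(1+O(1/k))$ outside the bracket, which uses only that $|S| \le RO_n$ and $k$ is sufficiently large so that the error terms $\tfrac{\csta^2}{k}$ and $\tfrac{16}{RO_n}$ are absorbed; and that the constraint $m \le \tfrac{\ln k}{4}$ of Lemma~\ref{lemma:prodMain} is satisfied for $m \in \{1,2\}$, which is automatic under the paper's implicit assumption that $k$ is large enough.
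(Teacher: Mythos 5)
Your proof is correct and takes essentially the same approach as the paper: expand the square into off-diagonal and diagonal parts, bound $(\bar{O}'_{t'})^2 \le \bar{O}'_{t'}$ using $\bar{O}'_{t'} \in [0,1]$, and apply Lemma~\ref{lemma:prodMain} with $m=2$ and $m=1$ respectively. The paper states this in one line and leaves the bookkeeping implicit; you spell out the absorption of the error factors, but there is no substantive difference.
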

	
	\begin{proof}
		Using the assumptions that $RO_n \ge \frac{n}{2} \ge k$, notice that when $m = 2$ the first bound from Lemma~\ref{lemma:prodMain} is $(\frac{k}{RO_n})^2 (1 + O(\frac{1}{k}))$, and similarly for $m = 1$. The result the follows by applying this lemma to the right-hand side of 
		\begin{align*}
			\E\left[\bigg(\sum_{t' \in S} \bar{O}'_{t'}\bigg)^2~\growingmid~ I_t\right] &\le \sum_{t',t'' \in S,\,t' \neq t''} \E[\bar{O}'_{t'} \bar{O}'_{t''} \mid I_t] + \sum_{t' \in S} \E[\bar{O}'_{t'} \mid I_t]\,.\qedhere
		\end{align*}
	\end{proof}

	But the good thing is that by passing from $O'_t$ to $\bar{O}'_t$ we easily get a strong lower bound on the expectation.

%	Now we need a lower bound on the expectation $\E \sum_{t' \in S} O'_{t'}$. The first idea is to use the lower bound from Lemma \ref{lemma:LBtentSel} that will come in the sequel. However, this is not enough: it will give us at best a multiplicative error of $(1- \frac{1}{\sqrt{k}})$, while we can only tolerate an error of $(1- \frac{1}{k})$. 

	\begin{lemma}
		Fix a time $t$ and let $S = RO_{t-1} \setminus [8 \ell (\aw + 1)]$. Then:
		$\E\left[\sum_{t' \in S} \bar{O}'_{t'} \growingmid I_t \right] \ge \frac{|S| \cdot k}{RO_n} \left(1 - \frac{2}{k}\right).$
	\end{lemma}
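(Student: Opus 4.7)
The plan is to exploit the very feature for which $\bar{T}_{t'}$ was introduced: by construction $\bar{T}_{t'} \ge \ones[R_{t'} \le 1]$, so $\bar{O}'_{t'} = W_{t'}\bar{T}_{t'} \ge W_{t'}\ones[R_{t'} \le 1]$. This replaces the delicate random variable $\bar{O}'_{t'}$ by one that depends only on the item $I_{t'}$, allowing us to compute the expectation directly via the sampling-without-replacement structure of the random-order times.

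First I would establish the basic weight-mass bound $\sum_{i : r_i \le 1} w_i \ge k$. This is immediate from the definition $r_i = \frac{1}{k}\sum_{i' < i} w_{i'}$: letting $j^\star = \max\{i : r_i \le 1\}$, either $j^\star < |RO|$, in which case $r_{j^\star + 1} > 1$ gives $\sum_{i \le j^\star} w_i > k$, or $j^\star = |RO|$, in which case our standing assumption that the total item size is at least $k$ yields $\sum_{i \le |RO|} w_i = k\, r_{|RO|+1} \ge k$.

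Next I would compute $\E[W_{t'}\ones[R_{t'} \le 1] \mid I_t]$ for each $t' \in S \subseteq RO_{t-1}$. Since these are random-order times, $I_{t'}$ is uniform on $RO_n$ if $t \notin RO$, in which case the expectation is $\frac{1}{RO_n}\sum_j w_j \ones[r_j \le 1] \ge \frac{k}{RO_n}$. If instead $t \in RO$, then $I_{t'}$ is uniform on $RO_n \setminus \{I_t\}$; conditioning removes at most one term of weight $\le 1$ from the sum $\ge k$, so the expectation is at least $\frac{k - 1}{RO_n - 1} \ge \frac{k - 1}{RO_n}$. In either case each term is at least $\frac{k-1}{RO_n}$.

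Summing over $t' \in S$ and using $\frac{k-1}{RO_n} = \frac{k}{RO_n}\bigl(1 - \frac{1}{k}\bigr) \ge \frac{k}{RO_n}\bigl(1 - \frac{2}{k}\bigr)$ finishes the proof. There is no real obstacle: the passage from $T_t$ to $\bar T_t$ is exactly what makes the expectation bound tight without having to dig into the LP's behavior, and the budget-scaling subtleties that complicated the tentative-selection lemmas play no role here.
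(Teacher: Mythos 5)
Your proof is correct and matches the paper's argument in all essentials: both lower-bound $\bar{O}'_{t'}$ by $W_{t'}\ones[R_{t'}\le 1]$, observe that the items of weighted rank at most $1$ carry total weight at least $k$, and then account for the loss of at most one item of weight at most $1$ when conditioning on $I_t$. Your bookkeeping is actually marginally tighter (you retain the item of largest rank $\le 1$ and get $k-1$ where the paper conservatively writes $k-2$), but both clear the $1-\frac{2}{k}$ bar.
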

	
	\begin{proof}
		It suffices to show $\E \bar{O}'_{t'} \ge \frac{k - 2}{RO_n}$ for all random order times $t'$. Let $i^*$ be the largest item index with weighted rank $r_{i^*}$ at most $1$. Since item sizes are at most 1, this implies that $\sum_{i < i^*} w_i \ge k - 1$, and further $\sum_{i < i^*, i \neq I_t} \ge k - 2$. Then since $\E[\bar{T}_{t'} \mid I_{t'} = i, I_t] = 1$ for all $i \le i^*$,  
		$
			\E[\bar{O}_{t'} \mid I_t] = \E[W_{t'} \bar{T}_{t'} \mid I_t] \ge \sum_{i < i^*, i \neq I_t} \E[W_{t'} \bar{T}_{t'} \mid I_{t'} = i, I_t] \Pr(I_t = i \mid I_t) 
			= \frac{1}{RO_n - 1} \sum_{i < i^*, i \neq I_t} w_i \ge \frac{k - 2}{RO_n}.$
	%	This concludes the proof. 
	\end{proof}

	Putting these bounds together in \eqref{eq:varGen} (and using Observation \ref{obs:free}) we can control the variance of the $\bar{O}'_t$'s. 

	\begin{lemma}\label{lemma:var}
		Fix a time $t$ and let $S = RO_{t-1} \setminus [8 \ell (\aw + 1)]$. Then: $\Var\bigg[\sum_{i \in S} \bar{O}'_i  ~\bigg|~ I_t \bigg] \le O\left(\frac{tk}{n}\right).$
	\end{lemma}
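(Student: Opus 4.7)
The plan is to use the identity $\Var(Z \mid I_t) = \E[Z^2 \mid I_t] - (\E[Z \mid I_t])^2$ applied to $Z := \sum_{t' \in S} \bar O'_{t'}$, combining the upper bound on the second moment from Lemma~\ref{lemma:startVar} with the matching lower bound on the first moment from the preceding lemma. The key point is that the two leading $\big(\frac{|S|k}{RO_n}\big)^2$ terms cancel up to $O(1/k)$ error, leaving only the linear term plus lower-order contributions.

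First, square the expectation lower bound: since $(1 - \frac{2}{k})^2 \ge 1 - \frac{4}{k}$, we get
\[
(\E[Z \mid I_t])^2 \ge \left(\frac{|S|k}{RO_n}\right)^2 \left(1 - \frac{4}{k}\right).
\]
Combining with Lemma~\ref{lemma:startVar}, which yields $\E[Z^2 \mid I_t] \le \big[\big(\frac{|S|k}{RO_n}\big)^2 + \frac{|S|k}{RO_n}\big](1 + O(1/k))$, the subtraction gives
\[
\Var[Z \mid I_t] \le O\!\left(\frac{1}{k}\right) \left(\frac{|S|k}{RO_n}\right)^2 + \frac{|S|k}{RO_n}\left(1 + O\!\left(\frac{1}{k}\right)\right).
\]

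It remains to bound each term by $O(\frac{tk}{n})$. For the second term, since $|S| \le t$ and, by Observation~\ref{obs:free}, $\frac{1}{RO_n} \le O(\frac{1}{n})$, we immediately have $\frac{|S|k}{RO_n} \le O(\frac{tk}{n})$. For the first term, we factor it as $\frac{|S|}{RO_n} \cdot \frac{|S|k}{RO_n} \cdot \frac{k}{k}$; since $|S| \le RO_n$ we have $\frac{|S|}{RO_n} \le 1$, so the first term is at most $\frac{1}{k}\cdot \frac{|S|k}{RO_n} \le O(\frac{tk}{n})$. The main obstacle — and the reason the sharper lower bound on $\E[Z \mid I_t]$ was engineered via the switch from $O'_t$ to $\bar O'_t$ — is precisely ensuring this cancellation, since a naive bound on $(\E Z)^2$ would leave behind a term of order $(\frac{tk}{n})^2$, which is much larger than the desired $O(\frac{tk}{n})$.
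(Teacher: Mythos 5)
Your proof is correct and follows essentially the same route as the paper: the paper's proof is just the one-liner ``plug Lemma~\ref{lemma:startVar} and the preceding expectation lower bound into \eqref{eq:varGen}, then use Observation~\ref{obs:free},'' and you have carried out exactly that subtraction and simplification. (One tiny slip: after factoring $\frac{1}{k}\big(\frac{|S|k}{RO_n}\big)^2 = \frac{|S|}{RO_n}\cdot\frac{|S|k}{RO_n}$ and using $\frac{|S|}{RO_n}\le 1$, the bound is $\frac{|S|k}{RO_n}$, not $\frac{1}{k}\cdot\frac{|S|k}{RO_n}$; your stated conclusion $O(\frac{tk}{n})$ is nonetheless correct.)
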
 
	
	This variance control is enough to upper bound the probability that tentative solution violates the main budget at any point in time. 	
	
	\begin{lemma}[Control of main budget] \label{lemma:cheby}
	  For every random-order time $t$, the probability we are blocked by the main budget can be upper bounded as $\Pr\left[\sum_{t' < t} O_t > k - 1 \growingmid I_t \right] \le \frac{O(1)}{k \left(1- \frac{t}{n}- O\Big(\frac{\aw \ln k}{k}\Big)\right)^2}.$
	\end{lemma}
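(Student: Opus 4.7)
The plan is to bound the tail probability by Chebyshev's inequality applied to a carefully decomposed sum. I would split $\sum_{t' < t} O_{t'}$ into two pieces: (A) the ``concentrated'' part $Y := \sum_{t' \in S} \bar{O}'_{t'}$ with $S := RO_{t-1} \setminus [8\ell(\aw+1)]$ (using $O_{t'} \le \bar{O}'_{t'}$), and (B) a deterministic worst-case bound on all remaining times, namely the first $8\ell(\aw+1)$ steps together with every adversarial time in $[t-1]$.

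For part (B), the key observation is that the outer constraint \eqref{eq:outerConstr}, tested before each irrevocable acceptance, together with $W_{t'} \le 1$, forces every window $\interval \in \cInt$ to satisfy $\sum_{t' \in \interval, t' \le s} O_{t'} \le \cstc \ell k/n$ for every $s$, unconditionally and deterministically. The leftover time steps in (B) are covered by $O(\aw)$ windows (at most $8(\aw+1)$ initial ones plus at most $\aw$ adversarial ones), so part (B) contributes at most $O(\aw \cdot \ell k/n) = O(\aw \ln k)$ after substituting $\ell = \frac{n \ln k}{k}$.

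For part (A), I would adapt the proof of Lemma \ref{lemma:tentOcc} to $\bar{O}'_{t'}$---this is permitted because Theorem \ref{thm:tent}'s upper bound $\psi$ already equals $1$ on $\{R_{t'} \le 1\}$, which is precisely the set on which $\bar{T}_{t'}$ may differ from $T_{t'}$---to obtain $\E[Y \mid I_t] \le \tfrac{|S|\cdot k}{RO_n}(1 + O(1/k))$. Combining with Observation \ref{obs:free} and $|S| \le t$ gives $\E[Y \mid I_t] \le \tfrac{tk}{n} + O(\aw \ln k)$, while Lemma \ref{lemma:var} already delivers $\Var[Y \mid I_t] \le O(tk/n) \le O(k)$. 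Since (B) is deterministic, the conditional variance of the total equals that of $Y$, and the conditional mean of the total is at most $\tfrac{tk}{n} + O(\aw \ln k)$.

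Finally, the event $\{\sum_{t' < t} O_{t'} > k - 1\}$ requires $Y$ to exceed its mean by at least $\Delta := k - 1 - \tfrac{tk}{n} - O(\aw \ln k) \ge k\bigl(1 - \tfrac{t}{n} - O(\tfrac{\aw \ln k}{k})\bigr)$, so Chebyshev's inequality yields a tail bound of $\Var[Y \mid I_t]/\Delta^2 \le O(k)/\bigl(k(1 - \tfrac{t}{n} - O(\tfrac{\aw \ln k}{k}))\bigr)^2 = O(1)/\bigl(k\,(1 - \tfrac{t}{n} - O(\tfrac{\aw \ln k}{k}))^2\bigr)$, exactly as claimed. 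The one delicate step is the bookkeeping: several $O(\aw \ln k)$ terms---from the $|S|/RO_n$ inflation, from part (B), and from the $(1 + O(1/k))$ slack on the mean---must be aggregated into a single $O(\aw \ln k / k)$ error term inside the denominator without losing the sharp dependence on $\aw$; everything else is routine.
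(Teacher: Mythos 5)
Your proposal matches the paper's proof essentially step for step: the same split with $S = RO_{t-1}\setminus[8\ell(\aw+1)]$, the same deterministic bound of $O(\aw\ln k)$ on the residual $[t-1]\setminus S$ via the outer constraints covering those times by $O(\aw)$ windows, the same passage from $O_{t'}$ to $\bar O'_{t'}$, and the same application of Lemma~\ref{lemma:var} plus Chebyshev with a gap of $k(1-\tfrac{t}{n}-O(\tfrac{\aw\ln k}{k}))$. One minor imprecision: the contribution from $[t-1]\setminus S$ is not literally deterministic, only deterministically bounded, but since you absorb it into the threshold rather than the random sum this does not affect the argument.
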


	\begin{proof}
		Let $S = RO_{t-1} \setminus [8 \ell (\aw + 1)]$. Then 
		\begin{align}
			\Pr\left[\sum_{t' < t} O_{t'} > k-1 \growingmid I_t \right] \le \Pr\left[\sum_{t' \in [t-1] \setminus S} O_{t'} + \sum_{t' \in S} O_{t'} > k - 1 \growingmid I_t \right]. \label{eq:cheby1}
		\end{align}
		Since $[t-1] \setminus S$ is composed of some adversarial times plus the interval $[8 \ell (\aw + 1)]$, it can be covered with $\aw + 8 (\aw + 1) \le 10\aw$ intervals in $\cInt_{t-1}$. Since the outer constraints control the possible occupation over each of these windows, in every scenario the first sum in \eqref{eq:cheby1} is at most  $$\sum_{t' \in [t-1] \setminus S} O_{t'} \le 10 \aw \cdot \cstc \ell \frac{k}{n} = 10 \cstc \aw \ln k.$$ Then we can upper bound \eqref{eq:cheby1} as 
		\begin{align*}
			\Pr\left[\sum_{t' < t} O_{t'} > k-1 \growingmid I_t \right] \le \Pr\left[\sum_{t' \in S} O_{t'} > k - 10 \cstc \aw \ln k \growingmid I_t \right] \le \Pr\left[\sum_{t' \in S} \bar{O}'_{t'} > k - 10 \cstc \aw \ln k \growingmid I_t \right].
		\end{align*}
Now we apply Chebyshev's Inequality. Using Lemma \ref{lemma:prodMain} (and Observation \ref{obs:free}) we can bound the expected value $\mu := \sum_{t' \in S} \E \bar{O}'_{t'} \le \frac{tk}{n} \Big(1 + O\Big(\frac{\aw \ell}{n}\Big)\Big)$, and so the gap we have is 
		\begin{align*}
			gap := k - 10 \cstc \aw \ln k - \frac{tk}{n} \left(1 + O\left(\frac{\aw \ell}{n}\right)  \right) \ge k \left(1 - \frac{t}{n} - O\left(\frac{\aw \ln k}{k} \right)  \right).
		\end{align*} 
		Thus, using Chebyshev's inequality and Lemma \ref{lemma:var} to control the variance, we have
		\begin{align*}
			\Pr\left[\sum_{t' \in S} \bar{O}'_{t'} > k - 10 \cstc \aw \ln k \growingmid I_t \right] = \Pr\left[\sum_{t' \in S} \bar{O}'_{t'} > \mu + gap \growingmid I_t \right] \le \frac{O\big(\frac{tk}{n}\big)}{gap^2} \le \frac{O(1)}{k \left(1- \frac{t}{n}- O\Big(\frac{\aw \ln k}{k}\Big)\right)^2}.
		\end{align*}
		This concludes the proof. 
	\end{proof}

	\medskip
		
		Taking a union bound over Lemma \ref{lemma:outer} and Lemma \ref{lemma:cheby} proves Theorem \ref{thm:blocked}.
		
%############################################################
%############################################################
%############################################################
%############################################################

	\section{Lower bounding the value obtained} \label{sec:LBVal}

	Recall that $X^{\alg}_t = T_t F_t$, i.e., the item is permanently selected exactly when it is tentatively selected and it fits the budgets, and that $V_t$ is the value of the item at time $t$. The following is then our main lower bound on the value obtained by the algorithm.
	
	\begin{thm}[Value lower bound]	\label{thm:value}
		Consider a free time $t \ge 1,212 \aw\ell$.  Then $$\E[V_t T_t F_t] \ge \left(\scalingfactor_t - \e_t - p_t - \frac{2}{k} \right)\,\frac{\OPT_{RO}}{RO_n},$$
		where $p_t$ is the bound from Theorem \ref{thm:blocked} and $\e_t = (\cstd + 3) \frac{\aw \ell}{t} + \sqrt{10 \ln k} \sqrt{\frac{2n}{tk}}$.
	\end{thm}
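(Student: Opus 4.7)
Since $T_t \ge X^t_t$ and $F_t \in \{0,1\}$, we have the deterministic inequality
\[
V_t T_t F_t \ge V_t X^t_t F_t = V_t X^t_t - V_t X^t_t\,\ones[F_t=0],
\]
so the plan is to (i) lower-bound $\E[V_t X^t_t]$ and (ii) upper-bound the ``blocking waste'' $\E[V_t X^t_t\,\ones[F_t=0]]$.

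For (i), the key handle is the symmetry of the random order \emph{within} each free window: conditional on the multiset of items arriving in each window of $\cInt_t$, the LP $LP_t$ depends only on those multisets, so inside the free window $\interval_t$ containing $t$ the position of $I_t$ is uniformly distributed and
\[
\E\bigl[V_t X^t_t \,\big|\, \text{window multisets}\bigr] = \tfrac{1}{|\interval_t|}\sum_{t'\in \interval_t} V_{t'}X^t_{t'}.
\]
To lower-bound the right-hand side, I would exhibit an explicit feasible candidate for $LP_t$ supported only on free times: $\hat X_{t'} := \beta\,X^*_{I_{t'}}\ones[t'\in\free_t]$, where $X^*$ is an offline optimum of $\OPT_{RO}$ and $\beta$ is slightly less than $\scalingfactor_t$ to absorb sampling fluctuations. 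Concentration for sampling without replacement (exactly as in Section~\ref{sec:tenSel}) makes $\hat X$ simultaneously satisfy the main inner budget~\eqref{eq:mainInnerBud} and every inner constraint~\eqref{eq:innerConstr} with probability $1 - O(1/k)$; the slack required here produces precisely the $\sqrt{10\ln k}\sqrt{2n/(tk)}$ and $(\cstd+3)\aw\ell/t$ summands of $\e_t$ (the latter being the fraction of positions stolen by the adversary, which $\hat X$ cannot use). Combining LP optimality, the window symmetry above, and $T_t \ge X^t_t$ then yields $\E[V_tT_t]\ge(\scalingfactor_t-\e_t-2/k)\,\OPT_{RO}/RO_n$.

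For (ii), Theorem~\ref{thm:blocked} supplies $\Pr[F_t=0\mid I_t]\le p_t$, but the naïve bound $\E[V_t X^t_t\ones[F_t=0]]\le p_t\,\E[V_t]$ is too loose by up to a factor of $RO_n/k$, since adversarially skewed values allow $\E[V_t]$ to be as large as $\OPT_{RO}/k$. Instead I would use the structural observation that, conditional on $I_t$, both $X^t_t$ and $F_t$ are monotone in how ``heavy'' the random-order items arriving before $t$ are: a heavier past means stronger competition for $I_t$ inside $LP_t$ (smaller $X^t_t$) \emph{and} more of the real budget already consumed (smaller $F_t$). Establishing this positive correlation through an explicit coupling/swap argument on the random permutation would yield $\E[X^t_t(1-F_t)\mid I_t]\le p_t\,\E[X^t_t\mid I_t]$, and averaging over $I_t$ delivers
\[
\E\bigl[V_tX^t_t\,\ones[F_t=0]\bigr]\le p_t\,\E[V_tX^t_t]\le p_t\,\OPT_{RO}/RO_n,
\]
where the last step uses the matching \emph{upper} bound on $\E[V_tX^t_t]$, obtained by combining the per-window symmetry with Theorem~\ref{thm:tent} and the elementary inequality $\sum_{r_i<\alpha}v_i\le\alpha\,\OPT_{RO}$. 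Together (i) and (ii) give the claim. The positive-correlation step is the main obstacle, as FKG-type inequalities do not apply directly to the non-product measure induced by a random permutation, so some care is needed in designing the coupling; the remaining ingredients are all routine and reuse the sampling-without-replacement Bernstein-type concentration already developed in Lemmas~\ref{lemma:tentFree}, \ref{lemma:freeWeight}, and~\ref{lemma:outer}.
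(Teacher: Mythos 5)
Your decomposition $V_t T_t F_t \ge V_t X^t_t - V_t X^t_t\,\ones[F_t=0]$ is valid, but both halves of your plan have genuine gaps, and the paper proceeds quite differently.

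\textbf{Gap in (i).} Conditioning on the window multisets and using the uniformity of $I_t$'s position inside its window gives $\E[V_t X^t_t\mid\text{multisets}]=\frac{1}{|\interval_t|}\sum_{t'\in\interval_t}V_{t'}X^t_{t'}$, i.e., the LP value \emph{restricted to the single window} $\interval_t$. But LP optimality against your candidate $\hat X$ only controls the \emph{total} LP value $\sum_{t'\le t}V_{t'}X^t_{t'}$; there is no general mechanism to transfer a lower bound on the total objective to a lower bound on the objective contribution of one specific window. The classical primal analysis of~\cite{kesselheim} exploits symmetry across \emph{all} $t'\le t$, and the introduction explicitly says that this is broken by \eqref{eq:innerConstr}. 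Restricting symmetry to a single window loses the link to the global LP optimum — and this is precisely the technical difficulty the whole weighted-rank apparatus was developed to circumvent. There is also a smaller issue: conditioning on ``the multiset of items in each window of $\cInt_t$'' is not well-posed for adversarial windows, since those items are generated adaptively and may depend on the $I_t$'s being conditioned on.

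\textbf{Gap in (ii).} You correctly identify that $\E[V_tX^t_t\,\ones[F_t=0]]\le p_t\,\E[V_t]$ is too lossy, but the proposed remedy — positive correlation of $X^t_t$ and $F_t$ given $I_t$ — is not established, and you yourself flag it as ``the main obstacle.'' The intuitive direction is actually unclear: a ``heavy'' past in the sense of many high-density items makes the algorithm fill the budget faster (decreasing $F_t$), while a heavy past in the sense of many low-value items only hurts $X^t_t$ but not $F_t$; there is no single monotone statistic. Moreover, the blocking event also depends on the adversarial items, which do not participate in the random permutation, so an FKG/coupling argument purely over permutations would not cover their contribution.

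\textbf{What the paper does instead.} The paper avoids both issues by conditioning on $I_t$ from the outset and never trying to lower-bound $\E[V_t X^t_t]$ via symmetry. It shows the deterministic monotonicity Lemma~\ref{lemma:LBpickDet} (if few better items appear then $X^t_t=1$, fully), promotes it via concentration to Lemma~\ref{lemma:LBtentSel} ($\Pr(X^t_t=1\mid R_t=r)\ge 1-1/k$ for $r\le\scalingfactor_t-\e_t$), and then applies a union bound on the \emph{conditional} probabilities $\Pr(T_t\ne1\mid I_t)$ and $\Pr(F_t=0\mid I_t)$ only for low-rank items. This yields $\E[V_tT_tF_t]\ge(1-1/k-p_t)\cdot\frac{1}{RO_n}\sum_{i\in S}v_i$ where $S=\{i:r_i\le\scalingfactor_t-\e_t\}$, and Chebyshev's Sum Inequality then relates $\sum_{i\in S}v_i$ to $\OPT_{RO}$. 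The key is that conditioning on $I_t$ makes the blocking loss proportional to the captured value over low-rank items — a quantity automatically commensurate with $\OPT_{RO}$ — so no correlation inequality is needed at all. I'd encourage you to pursue this direct per-item route: it is more robust and sidesteps the two obstacles your plan runs into.
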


	The next lemma says that if up to time $t$ there are not many items better than the item $I_t$, then this item is fully tentatively picked. 
	
	\begin{lemma} \label{lemma:LBpickDet}
		Consider a free time $t$, and a fixed scenario where the following hold:
		\begin{itemize}
%			\item All times in the last window $\interval_{\text{last}} \in \cInt_t$ are free (actually being non-adversarial suffices)
		
			\item Up to time $t$, the total size of items in free times strictly better than $I_t$ is strictly less than $\scalingfactor_t \frac{tk}{n} - \frac{\cstd\aw\ell k}{n}$
			
			\item In the last window $\interval_{\text{last}} \in \cInt_{t}$ (which only has free times), the total size of items strictly better than $I_t$ is strictly less than $\frac{\cstd \ell k}{n}$.
		\end{itemize}
		Then any optimal solution $X^*$ of $LP_t$ sets $X_t = 1$, i.e., it fully tentatively picks item $I_t$.
	\end{lemma}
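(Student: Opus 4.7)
The plan is to prove the lemma by a swap argument in the spirit of Lemma~\ref{lemma:sat}. Assume for contradiction that some optimal solution $X^*$ of $LP_t$ has $X^*_t < 1$; I will construct a feasible perturbation that strictly improves the objective, contradicting optimality.

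The first step is to bound the total mass of items strictly better than $I_t$ selected by $X^*$. Condition~1 gives a strict upper bound of $\scalingfactor_t\frac{tk}{n} - \frac{\cstd\aw\ell k}{n}$ on the supply—and hence on the $X^*$-mass—of strictly-better items at free times in $[t]$. Applying the inner constraint $\sum_{t'\in\interval}W_{t'}X^*_{t'}\le\cstd\frac{\ell k}{n}$ to each of the at most $\aw$ adversarial windows in $\cInt_t$ caps the $X^*$-mass of strictly-better adversarial items by $\frac{\cstd\aw\ell k}{n}$. Summing,
\[
\sum_{t':\, I_{t'} \text{ strictly better than } I_t} W_{t'} X^*_{t'} \;<\; \scalingfactor_t\,\frac{tk}{n},
\]
strictly. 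Condition~2 analogously yields the strict bound $\sum_{t'\in\interval_{\text{last}}:\, I_{t'} \text{ strictly better than } I_t} W_{t'}X^*_{t'} < \cstd\frac{\ell k}{n}$ inside $\interval_{\text{last}}$ (which contains only free times).

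The next step is a case analysis on the tightness of the two constraints involving $t$: the main inner budget and the inner constraint for $\interval_{\text{last}}$. If neither is tight, then a direct small increase of $X^*_t$ by some $\eta>0$ is feasible and improves the objective by $\eta V_t$ (the degenerate case $V_t=0$ is immediate, as $X^*_t=1$ is then also optimal), contradicting optimality. If one or both are tight, the strict inequalities from the previous step guarantee positive $X^*$-mass on items that are \emph{not} strictly better than $I_t$—namely on $I_t$ itself or on strictly worse items. I would then pick a strictly worse item $t'$ with $X^*_{t'}>0$—choosing $t'\in\interval_{\text{last}}$ if the inner constraint is tight, otherwise anywhere—and swap a small amount of its mass into $X^*_t$. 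The swap is mass-balanced in the tight constraint (or frees slack in a non-tight one), so feasibility is preserved, and it strictly increases the objective since $V_t/W_t > V_{t'}/W_{t'}$.

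The main obstacle is the corner configuration in which the inner constraint for $\interval_{\text{last}}$ is tight but every non-$I_t$ item with positive $X^*$-mass in $\interval_{\text{last}}$ is strictly better than $I_t$, so no within-window worse item is available for a value-improving local swap. Handling this case requires using both strict inequalities simultaneously: the strict slack from condition~2 on better items in $\interval_{\text{last}}$ forces $W_tX^*_t$ itself to absorb the missing inner-window mass (in particular $X^*_t>0$), and the strict main-budget bound assembled above then leaves room to raise $X^*_t$ towards $1$ by a two-step swap that simultaneously reduces a strictly-better item outside $\interval_{\text{last}}$ and rebalances within $\interval_{\text{last}}$. The precise form of the constants $\cstd$ and $\aw\ell/n$ in the hypotheses—and in particular the fact that both conditions are strict—is exactly what makes this multi-step swap close.
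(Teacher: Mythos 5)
Your overall strategy is the same as the paper's: a swap argument on top of a case analysis on the tightness of the two constraints touching $X_t$ (the main inner budget and the inner constraint for $\interval_{\text{last}}$). In the non-corner cases your argument is essentially the paper's proof: if neither constraint is tight, increase $X^*_t$; otherwise, locate a strictly-worse item with positive $X^*$-mass (within $\interval_{\text{last}}$ if the inner constraint is tight, anywhere otherwise) and swap a little mass onto $I_t$, which strictly improves the objective and preserves feasibility. Your opening bookkeeping — cap the adversarial $X^*$-mass by $\frac{\cstd \aw \ell k}{n}$ via the inner constraints, add Condition~1 to conclude that the total strictly-better $X^*$-mass is strictly below $\scalingfactor_t \frac{tk}{n}$ — also matches the paper's Case~1 calculation.

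Where the proposal breaks down is exactly the corner case you flag, and the ``two-step swap'' you propose does not repair it. In that configuration the only items with positive $X^*$-mass in $\interval_{\text{last}}$ besides $I_t$ are strictly better than $I_t$, so any move that raises $X^*_t$ while keeping the (tight) inner constraint for $\interval_{\text{last}}$ must lower the $X^*$-mass of a strictly-better item inside that window; the net objective change per unit mass is $\frac{V_t}{W_t} - \frac{V_{t'''}}{W_{t'''}} < 0$. Prepending a reduction of a strictly-better item outside $\interval_{\text{last}}$ only loses more value, since that reduction is itself strictly value-decreasing and does nothing to relax the binding inner constraint. The strictness of the hypotheses does not save the argument, because the gap between the strictly-better supply and $\cstd\frac{\ell k}{n}$ can be an arbitrarily small $\eps > 0$, smaller than $W_t X^*_t$. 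For what it's worth, the paper's own proof silently elides exactly this possibility: in both its Case~1 (main budget tight) and Case~2 (inner constraint tight) it asserts the existence of a strictly-worse item with positive $X^*$-mass, whereas the hypotheses only force positive mass on items that are \emph{not strictly better}, which could all sit on $I_t$ when the strictly-better supply is within $W_t \le 1$ of the bound. The clean way to close this — for both your write-up and the paper's — is to build an extra additive slack of $1$ into the two hypotheses (e.g.\ ``strictly less than $\cstd\frac{\ell k}{n} - 1$''); this absorbs $W_t X^*_t \le 1$, guarantees a genuinely worse item with positive mass in the tight constraint, and is harmless downstream since the thresholds are of order $\ln k$.
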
	
	
		\begin{proof}
		Let $X^*$ be an optimal solution of $LP_t$, and suppose by contradiction that $X^*_t < 1$. Again we use the notation $WX^*(S) := \sum_{t' \in S} W_{t'} X^*_{t'}$.
		
		\paragraph{Case 1:} $WX^*(\interval_{\text{last}}) < \frac{\cstd \ell k}{n}$, i.e., the inner constraint for the last window is not tight. If $WX^*([t]) < \scalingfactor_t \frac{tk}{n}$, inner main budget is also not tight, we could just increase $X^*_t$ to obtain a strictly better solution, reaching a contradiction.  So assume the main budget is tight, $WX^*([t]) = \scalingfactor_t \frac{tk}{n}$. 
		
		Since the non-free times are covered by $\aw$ windows and we have the inner constraints in the LP, the solution $X^*$ picks up mass at most $A = \frac{\cstd \aw \ell k}{n}$ of items in non-free times; thus to fill up the main budget, the solution picks up at mass least $\scalingfactor_t \frac{tk}{n} - A$ in free times. By assumption this implies that it fractionally picks an item in a free time that is worse than $I_t$, i.e., there is $t' \in \free_{t-1}$ such that $\frac{V_{t'}}{W_{t'}} < \frac{V_t}{W_t}$ and $X^*_{t'} > 0$. But then we can increase $X^*_t$ by $\frac{\e}{W_t}$ and decrease $X^*_{t'}$ by $\frac{\e}{W_{t'}}$ to obtain a feasible solution (using the fact we are in Case 1) with strictly better value, reaching a contradiction. 
		
		\paragraph{Case 2: $WX^*(\interval_{\text{last}}) = \frac{\cstd \ell k}{n}$.} By assumption $X^*$ fractionally picks a (free-time) item $I_{t'}$ with $t' \in \interval_{\text{last}}$ that is worse than $I_t$. Then we can swap a bit of these items exactly as in the previous case to obtain a feasible solution with strictly better value, a contradiction (notice this swap preserves the occupation of the inner main budget and of the inner constraint for $\interval_{\text{last}}$, everywhere else nothing changes). This concludes the proof. 
	\end{proof} 
	
	Moreover, from concentration, with high probability the conditions in the above lemma hold whenever $I_t$ has low rank; the proof is deferred to Appendix \ref{app:LBtentSel}.
	
	\begin{lemma}\label{lemma:LBtentSel} \label{LEMMA:LBTENTSEL}	
		For any free time $t \ge 1,212 \aw\ell$ and rank $r \le \scalingfactor_t - \e_t$, we have that the probability of fully tentatively selecting item $I_t$ given that it has rank $r$ satisfies: $\Pr\bigg(X^t_t = 1 ~\bigg|~ R_t = r \bigg) \ge 1 - \frac{1}{k}.$
	\end{lemma}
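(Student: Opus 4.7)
}
The plan is to combine the deterministic sufficient condition of Lemma~\ref{lemma:LBpickDet} with sampling-without-replacement concentration (Lemma~\ref{lemma:chernoff}). Condition on the event $\omega = \{R_t = r\}$, which is equivalent to fixing $I_t$ as the item of rank $r$. Under $\omega$, the total size of random-order items strictly better than $I_t$ is exactly $rk$, and by the random-order assumption these items are placed uniformly at random among the $RO_n$ random-order times (independent of where the adversarial bursts fall, which are fixed). It suffices to show that each of the two deterministic conditions from Lemma~\ref{lemma:LBpickDet} fails with probability at most $\frac{1}{2k}$, and then take a union bound.

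First I would handle the ``last window'' condition. Let $\interval_{\mathrm{last}}\in\cInt_t^{\tfree}$ be the last window of $\cInt_t$, which has size at most $\ell$ and lies in $\free_t$ by assumption. Let $Z_{\interval}$ be the total size of items in $\interval_{\mathrm{last}}$ strictly better than $I_t$. Since the better items are spread uniformly at random over $|\free_t|$ free positions (after conditioning on $\omega$ and on the set of items appearing in free times), one gets
\[
\E[Z_{\interval}\mid\omega]\;\leq\;\frac{\ell}{|\free_t|}\cdot rk\;\leq\;O\!\left(\scalingfactor_t\,\frac{\ell k}{n}\right),
\]
using $r\le\scalingfactor_t$ and Observation~\ref{obs:free}. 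Because $\cstd=601$ is much larger than $\scalingfactor_t$, applying Lemma~\ref{lemma:chernoff} with the large deviation gap $\tau=\Theta(\frac{\ell k}{n})=\Theta(\ln k)$ gives failure probability at most $e^{-\Omega(\ln k)}\le\frac{1}{2k}$ (after tuning constants; this is the same flavor of computation as in Case~1 of Lemma~\ref{lemma:tentFree}).

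The main work is the first condition, which is where the sharper slack matters. Let $Z=\sum_{t'\in\free_t,\,I_{t'}<I_t}W_{t'}$. As above, conditioned on $\omega$, $Z$ is a sum of sizes sampled without replacement among the $rk$-mass of better items over the $|\free_t|$ free positions. Its conditional expectation is
\[
\E[Z\mid\omega]\;\leq\;\frac{|\free_t|}{RO_n}\,rk\;\leq\;\left(1+\tfrac{3\aw\ell}{t}\right)\frac{t}{n}\,rk\;\leq\;\scalingfactor_t\,\frac{tk}{n}-\e_t\,\frac{tk}{n}+O\!\left(\frac{\aw\ell k}{n}\right),
\]
where the second inequality uses Observation~\ref{obs:free} and $|\free_t|\le t$, and the third uses $r\le\scalingfactor_t-\e_t$. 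We need to rule out $Z\ge\scalingfactor_t\frac{tk}{n}-\cstd\frac{\aw\ell k}{n}$, which requires a deviation from the mean of at least
\[
\Delta\;:=\;\e_t\,\frac{tk}{n}-(\cstd+3)\,\frac{\aw\ell k}{n}\;=\;\sqrt{10\ln k}\cdot\sqrt{\tfrac{2tk}{n}},
\]
by the very definition of $\e_t$ (the first term of $\e_t$ was tailored precisely to absorb the $(\cstd+3)\frac{\aw\ell k}{n}$ overhead). Since $\E[Z\mid\omega]\le \tfrac{tk}{n}$ roughly, applying Lemma~\ref{lemma:chernoff} with $\tau=\Delta$ gives
\[
\Pr\!\left(Z\ge\E[Z\mid\omega]+\Delta\;\big|\;\omega\right)\;\leq\;2\exp\!\left(-\tfrac{\Delta^2}{O(tk/n)}\right)\;\leq\;2\exp(-5\ln k)\;\leq\;\tfrac{1}{2k}.
\]

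The only potential obstacle is bookkeeping: checking that the assumption $t\ge 1{,}212\,\aw\ell$ is strong enough so that $|\free_t|\ge t/2$, that the $\aw\ell/t$ type error terms are absorbed into the constants appearing in $\e_t$, and that $\Delta\ge 0$ (which is why the hypothesis $r\le\scalingfactor_t-\e_t$ is sharp). Once those constants line up, a final union bound over the two conditions yields failure probability at most $\frac{1}{k}$, which together with Lemma~\ref{lemma:LBpickDet} completes the proof.
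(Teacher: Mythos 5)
Your proposal follows the same overall route as the paper's proof in Appendix~E: condition on $R_t = r$, invoke Lemma~\ref{lemma:LBpickDet} for the deterministic sufficient condition, and use the sampling-without-replacement concentration of Lemma~\ref{lemma:chernoff} to bound the failure probability of each of its two hypotheses by $\frac{1}{2k}$, closing with a union bound. Your treatment of the main-budget condition is essentially identical to the paper's (the mean is pushed to $(\scalingfactor_t - \e_t)(1 + O(\frac{\aw\ell}{n}))\frac{tk}{n}$, and the residual gap is exactly $\sqrt{10\ln k}\sqrt{2tk/n}$ by design of $\e_t$).

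There is, however, a slip in your last-window step that should be fixed. You condition on the \emph{set} of items appearing in free times and write $\E[Z_{\interval}\mid\omega]\le \frac{\ell}{|\free_t|}\cdot rk$. That inequality is a valid \emph{conditional} bound (given the set of free-time items), but it is not of the claimed order: using $|\free_t|\ge t/2$ it gives $\frac{2\ell k}{t}\,r=\frac{2n\ln k}{t}\,r$, which is only $O(\scalingfactor_t\,\frac{\ell k}{n})=O(\ln k)$ when $t=\Omega(n)$. For $t$ near the threshold $1{,}212\,\aw\ell\approx \frac{\aw\ln k}{k}\,n$ (which can be $o(n)$), this intermediate bound is too large, so Lemma~\ref{lemma:chernoff} would not yield the desired $\frac{1}{2k}$. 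The fix is simply to drop the nested conditioning and compute the unconditional expectation directly, as the paper does: the item at each time $t'\in\interval_{\text{last}}$ (all random-order since $t$ is free) is uniform over the $RO_n-1$ remaining random-order items, so $\E[Z_{\interval}\mid\omega]=\frac{|\interval_{\text{last}}|}{RO_n-1}\,kr\le O\big(\tfrac{\ell k}{n}\big)$ by Observation~\ref{obs:free}(b), which is what you need. With that correction, your argument matches the paper's.
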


	With this lower bound on the probability of selection by the algorithm conditioned on the item being ``good'' (low rank), we can proceed with the proof of Theorem \ref{thm:value}.

	\begin{proof}[Proof of Theorem \ref{thm:value}]
	Introducing the conditioning on the item at time $t$ and then using the non-negativity of $V_t,T_t$, and $F_t$, we have
	\begin{align}
		\E[V_t T_t F_t] &= \E_{I_t} \left[V_t\, \E\left[T_t F_t \growingmid I_t \right]  \right] \notag\\
				&= \E_{I_t} \left[V_t\, \E\left[T_t F_t \growingmid I_t \right] \growingmid R_t \le \scalingfactor_t - \e_t \right] \Pr(R_t \le \scalingfactor_t -\e_t) \notag \\
				&~~~~+ \E_{I_t} \left[V_t\, \E\left[T_t F_t \growingmid I_t \right] \growingmid R_t > \scalingfactor_t - \e_t \right] \Pr(R_t > \scalingfactor_t -\e_t) \notag\\
				&\ge \E_{I_t} \left[V_t\, \E\left[T_t F_t \growingmid I_t \right] \growingmid R_t \le \scalingfactor_t - \e_t \right] \Pr(R_t \le \scalingfactor_t -\e_t). \label{eq:lbVal1}
	\end{align}
	Notice that $$\E[T_t F_t \mid I_t] \ge \Pr(T_t = 1 \textrm{ and } F_t = 1 \mid I_t) \ge 1 - \Pr(T_t \neq 1 \mid I_t) - \Pr(F_t = 0\mid I_t),$$ the last inequality following from a union bound. Whenever $I_t$ is such that its rank satisfies $R_t \le \scalingfactor_t - \e_t$, we can apply Lemma \ref{lemma:LBtentSel} and the definition of $p_t$ to lower bound the right-hand side by $1 - \frac{1}{k} - p_t$.	Plugging this in \eqref{eq:lbVal1} we get
	\begin{align}
		\E[V_t T_t F_t] \ge \left(1-p_t - \frac{1}{k}\right) \E_{I_t} \left[V_t \growingmid R_t \le \scalingfactor_t - \e_t \right] \Pr(R_t \le \scalingfactor_t -\e_t). \label{eq:lbVal2}
	\end{align}
	
	Now let $S = \{ i : r_i \le \scalingfactor_t - \e_t\}$. The last two terms of \eqref{eq:lbVal2} are just adding over the value of items in $S$ multiplied by $\Pr(I_t = i) = \frac{1}{RO_n}$, namely 
	\begin{align}
		\E[V_t T_t F_t] \ge \left(1-p_t - \frac{1}{k}\right) \frac{1}{RO_n} \sum_{i \in S} v_i. \label{eq:lbVal3}
	\end{align}
	Equivalently $S$ can be constructed by picking the largest prefix of best random-order items that has total size at most $k(\scalingfactor_t - \e_t) \approx k$. For this reason and due to the fact that they take up almost the whole knapsack, this should be close to the optimal solution to our knapsack problem over the random-order items; more precisely, we claim that 
	%
%	\begin{claim}
	\begin{align}
		\sum_{i \in S} v_i \ge (\scalingfactor_t - \e_t - \nicefrac{1}{k})\OPT_{RO}. \label{eq:chebySumUse}
	\end{align}
%	\end{claim}
	
	To see that, let $x^*$ be the optimal offline solution to our knapsack problem over the random-order items only, where $x^*_i$ indicates the fraction of the random-order item $i$ picked. Let $x$ be the indicator of the set $S$, i.e., $x_i = 1$ iff $i \in S$. Since $x^*$ is given by the greedy procedure that scans items in order of value density $\frac{v_i}{w_i}$, as in the construction of $S$, we have $x^*_i = 1$ whenever $x_i = 1$. Thus, by introducing additional terms, we have for $i \in S$ $$v_i = v_i x_i =  \left(\frac{v_i}{w_i} x_i^*\right) x_i (w_i x_i^*).$$ Adding over all $i \in S$ and applying Chebyshev's Sum Inequality (Lemma \ref{lemma:chebySum}) with $a_i = (\frac{v_i}{w_i} x^*_i)$, $b_i = x_i$, and $p_i = w_i x^*_i$, we get
	\begin{align*}
		\sum_{i \in S} v_i = \sum_i v_i x_i \ge \left(\sum_i \frac{v_i}{w_i} x^*_i w_i\right) \left(\sum_i x_i w_i \right)\bigg/ \left(\sum_i w_i x_i^* \right).
	\end{align*}
	The first term in the RHS is the value of $x^*$, which by definition is $\OPT_{RO}$. The second term is the total size of $x$, which by the maximality in its definition is $k(\scalingfactor_t - \e_t) - 1$ (recall that all items have size at most 1). Finally, the last term is the total size of $x^*$, which by optimality equals $k$. This proves \eqref{eq:chebySumUse}.
	
	Employing this bound to inequality \eqref{eq:lbVal3} and using $(1 - a)(1-b) \ge (1-a-b)$, valid for all non-negative $a,b$, concludes the proof of the theorem. 
	\end{proof}

%#########################################################
%#########################################################
%#########################################################
%#########################################################

	\section{Wrapping up: finishing the proof of Theorem~\ref{thm:main}} \label{sec:wrap}
	
	To finish the proof of the guarantee of the algorithm, we just need to add the lower bound on the value obtained in each time step given by Theorem \ref{thm:value} over all free times except the ones very early or very late in the sequence. More precisely, let $t_0 = 1,212 \aw \ell$ and $\gamma = 1 - (\cste + 1)\frac{\aw \ell}{n}$, and define $T = \{t \in \free_n : t_0 \le t \le \gamma n\}$. For $t \not\in T$, we use the trivial bound $V_t \geq 0$. For the other time steps we use Theorem \ref{thm:value}. Together with the fact $RO_n \le n$, we get
	\begin{align}
		\E \left[\sum_{t \in RO} V_t X^{\alg}_t \right] \ge \sum_{t \in T} \E[V_t T_t F_t] \ge \left(\sum_{t \in T} \scalingfactor_t - \sum_{t \in T} \e_t - \sum_{t \in T} p_t - \frac{2n}{k} \right)\,\frac{\OPT_{RO}}{n}. \label{eq:lastEq}
	\end{align}
	Just using some arithmetic we bound each of the remaining sums:
	
	\begin{itemize}
		\item $\sum_{t \in T} \scalingfactor_t = \lvert T \rvert - \sum_{t \in T} \frac{4 \aw\ell}{t} \ge \left( n - O(\aw \ell) \right) - \int_{t_0 - 1}^n \frac{4 \aw\ell}{t}\,\dd t = n - O(\aw \ell \ln \frac{n}{\aw \ell})$. 
		
		\item $\sum_{t \in T} p_t \le O\left(\frac{1}{k}\right) \cdot \int_0^{\gamma n} \frac{1}{\left(1 - \frac{t}{n} - \cste \frac{\aw \ln k}{k}\right)^2}\,\dd t = O\left(\frac{n}{k}\right) \int_0^{\gamma} \frac{1}{(a - x)^2}\,\dd x,$ where in the last step we set $a = 1 - \cste \frac{\aw \ln k}{k}$ and use change of variables $x = \frac{t}{n}$. The remaining integral equals $\frac{1}{a-x}\big\vert^{\gamma}_0 \le \frac{1}{a - \gamma}$. By our setting of $\gamma$ we have $a-\gamma = \frac{\aw \ell}{n}$, so we obtain $\sum_{t \in T} p_t \le O\left(\frac{n}{\aw} \right).$
		
		\item $\sum_{t \in T} \e_t \le \int_{t_0 - 1}^n (\cstd + 3) \frac{\aw \ell}{t}\,\dd t + \int_{t_0 - 1}^n \sqrt{\frac{20 n \ln k}{k t}}\,\dd t \le O(\aw \ell \ln \frac{n}{\aw \ell}) + O(\frac{n \sqrt{\ln k}}{\sqrt{k}})$.   
	\end{itemize}
	
	Using these bounds on \eqref{eq:lastEq} and using the assumption that $\aw \ge \sqrt{k}$ concludes the proof of the theorem. 
	
%#########################################################
%#########################################################
%#########################################################
%#########################################################

	\section{Conclusions}
	In this paper, we give a natural algorithm for the knapsack secretary problem, which we show to be robust against bursts of adversarial items. Our analysis is quite robust and possibly applies to other models mixing aspects of stochastic and adversarial arrivals.
	
	A natural follow-up question is how our results could generalize to other settings. In particular, it would be interesting to extend our algorithm and analysis to packing LPs. The difficulty in using our technique is that there is no natural notion similar to the weighted rank for this setting.
	
	It would also be interesting to better understand the limitations and trade-offs in this and similar models. For example, what regimes of parameter allow constant-competitive or $(1-\eps)$-competitive algorithms? 

	%More specifically, if $k$ is constant, is there a constant-competitive algorithm if there are $\Theta(n)$ adversarial items? At least if we allow these adversarial items to come in $\Theta(n)$ bursts, this seems unlikely and possibly there are impossibility results.
	
	%\red{Study other problems in this model, e.g., AdWords, Packing LPs. The difficulty in using our technique is that there is no natural notion of (weighted) rank/goodness of items.}
	
	%\red{Understand the limitations in this model, i.e., lower bounds on what can be achieved. In the case $k = \Theta(1)$, we can get a constant approx when there are $\Theta(n)$ adv items in constantly many bursts. Can we at least show that the problem is intractable when these items are spread out in $\Theta(n)$ bursts?} 

% Bibliography
	\bibliographystyle{plainnat}
	\bibliography{online-lp-short}	

%#########################################################
%#########################################################
%#########################################################
%#########################################################

	\clearpage
	\appendix
	
	\section{Required inequalities} \label{app:ineq}

	We will need standard concentration inequalities for sampling without replacement of Bernstein-type; the following can be found, for example on Corollary 2.3 of~\cite{guptaMolinaro}.
	
	\begin{lemma} \label{lemma:chernoff} Let $U = \{u_1,u_2,\ldots,u_n\}$ be a set of real numbers in the interval $[0,1]$. Let $Y_1,Y_2,\ldots,Y_s$ be a sequence of draws from $U$ without replacement, and let $\mu = \sum_i \E Y_i$. Then for every $\tau > 0$,
  \begin{align*}
    \Pr\bigg(\bigg|\sum_i Y_i - \mu\bigg| \ge \tau\bigg) \le 2 \exp \left( -\frac{\tau^2}{4 \mu + \tau} \right).
  \end{align*}
\end{lemma}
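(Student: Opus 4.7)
The plan is to reduce the without-replacement case to the with-replacement (i.i.d.) case via Hoeffding's classical convex-ordering inequality, and then apply a standard Bernstein-type tail bound to the independent sum. Specifically, if $Y_1^\star, \ldots, Y_s^\star$ are i.i.d.\ uniform draws from $U$, Hoeffding (1963) showed that for every convex function $\phi$,
\[
\mathbb{E}\Bigl[\phi\Bigl({\textstyle\sum_i Y_i}\Bigr)\Bigr] \;\le\; \mathbb{E}\Bigl[\phi\Bigl({\textstyle\sum_i Y_i^\star}\Bigr)\Bigr].
\]
Since $\phi(x) = e^{\lambda x}$ is convex for every $\lambda \in \mathbb{R}$, this transfers any MGF-based tail estimate from the i.i.d.\ setting to the without-replacement one, and the means agree: $\sum_i \mathbb{E}[Y_i^\star] = \mu$. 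It therefore suffices to prove the stated two-sided bound when the $Y_i$ are i.i.d.

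Next I would establish a one-sided Bernstein bound for the independent sum. A direct MGF computation for $[0,1]$-valued variables yields
\[
\log \mathbb{E}\bigl[e^{\lambda (Y_i^\star - \mathbb{E} Y_i^\star)}\bigr] \;\le\; (e^\lambda - 1 - \lambda)\,\Var(Y_i^\star),
\]
and since $Y_i^\star \in [0,1]$ we have $\Var(Y_i^\star) \le \mathbb{E}[Y_i^\star]$, so summing over $i$ bounds the log-MGF of $\sum_i Y_i^\star - \mu$ by $(e^\lambda - 1 - \lambda)\mu$. The Cram\'er--Chernoff method (Markov on $e^{\lambda(\sum_i Y_i^\star - \mu)}$ and optimizing over $\lambda > 0$) then delivers the Bennett upper-tail bound $\Pr(\sum_i Y_i^\star - \mu \ge \tau) \le \exp(-\mu\, h(\tau/\mu))$ with $h(u) = (1+u)\ln(1+u) - u$. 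Repeating the argument with $\lambda < 0$ gives the analogous lower tail, and a union bound produces the two-sided statement with the leading factor of $2$.

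Finally, to match the clean form stated in the lemma, I would invoke the elementary chain $(e^\lambda - 1 - \lambda) \le \lambda^2/(2 - 2\lambda/3)$ for $\lambda \in (0, 3/2)$, which yields the textbook Bernstein form $\exp(-\tau^2/(2\mu + 2\tau/3))$, and then loosen the denominator via $4\mu + \tau \ge 2\mu + 2\tau/3$ to obtain the stated $\exp(-\tau^2/(4\mu + \tau))$. The argument is largely routine bookkeeping on constants; the one conceptual point that must be handled with care is preserving the \emph{variance-sensitive} factor $\mu$ in the denominator (rather than the cruder range-based $s$ one would get from Hoeffding--Azuma), since this sharper Bernstein-type dependence on $\mu$ is precisely what is needed in the regime $\tau \ll \mu$ exploited throughout Sections~\ref{sec:tenSel} and~\ref{sec:blocked}.
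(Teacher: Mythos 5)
The paper does not actually prove Lemma~\ref{lemma:chernoff}; it states it by citation to Corollary~2.3 of~\cite{guptaMolinaro}, so there is no in-paper argument to compare against. Your proof is correct and follows the standard route to Bernstein-type bounds for sampling without replacement, which is almost certainly the same chain used in the cited reference: reduce to the i.i.d.\ case via Hoeffding's (1963) convex-ordering theorem applied to $\phi(x)=e^{\lambda x}$; bound the i.i.d.\ log-MGF by $(e^\lambda-1-\lambda)\Var(Y_i^\star)\le(e^\lambda-1-\lambda)\,\E Y_i^\star$ using that a $[0,1]$-valued variable has variance at most its mean; sum, optimize $\lambda$ to get the Bennett exponent $-\mu\, h(\tau/\mu)$ for each tail; and finally loosen to the Bernstein form $\exp(-\tau^2/(2\mu+2\tau/3))$ and thence to the stated $\exp(-\tau^2/(4\mu+\tau))$, since $4\mu+\tau\ge 2\mu+2\tau/3$ always.

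Two small bookkeeping remarks, neither of which affects the conclusion. First, the inequality $e^\lambda-1-\lambda\le \lambda^2/(2(1-\lambda/3))$ holds on $\lambda\in(0,3)$, not just $(0,3/2)$; the Bernstein-optimizing choice $\lambda=\tau/(\mu+\tau/3)$ exceeds $3/2$ whenever $\tau>3\mu$, so you need the wider range (or, more cleanly, bypass the $\lambda$-optimization and pass from Bennett to Bernstein via the pointwise bound $h(u)\ge u^2/(2(1+u/3))$ for $u\ge 0$). Second, you should dispose of the degenerate case $\mu=0$, where necessarily all $u_i=0$, the sum is deterministically zero, and the bound holds vacuously.
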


	We also need a discrete version of the classical Chebyshev's Sum Inequality, which can be found, for example, in Chapter 9 of~\cite{ineqAnalysis}.
	
%	\begin{lemma}[Chebychev's Sum Inequality] \label{lemma:chebySum}
%			Let $f, g : [n] \rightarrow \R$ be non-increasing functions. Then for any random variable $I$ in $[n]$
%			%
%			\begin{align*}
%				\E[f(I)g(I)] \ge \E[f(I)]\,\E[g(I)].
%			\end{align*}
%	\end{lemma}
	
	\begin{lemma}[Chebyshev's Sum Inequality] \label{lemma:chebySum}
			Let $a_1,a_2,\ldots,a_n$ and $b_1,b_2,\ldots,b_n$ be non-increasing sequences. Then for any non-negative sequence $p_1,p_2,\ldots,p_n$
			\begin{align*}
				\sum_i a_i b_i p_i \ge \left( \sum_i a_i p_i \right) \left( \sum_i b_i p_i \right)\bigg/\left(\sum_i p_i \right).
			\end{align*}
	\end{lemma}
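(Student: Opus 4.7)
The plan is to prove this via the classical ``rearrangement identity'' for sums, which is the standard approach to Chebyshev's Sum Inequality. Specifically, I would establish the algebraic identity
\[
\left(\sum_i p_i\right)\left(\sum_i a_i b_i p_i\right) - \left(\sum_i a_i p_i\right)\left(\sum_i b_i p_i\right) = \frac{1}{2} \sum_{i}\sum_{j} p_i p_j (a_i - a_j)(b_i - b_j),
\]
and then argue that the right-hand side is non-negative under the given monotonicity hypotheses.

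The identity itself is obtained by expanding both sides and matching terms: the left-hand side contributes $\sum_{i,j} p_i p_j a_i b_i - \sum_{i,j} p_i p_j a_i b_j$, and the right-hand side, after expanding $(a_i - a_j)(b_i - b_j) = a_i b_i - a_i b_j - a_j b_i + a_j b_j$ and noting the $i \leftrightarrow j$ symmetry, yields the same quantity. This is a short bookkeeping step that I would carry out carefully but without surprises.

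For the non-negativity of the right-hand side: since $a_1, \ldots, a_n$ and $b_1, \ldots, b_n$ are both non-increasing, for every pair $(i, j)$ the factors $(a_i - a_j)$ and $(b_i - b_j)$ have the same sign (both $\ge 0$ when $i \le j$, both $\le 0$ when $i \ge j$), so their product is non-negative. Combined with $p_i p_j \ge 0$, every term in the double sum is non-negative. Dividing through by $\sum_i p_i$ (the case $\sum_i p_i = 0$ is trivial as all $p_i = 0$ and both sides of the desired inequality vanish) yields the claimed bound.

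I expect no substantial obstacle here: the only subtlety is verifying the expansion of the identity and handling the degenerate case $\sum_i p_i = 0$. Since the result is textbook and the proof is purely algebraic, the main effort is just confirming signs and the factor of $\tfrac{1}{2}$ in the identity.
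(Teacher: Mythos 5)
Your proof is correct: the rearrangement identity expands correctly (both sides reduce to $\sum_{i,j} p_i p_j (a_i b_i - a_i b_j)$ after exploiting $i\leftrightarrow j$ symmetry), and the sign argument and the degenerate case $\sum_i p_i = 0$ are handled properly. Note that the paper itself does not give a proof of this lemma — it simply cites it as a textbook fact — so there is no paper proof to compare against; yours is the standard argument one would find in the cited reference.
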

	
%#########################################################
%#########################################################
%#########################################################
%#########################################################
	
	\section{Proof of Lemma \ref{lemma:sat}} \label{app:sat}
	
		Consider an optimal solution $X^*$ to $LP_t$ and assume by contradiction that $X^*_t > 0$. Let $\interval_{\text{last}}$ denote the last interval in $\cInt_t$, namely the one containing $t$. If there is a time $t' \in \interval_{\text{last}}$ with $\bar{X}_{t'} > 0$ and $X^*_{t'} < 1$, then we can change the solution $X^*$ by reducing its $t$-th coordinate by $\frac{\e}{W_t}$ and increasing its $t'$-th coordinate  by $\frac{\e}{W_{t'}}$ to obtain a feasible solution (just need to check main budget and inner constraint for $\interval_{\text{last}}$) with better value, contradicting the optimality of $X^*$.
		
		So suppose that for all $t' \in \interval_{\text{last}}$ with $\bar{X}_{t'} > 0$ we have $X^*_{t'} = 1$; this implies that $X^*_{t'} \ge \bar{X}_{t'}$ for all $t' \in \interval_{\text{last}}$. Also, by assumption, we have the strict inequality $X^*_t > \bar{X}_t = 0$. Thus $X^*(\interval_{\text{last}}) > \bar{X}(\interval_{\text{last}})$ and hence $WX^*(\interval_{\text{last}}) > W\bar{X}(\interval_{\text{last}})$, where for any set of times $S \subseteq [n]$ we define $X^*(S) := \sum_{t \in S} X_i$ and $WX^*(S) := \sum_{t \in S} W_t X^*_t$, and similarly for $\bar{X}$. Also, under our running assumptions that the sum of all item sizes is at least $k$ and that there are no items of value 0, the optimal solution saturates the main budget: $WX^*([t]) = \scalingfactor_t \frac{tk}{n}$. Since by assumption the same holds for $\bar{X}$, we have $WX^*([t]) = W\bar{X}([t])$. Thus, as the intervals in $\cInt_t$ partition $[t]$, we have $\sum_{\interval \in \cInt \setminus \{ \interval_{\text{last}} \}} W\bar{X}(\interval) > \sum_{\interval \in \cInt \setminus \{ \interval_{\text{last}} \}} WX^*(\interval)$;
%		%
%		\begin{align*}
%			\sum_{\interval \in \cInt \setminus \interval_{\text{last}}} W\bar{X}(\interval) > 
%			\sum_{\interval \in \cInt \setminus \interval_{\text{last}}} WX^*(\interval);
%		\end{align*}
		so there is an interval $\interval \in \cInt_t$ with $W\bar{X}(\interval) > WX^*(\interval)$. One consequence of this is that there is $t' \in \interval$ with $1 \ge \bar{X}_{t'} > X^*_{t'} \ge 0$; so $t'$ is a strictly better item than $t$ and $X^*_{t'}$ is not at its upper bound. Another consequence is that, since $\bar{X}$ satisfies the inner constraints, $X^*$ is strictly feasible for the inner constraint relative to $\interval$. Thus, we can again increase $X^*_{t'}$ and decrease $X^*_t$ to obtain a feasible solution to $LP_t$ with higher and contradict the optimality of $X^*$. This concludes the proof.  

%############################################################
%############################################################
%############################################################
%############################################################
	
	\section{Lemmas for the proof of Lemma \ref{lemma:prodMain}} \label{app:prodMain}

	We start with the following comparison of integrating a non-negative function over a series sampled with and without replacement. 
	
	\begin{lemma} \label{lemma:samplingNonNeg}
		Consider any set $S$ of size $n$. Let $X_1,\ldots,X_m,X$ be sampled without replacement $S$, and let $X'_1,\ldots,X'_m$ be sampled with replacement from $S$. Then for any non-negative function $f : S^m \rightarrow \R_+$
		\begin{align*}
			\E[ f(X_1,X_2,\ldots,X_m) \mid X] \le \left(1 + \frac{m}{n-m} \right)^{m}\,  \E f(X'_1,X'_2,\ldots,X'_m).
		\end{align*}
		Moreover, if $m \ll n$ (having $\frac{m^2}{n-m} \le 1$ and $m \le n/2$ suffices), the multiplicative factor in the right-hand side is at most $(1 + \frac{4m^2}{n})$. 
	\end{lemma}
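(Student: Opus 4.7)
The plan is a direct pointwise comparison of probability mass functions. Fix the conditioning on $X$ and let me compare, for an arbitrary tuple $(x_1,\ldots,x_m) \in S^m$, the probabilities $\Pr(X_1 = x_1,\ldots,X_m = x_m \mid X)$ and $\Pr(X_1' = x_1,\ldots,X_m' = x_m)$. On the with-replacement side, this probability is always $1/n^m$. On the without-replacement side (conditioned on $X$), it is $0$ whenever the tuple has a repeated coordinate or has some coordinate equal to $X$, and otherwise it equals $1/[(n-1)(n-2)\cdots(n-m)]$.

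Taking the ratio of these two probabilities (on the support of the without-replacement distribution) gives
\[
\frac{n^m}{(n-1)(n-2)\cdots(n-m)} = \prod_{i=1}^{m} \frac{n}{n-i} \le \left(\frac{n}{n-m}\right)^m = \left(1+\frac{m}{n-m}\right)^m.
\]
So first I would write
\[
\E[f(X_1,\ldots,X_m) \mid X] = \sum_{(x_1,\ldots,x_m) \text{ admissible}} f(x_1,\ldots,x_m)\cdot \frac{1}{(n-1)\cdots(n-m)},
\]
where ``admissible'' means the coordinates are distinct and none equals $X$. Using the ratio bound and then enlarging the sum to all of $S^m$ (here is exactly where $f \ge 0$ is used to preserve the inequality), this is at most
\[
\left(1+\frac{m}{n-m}\right)^m \cdot \frac{1}{n^m}\sum_{(x_1,\ldots,x_m) \in S^m} f(x_1,\ldots,x_m) = \left(1+\frac{m}{n-m}\right)^m \E f(X_1',\ldots,X_m'),
\]
which is the first inequality.

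For the ``moreover'' statement, I would apply the elementary estimate $(1+x)^m \le e^{mx}$ with $x = m/(n-m)$ giving exponent $m^2/(n-m)$, and then use $e^y \le 1 + 2y$ for $y \in [0,1]$ (valid since $m^2/(n-m) \le 1$ by assumption). This yields factor at most $1 + 2m^2/(n-m)$, and finally $n - m \ge n/2$ (from $m \le n/2$) turns this into $1 + 4m^2/n$, as claimed.

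I expect no real obstacle here: the only subtlety is being careful that the ratio bound holds only on the support of the without-replacement distribution, and that non-negativity of $f$ is what allows the step of enlarging the sum to all of $S^m$ without flipping the inequality. The conditioning on $X$ plays no role in the argument beyond shifting indices (it effectively reduces $n$ to $n-1$ and $m$ to $m$, and the stated bound already absorbs this slack in the $n-m$ denominator).
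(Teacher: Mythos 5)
Your proof is correct and follows essentially the same route as the paper's: both expand the conditional expectation as a sum over distinct tuples avoiding $X$, compare the per-tuple probability $1/[(n-1)\cdots(n-m)]$ against the with-replacement probability $1/n^m$, use $f\ge 0$ to enlarge the sum to all of $S^m$, and bound the ratio by $(n/(n-m))^m$; the \emph{moreover} part also matches the paper's chain $(1+x)\le e^x$, $e^y\le 1+2y$, and $n-m\ge n/2$. No gaps.
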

	
	\begin{proof}
			For any $x \in S$ we expand the conditional expectation:			%
		\begin{align*}
		\E[ f(X_1,X_2,\ldots,X_m) \mid X = x] &= \frac{1}{(n-1) (n-2) \ldots (n-1-(m-1))} \sum_{i_1,i_2,\ldots,i_m \in S \setminus x, \textrm{ distinct}} f(i_1,\ldots,i_m)\\
		&\stackrel{f \ge 0}{\le} \frac{n^m}{(n-1) (n-2) \ldots (n-1-(m-1))} \cdot \frac{1}{n^m} \sum_{i_1,i_2,\ldots,i_m \in S} f(i_1,\ldots,i_m)\\
%		&\le  \frac{n^m}{n (n-1) \ldots (n-(m-1))} \cdot \frac{1}{n^m} \sum_{i_1,i_2,\ldots,i_m} f(i_1,\ldots,i_m)\\
		&= \frac{n^m}{(n-1) (n-2) \ldots (n-1-(m-1))}\, \E f(X'_1,\ldots,X'_m).
		\end{align*}
		The first factor in the right-hand side is at most $(\frac{n}{n-1-(m-1)})^m = (1 + \frac{m}{n-m})^m$. This gives the first part of the result. 
		
		For the second part, use $1+x \le e^x$ (which holds for all $x$) we obtain that this factor is at most $e^{\frac{m^2}{n-m}}$, and then using $e^x \le 1 + 2x$ (which holds for $x \in [0,1]$) and we assumption $\frac{m^2}{n-m} \le 1$ we further upper bound it by $1 + \frac{2m^2}{n-m}$; finally using the assumption $m \le n/2$ we reach the final upper bound of $1 + \frac{4m^2}{n}$, concluding the proof. 
	\end{proof}

	\begin{lemma} \label{lemma:integral}
		If $m \le \frac{\ln k}{4}$, there is a constant $\csta > 0$ such that 
		$$\int_{[0,n]^m} \psi(\max \{x_i : i \in [m]\} - \nicefrac{1}{k})\,\dd x \le 1 + \frac{\csta^m}{k}.$$
	\end{lemma}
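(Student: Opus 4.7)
The approach is to collapse the $m$-dimensional integral to a one-dimensional one via the distribution of the maximum, and then split according to the three regimes of $\psi$. For $x$ uniform on $[0,n]^m$, the maximum $M = \max_i x_i$ has density $m u^{m-1}/n^m$ on $[0,n]$, so
\[
\int_{[0,n]^m}\psi\bigl(\max_i x_i - 1/k\bigr)\,\dd x \;=\; \int_0^n \psi(u-1/k)\,m u^{m-1}\,\dd u.
\]
After the substitution $v = u - 1/k$, I would split this integral into three pieces corresponding to $v \in [-1/k,1)$, $v \in [1,50]$, and $v > 50$, matching the three cases in the definition of $\psi$.

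In the first regime $\psi \equiv 1$ and the integral telescopes to $(1+1/k)^m$, which is at most $1 + O(m/k)$ since $m \le (\ln k)/4 \ll k$. In the second regime $\psi \equiv 2/k$ and the antiderivative $(v+1/k)^m$ gives a contribution of at most $\tfrac{2}{k}(50 + 1/k)^m = O(50^m/k)$. Both of these are easily absorbed into $\csta^m/k$ for any $\csta \ge 50$.

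The only mildly delicate piece is the exponential tail $v > 50$, where $\psi(v) = 4k\,e^{-v\ln k/20}$. I would use the trivial bound $(v+1/k)^{m-1} \le 2^{m-1} v^{m-1}$ (valid since $v \ge 50 \gg 1/k$) and the substitution $u = v\ln k/20$ to rewrite this piece as
\[
O(k)\cdot 2^m\cdot(20/\ln k)^m\int_{(5/2)\ln k}^\infty u^{m-1}e^{-u}\,\dd u.
\]
Because the lower limit $(5/2)\ln k$ comfortably exceeds $m \le (\ln k)/4$, a standard estimate on the incomplete Gamma function (integration by parts, or just the observation that once $u \ge m$ the integrand decays geometrically) bounds the tail integral by $O\bigl(((5/2)\ln k)^{m-1}\bigr)\cdot k^{-5/2}$. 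Multiplying out all the factors yields a bound of $O(50^m/k^{3/2})$, strictly smaller than the contribution of the middle regime.

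Summing the three pieces gives $1 + O(m/k) + O(50^m/k)$, and choosing $\csta$ large enough (e.g.\ $\csta = 500$, as is used elsewhere in the paper) absorbs all hidden constants and yields the claimed inequality. The only real obstacle is packaging the incomplete-Gamma tail bound cleanly so that the exponential decay $k^{-5/2}$ beats the polynomial factor $(\ln k)^{m-1}$; the assumption $m \le (\ln k)/4$ is what makes this work.
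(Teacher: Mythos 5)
Your proposal is correct and follows essentially the same route as the paper's proof: reduce to a one-dimensional integral via the density $m u^{m-1}/n^m$ of the maximum, split into three pieces matching the three regimes of $\psi$, and show the first piece contributes $1+O(m/k)$ while the other two contribute $O(\csta^m/k)$. The only (cosmetic) difference is in the tail piece: you rewrite it as an incomplete Gamma integral and bound $\Gamma(m, (5/2)\ln k)$ directly, whereas the paper absorbs the polynomial factor $z^{m-1}$ into a weaker exponential $e^{z\ln k/40}$ (valid for $z\ge C$ since $m\le\frac{\ln k}{4}$) and then integrates the resulting pure exponential — both hinge on the same observation that the lower limit of integration greatly exceeds $m$, so the exponential dominates.
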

	
	\begin{proof}
		Let $U_1,\ldots,U_m$ be independent random variables uniformly distributed in $[0,n]$, and let $Z = \max_i U_i$. The integral we want to upper bound equals 
		\begin{align}
			n^m \cdot \E[\psi(Z-\nicefrac{1}{k})] = n^m \cdot \int_0^n \psi(z-\nicefrac{1}{k}) \, pdf_Z(z)\, \dd z. \label{eq:int1}
		\end{align}
		Moreover, we know precisely the distribution of $Z$ (obtained by differentiating $\Pr(Z \le z) = \prod_i \Pr(U_i \le z) = (\frac{z}{n})^m$):
		\begin{gather*}
			%\Pr(Z \le z) = \Pr\bigg(\bigwedge_i (U_i \le z)\bigg) = \left(\frac{z}{n}\right)^m\\
			pdf_Z(z) = \frac{m}{n} \left(\frac{z}{n} \right)^{m-1} = \frac{m\, z^{m-1}}{n^m}.
		\end{gather*}
		So breaking up into the different cases in the definition of $\psi$, we can upper bound the integral in \eqref{eq:int1} as %(for a sufficiently large constant $C \ge 50 + \frac{1}{k}$)
		\begin{align*}
			\int_0^n \psi(z-\nicefrac{1}{k}) \, pdf_Z(z)\, \dd z &\le \underbrace{\int_0^{1 + \nicefrac{1}{k}} pdf_Z(z)\, \dd z}_{int_1} + \underbrace{\int_{1 + \nicefrac{1}{k}}^{C} \frac{2}{k}\, pdf_Z(z)\, \dd z}_{int_2} +  \underbrace{\int_{C}^{\infty} 4k e^{-\frac{z}{20} \ln k} \, pdf_Z(z)\, \dd z}_{int_3},\notag%\\
%			&\le \frac{1}{n^m} \left[\left(1 + \frac{1}{k}\right)^m + \frac{2 \cdot C^m}{k}  + 4m\,k \int_{C}^{\infty}  e^{-\frac{z}{20} \ln k} \, z^{m-1}\, \dd z \right].
		\end{align*}
		\constants{where $C := 120$ is a sufficiently large constant.} The first integral on the right-hand side can be upper bounded $$int_1 \le \frac{1}{n^m} \left(1 + \frac{1}{k}\right)^m \le \frac{1}{n^m} e^{\frac{m}{k}} \le \frac{1}{n^m} \left(1 + \frac{2m}{k}\right),$$ where the first inequality uses $1 + x \le e^x$ (valid for all $x$), and the second uses $e^x \le 1 + 2x$ (valid for $x \in [0,1]$) and the fact $m \le \ln k$. For the second integral we have directly $int_2 \le \frac{2 \cdot C^m}{n^m k}$. 

		For the last integral, since $m \le \frac{\ln k}{4}$ and \constants{$C = 120$} is a sufficiently large constant, for all $z \ge C$ we have $z^{m-1} \le e^{\frac{z}{40} \ln k}$, and so
	\begin{align*}
		int_3 \le \frac{4k m}{n^m} \int_{C}^{\infty} e^{-\frac{z}{40} \ln k}\, \dd z \le \frac{k m \ln k}{10 n^m} e^{-\frac{C \ln k}{40}} \leq \frac{k \ln^2 k}{40 n^m} k^{- \frac{C}{40}} \le \frac{1}{n^m k},
	\end{align*}
	where the last inequality uses again that $C$ is a sufficiently large constant. 
	
	Putting these bounds together we obtain
	\begin{align*}
	\int_0^n \psi(z-\nicefrac{1}{k}) \, pdf_Z(z)\, \dd z \le \frac{1}{n^m} \left( 1 + \frac{2m}{k} + \frac{2 \cdot C^m}{k} + \frac{1}{k} \right).
	\end{align*}
	Plugging this bound on \eqref{eq:int1} concludes the proof. 
	\end{proof}
	
%#######################################################
%#######################################################
%#######################################################
%#######################################################

	\section{Proof of Lemma \ref{lemma:momentBound}} \label{app:momentBound}

		To prove the first inequality: Since the expression $\E \left(\sum_{i \in [n]} X_i\right)^m$ is a positive combination of the expectation of monomials of the form $\E \prod_{i \in A} X_i$ with $|A| \le m$ (and the same holds for the $X'_i$'s), it suffices to have $\E \prod_{i \in A} X_i \le \E \prod_{i \in A} X'_i $ for all $A$ of size at most $m$; this holds by our assumption on the $X_i$'s and the fact the right-hand side equals $\Pr(\bigwedge_{i \in A} (X'_i = 1)) = p^{|A|}$. 
		
		For the second inequality, we employ a sharp Rosenthal-type inequality; more precisely, we can use Theorem 1.5.2 and Lemma 1.5.8 (with $p = m$ and $c=\frac{1}{m}$) of~\cite{deLaPena} to obtain
		\begin{align*}
			\left(\E \left( \sum_i X'_i\right)^m\right)^{1/m} \le 2e \cdot \max\left\{e \cdot np, (1+m) \left(\frac{np}{m}\right)^{1/m}  \right\}.
		\end{align*}
		Now we claim that when $2 \le m \le np$, the maximum is achieved on the first term: In this range, the second term is at most $e\cdot m (np/m)^{1/m} \le e \cdot m (np/m) = e \cdot np$. This concludes the proof. 
				
%##########################################################
%##########################################################
%##########################################################
%##########################################################

	\section{Proof of Lemma \ref{lemma:LBtentSel}} \label{app:LBtentSel}

		Let $\omega$ denote the conditioning on an $I_t$ satisfying $R_t \le \scalingfactor_t - \e_t$. Let $\interval_{\text{last}}$ denote the last window in $\cInt_{t-1}$. Let $E$ be the event that there is total size at least $\scalingfactor_t \frac{tk}{n} - \frac{\aw \cstd \ell k}{n}$ of items better than $I_t$ in the free times $\free_{t-1}$, and let $F$ be the event that there is total size at least $\frac{\cstd \ell k}{n}$ of items better than $I_t$ in $\interval_{\text{last}}$. Since $X^t$ is an optimal solution for $LP_t$, from Lemma~\ref{lemma:LBpickDet} it suffices to show that $\Pr(E \textrm{ or } F \mid \omega) \le \frac{1}{k}$. 
		
		We start with event $F$. By definition, under $\omega$ there is total size $k R_t \le k (\scalingfactor_t - \e_t) \le k$ of random-order items better than $I_t$, so the expected (conditioned on $\omega$) mass of such items in the last window $\interval_{\text{last}}$ is $\frac{|\interval_{\text{last}}|}{RO_n - 1} k R_t \le \frac{3\ell k}{n}$ (see Observation \ref{obs:free}). Since again this quantity can be expressed as the random-order sum $\sum_{t' \in \interval_{\text{last}}} \ones{(I_{t'} < I_t)} \cdot W_{t'}$, we can apply the Bernstein's-type inequality of Lemma \ref{lemma:chernoff} (with $\tau = 598 \frac{\ell k}{n}$) to obtain the following upper bound on $\Pr(F \mid \omega)$:
		\begin{align*}
			\Pr\bigg(\textrm{mass of RO items in $\interval_{\text{last}}$ better than $I_t$} \ge \frac{3\ell k}{n} + \frac{598 \ell k}{n} ~\bigg|~ \omega \bigg) \le 2\,\exp\left(-\frac{ (4 \cdot \frac{3\ell k}{n})^2}{8 \cdot \frac{3\ell k}{n}} \right) \le \frac{1}{2k},
		\end{align*}
		where the first inequality uses $\tau \ge 4 \cdot \frac{3\ell k}{n}$. 
		
		Now for the event $E$. Again, the expected (conditioned on $\omega$) mass of items better that $I_t$ in $\free_{t-1}$ is at most $\frac{|\free_{t-1}|}{RO_n - 1} k R_t \le \frac{t}{RO_n} k R_t \le (\scalingfactor_t - \e_t) (1 + \frac{2 \aw \ell}{n}) \frac{tk}{n} =: \mu$ (see Observation \ref{obs:free}). By definition of $\e_t$, the difference between $\scalingfactor_t \frac{tk}{n} - \frac{\aw \cstd \ell k}{n}$ and this expected mass is at least $\tau := \sqrt{10 \ln k} \sqrt{\frac{2 t k}{n}}$. So applying the Bernstein's-type inequality of Lemma \ref{lemma:chernoff} we obtain
		\begin{align*}
			\Pr\bigg(\textrm{mass of RO items in $\free_{t-1}$ better than $I_t$} \ge \mu + \tau ~\bigg|~ I_t \bigg) &\le 2\exp\left(-\frac{\tau^2}{4 \mu + \tau} \right) \le 2\exp\left(-\frac{\tau^2}{5 (\frac{2 tk}{n})} \right) \\
			&\le \frac{1}{2k},
		\end{align*}		
		where in the second inequality we used that $t \ge 1,212 \aw\ell$ and $\aw \ge \sqrt{k}$. 
		
		Taking a union bound over $E$ and $F$ then concludes the proof.

\end{document}